\documentclass{article}
\usepackage[T1]{fontenc}
\usepackage[utf8]{inputenc}
\usepackage{microtype}
\usepackage{flushend}

\usepackage{subfig}

\usepackage{comment} 
\usepackage{bbm}
\usepackage{framed} 
\usepackage{url} 
\usepackage{complexity}
\usepackage{booktabs}
\usepackage{amsmath,amssymb}
\usepackage{breakcites}

\usepackage{amsthm}
\usepackage{thm-restate}
\usepackage{nicefrac}
\usepackage{calc}
\usepackage{enumerate}
\usepackage{enumitem}
\usepackage{graphicx}
\usepackage[font=footnotesize,labelfont=bf]{caption}
\usepackage[nobreak=true]{mdframed}
\usepackage{appendix}
\usepackage{xr}
\usepackage{array}
\usepackage{xspace}
\definecolor{ForestGreen}{rgb}{0.1333,0.5451,0.1333}
\definecolor{DarkRed}{rgb}{0.8,0,0}
\definecolor{Red}{rgb}{1,0,0}
\usepackage[linktocpage=true,
pagebackref=true,colorlinks,
linkcolor=DarkRed,citecolor=ForestGreen,
bookmarks,bookmarksopen,bookmarksnumbered]{hyperref}

\usepackage[capitalize,noabbrev]{cleveref}
\crefrangelabelformat{enumi}{#3#1#4--#5#2#6}
\usepackage{chngcntr}
\usepackage{mathtools,stackengine}
\usepackage{multirow}

\stackMath
\newcommand{\stackGeq}[1]{%
	\setbox0=\hbox{${}\mathrel{\stackon[-1pt]{\geq}{\scriptstyle\text{#1\strut}}}{}$}
	\xdef\tmpwd{\dimexpr\the\wd0\relax}
	\kern.5\tmpwd\mathclap{\box0}&\kern.5\tmpwd
}

\usepackage{nameref}

\usepackage{tikz}
\usetikzlibrary{patterns}

\allowdisplaybreaks

\let\poly\relax
\DeclareMathOperator*{\poly}{poly}

\renewcommand\R{\mathbb{R}}

\newcommand\eps{\epsilon}

\DeclarePairedDelimiterX{\expectarg}[1]{[}{]}{%
	\ifnum\currentgrouptype=16 \else\begingroup\fi
	\activatebar#1
	\ifnum\currentgrouptype=16 \else\endgroup\fi
}

\DeclarePairedDelimiterX{\nicesetarg}[1]{\{}{\}}{%
	\ifnum\currentgrouptype=16 \else\begingroup\fi
	\activatebar#1
	\ifnum\currentgrouptype=16 \else\endgroup\fi
}

\newcommand{\innermid}{\nonscript\;\delimsize\vert\nonscript\;}
\newcommand{\activatebar}{%
	\begingroup\lccode`\~=`\|
	\lowercase{\endgroup\let~}\innermid 
	\mathcode`|=\string"8000
}

\newcommand\opt{\textsc{Opt}\xspace}

\newcommand\optr{\textsc{Opt}_{\textsc{Rec}}\xspace}

\counterwithin{equation}{section}

\usepackage{eqparbox}

\theoremstyle{plain}

\newtheorem{theorem}{Theorem}[section]

\newtheorem{lemma}[theorem]{Lemma}

\newtheorem{remark}[theorem]{Remark}

\usepackage[accepted]{icml2025}

\icmltitlerunning{Competitively Consistent Clustering}

\begin{document}

\twocolumn[
\icmltitle{Competitively Consistent Clustering}

\begin{icmlauthorlist}
\icmlauthor{Niv Buchbinder\textsuperscript{*}}{yyy}
\icmlauthor{Roie Levin\textsuperscript{*}}{comp}
\icmlauthor{Yue Yang\textsuperscript{*}}{comp}
\end{icmlauthorlist}

\icmlaffiliation{yyy}{Department of Statistics and Operations Research, School of Mathematical Sciences, Tel Aviv University, Israel.}
\icmlaffiliation{comp}{Department of Computer Science, Rutgers University, Piscataway, NJ 08854}

\icmlcorrespondingauthor{Niv Buchbinder}{niv.buchbinder@gmail.com}
\icmlcorrespondingauthor{Roie Levin}{roie.levin@rutgers.edu}

\icmlkeywords{Online Algorithms, Recourse, Clustering, $k$-center, Facility Location, $k$-median}

\vskip 0.3in
]

\printAffiliationsAndNotice{\textsuperscript{*}Authors ordered alphabetically.} 

\begin{abstract}
In \emph{fully-dynamic consistent clustering}, we are given a finite metric space $(M,d)$, and a set $F\subseteq M$ of possible locations for opening centers. Data points arrive and depart, and the goal is to maintain an approximately optimal clustering solution at all times while minimizing the \emph{recourse}, the total number of additions/deletions of centers over time. Specifically, we study fully dynamic versions of the classical $k$-center, facility location, and $k$-median problems. We design algorithms that, given a parameter $\beta\geq 1$, maintain an $O(\beta)$-approximate solution at all times, and whose total recourse is bounded by $O(\log |F| \log \Delta) \cdot \optr^{\beta}$. Here $\optr^{\beta}$ is the minimal recourse of an offline algorithm that maintains a $\beta$-approximate solution at all times, and $\Delta$ is the metric aspect ratio. Finally, while we compare the performance of our algorithms to an optimal solution that maintains $k$ centers, our algorithms are allowed to use slightly more than $k$ centers. We obtain our results via a reduction to the recently proposed \emph{Positive Body Chasing} framework of [Bhattacharya, Buchbinder, Levin, Saranurak, FOCS 2023], which we show gives fractional solutions to our clustering problems online. Our contribution is to round these fractional solutions while preserving the approximation and recourse guarantees. We complement our positive results with logarithmic lower bounds which show that our bounds are nearly tight.

\end{abstract}

\section{Introduction}

Clustering is a fundamental optimization primitive and one of the most widely used tools in data analysis. Given a dataset in a metric space, the task is to output a set of cluster \emph{centers} that minimize an objective which is a function of the distances between data points and their nearest center. In this work we study three such clustering formulations: (i) \emph{$k$-center}, in which we can open at most $k$ centers and we seek to minimize the maximum distance of any point to its nearest open center, (ii) \emph{facility location}, in which there is cost to open each center, and we wish to minimize the sum of opening costs plus the sum of distances between each data points and its nearest open center, and (iii) \emph{$k$-median}, in which we can open at most $k$ centers, and we wish to minimize the sum of distances between each data point and its nearest open center. These classical objectives have been studied extensively for the last few decades, and though they are NP-hard, their approximability is almost completely understood \cite{DBLP:journals/talg/ByrkaPRST17,DBLP:journals/siamcomp/ByrkaA10,DBLP:conf/stoc/JainMS02,DBLP:journals/jacm/HochbaumS86,DBLP:journals/jal/GuhaK99}.

In practice however, we are often faced with situations where the data is not static but evolving over time. As the data change, it may be important to maintain a set of cluster centers that not only minimizes the objective at hand, but also does not change drastically between time steps. For example, suppose we are clustering as a preprocessing step to split data among servers; shuffling data between servers with every update could be prohibitively costly. This harder task, known as \emph{consistent clustering}, is less well understood than its offline counterpart and has drawn significant attention in recent years from both academic and industry researchers \cite{DBLP:conf/icml/LattanziV17,DBLP:conf/icml/Cohen-AddadLMP22,DBLP:conf/nips/Cohen-AddadHPSS19,DBLP:journals/algorithmica/ChanLSW24,DBLP:conf/nips/BhattacharyaLP22,DBLP:conf/soda/FichtenbergerLN21,lkacki2024fully,DBLP:conf/aistats/GuoKLX21,DBLP:conf/approx/GuoKLX20,DBLP:journals/corr/abs-2303-15379, DBLP:journals/corr/abs-2410-11470, DBLP:conf/soda/ForsterS25, DBLP:conf/focs/BhattacharyaCGL24}. 

To the best of our knowledge, all state-of-the-art fully-dynamic consistent clustering results aim for {\em absolute recourse} bounds. These are guarantees of the form ``After $T$ data point insertions or deletions, the algorithm incurs at most $c \cdot T$ recourse.'' Consider, for example, the recent work of {\L}{\k a}cki at el. \cite{lkacki2024fully} which gives an algorithm for fully-dynamic $k$-center that maintains a constant approximation with $O(T)$ recourse. This highly non-trivial bound is a significant improvement over previous results \cite{DBLP:conf/icml/LattanziV17,DBLP:conf/soda/FichtenbergerLN21}; it is also best possible in the sense that there exist update sequences for which $o(T)$ recourse does not suffice to maintain a constant approximation. 

However, these sequences in which every algorithm incurs $\Omega(T)$ recourse seem pathological and unrealistic: for example, if a set of $k+1$ points that are far from each other depart then return repeatedly in round-robin fashion, any good $k$-clustering approximation algorithm must open and close a center at every time step. In fact, low-recourse clustering makes the most sense as a goal when the data has consistent structure over time, i.e. precisely when there exists an offline solution that rarely change its centers over time.\footnote{This is reminiscent of \cite{DBLP:journals/ipl/AwasthiBS12} which gives polynomial time  algorithms for clustering instances whose solutions are stable under perturbations. They posit that these are inputs one would realistically want to cluster in the first place.} In recent work, \cite{BBLS23} initiated the study of algorithms with {\em competitive recourse}, which have refined guarantees of the form ``Over the course of the input sequence, the algorithm incurs recourse that is at most $c \cdot \optr$.'' Here $\optr$ which is the minimal recourse of \emph{any} offline algorithm with full foreknowledge of the input that maintains an optimal solution at all times. More generally, for any $\beta\geq 1$, they compare the online algorithm's recourse to $\optr^{\beta}$, which is the optimal recourse of an offline algorithm that is allowed to maintain a $\beta$-approximate solution at all times.\footnote{Note that the value $\optr^{\beta}$, by definition, decreases as $\beta$ increases. For the $k$-center objective, the proof of \cite{lkacki2024fully} shows that there is a setting of $\beta = O(1)$ such that $\optr^{\beta} = O(T)$.}
To rephrase in this language, we claim that for realistic instances of consistent clustering and reasonable constant values of $\beta$, we have $\optr^{\beta} \ll T$.\footnote{We note that \cite{DBLP:conf/soda/FichtenbergerLN21} can be interpreted as a competitive recourse guarantee, but this result is only for the incremental setting. They show an $\widetilde O(k)$ absolute recourse algorithm, which is also $\poly \log$ competitive because $k$ is a lower bound on the recourse of any algorithm opening $k$ centers.}

\subsection{Results}

In this work we initiate the study of consistent clustering algorithms in the \emph{fully dynamic} model with competitive recourse. We study fully dynamic versions of three classical clustering problems, the $k$-center problem, the facility location, and the $k$-median problem. Our main result is the following.  

\begin{theorem}[Upper Bound]\label{thm:main-upper}
For every $\beta \geq 1$, there exists a dynamic clustering algorithm
\begin{itemize}
    \item for \textbf{$k$-center} that given an  $\eps\in(0,1/2)$ maintains an $O(\beta)$-approximate solution, uses $(1+\eps)\cdot k$ centers and incurs recourse at most $O(\frac{1}{\eps^2}\log \frac{|F|}{\eps}\log \Delta)\cdot \optr^{\beta}$.
    \item for \textbf{facility location} that maintains an $O(\beta)$-approximate solution and incurs recourse at most $O(\log |F|\log \Delta)\cdot \optr^{\beta}$.
    \item for \textbf{$k$-median} that maintains an $O(\beta)$-approximate solution, uses $O(k)$ centers, and incurs recourse at most $O(\log |F|\log \Delta)\cdot \optr^{\beta}$.
\end{itemize}
Here $\Delta$ is the aspect ratio of the metric space, and $F$ is the set of locations where the algorithm can open a center.
\end{theorem}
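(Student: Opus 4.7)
The plan is to prove all three bounds uniformly via a reduction to the \emph{Positive Body Chasing} (PBC) framework of \cite{BBLS23}, in two stages: (a) encode each clustering problem as a sequence of online fractional covering instances, one per guessed cost scale, so that PBC produces a fractional solution which is $O(\beta)$-competitive in cost and $O(\log|F|\log \Delta)$-competitive in fractional recourse against $\optr^{\beta}$; (b) round this fractional solution online into an integral one while preserving the approximation and blowing up the recourse by at most an $O(\log |F|)$ factor.

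For stage (a), I would guess the optimal objective in powers of two across the $O(\log \Delta)$ relevant scales and run parallel copies. For $k$-center this corresponds to guessing the optimal radius $r$ and running a covering instance in which each demand $v$ must satisfy $\sum_{f \in B(v,r)} y_f \geq 1$, subject to the budget $\sum_f y_f \leq k$. For facility location the standard LP $\min \sum_f c_f y_f + \sum_{v,f} d(v,f)\, x_{v,f}$ with $\sum_f x_{v,f} \ge 1$ and $x_{v,f} \le y_f$ can be rewritten as a covering problem after guessing the optimal total cost, and the $k$-median LP is analogous with a hard budget of $k$. In every case the body $K_t$ evolves only by intersecting or relaxing a small number of half-spaces as demands arrive or depart, so PBC applies directly and yields the fractional bound.

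Stage (b), the online rounding, is where I expect the main difficulty. The natural approach is to fix, once and for all, an independent threshold $\tau_f$ drawn uniformly in $(0,1]$ for each location, and to declare $f$ open at time $t$ whenever $y_{f,t} \cdot \Theta(\log |F|) \geq \tau_f$. This scheme has two attractive properties: each unit of fractional movement in $y_f$ causes only $O(\log |F|)$ expected toggles of the integral variable, and by a Chernoff bound over the $O(\log|F|)$ boosting factor every demand sees at least one open center within its covering ball w.h.p. The delicate point is enforcing the hard cardinality constraints for $k$-center and $k$-median: after boosting, the expected number of sampled facilities is $\Theta(k \log |F|)$, so we cannot keep them all. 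For $k$-center I would compose the threshold sampling with a local correction that retains at most one open facility per ball of radius $O(r)$, paying only $\eps k$ slack via $O(1/\eps^2)$ parallel rounding attempts combined by a union bound. For $k$-median I would adapt an offline filtering-style rounding (in the spirit of Charikar--Li) so that the integral solution uses only $O(k)$ centers.

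The main obstacle is ensuring that the rounding is genuinely \emph{consistent}: as $y_t$ evolves smoothly, the integral solution $\widehat y_t$ must also evolve smoothly, so the thresholds $\tau_f$ must be frozen at the start, and any filtering or clustering performed on top of the sampled set must be monotone in $y_t$. A careful charging argument is needed to prevent a single movement in $y$ from cascading into many toggles inside the correction procedure; concretely, I would charge each integral toggle either to a crossing of a fixed threshold $\tau_f$ (whose total count is controlled by the fractional movement) or to a bounded-depth restructuring of the local correction (whose amortized cost is $O(1)$ per such crossing). Combining the $O(\beta)$ fractional approximation, the $O(\log|F|\log \Delta)$ fractional recourse from PBC, and the $O(\log|F|)$ rounding overhead then yields the claimed bounds for all three problems.
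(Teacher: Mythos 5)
Your stage (a) matches the paper: both formulate the fractional problems as positive body chasing and invoke the framework of \cite{BBLS23} (the paper does not even need parallel guesses of the cost scale, since $OPT^t$ is baked into the constraint sets; PBC contributes only the $O(\log(|F|/\eps)/\eps)$ factor, and the $\log\Delta$ comes entirely from the rounding). Stage (b) is where the proposal has genuine gaps. First, boosted threshold sampling (open $f$ whenever $y_f\cdot\Theta(\log|F|)\ge\tau_f$) inflates the \emph{expected opening cost} for facility location by $\Theta(\log|F|)$, destroying the $O(\beta)$ approximation; the paper instead maintains disjoint balls each containing $\Omega(1)$ fractional mass and opens only the cheapest facility in each, which is what keeps the opening cost within a constant of $\sum_i f_i x_i^t$. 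Second, the ``w.h.p.\ every demand is covered'' guarantee must hold at \emph{every} time step of an unbounded horizon against an adaptive adversary, so coverage failures force patching whose recourse you have not charged anywhere. Third, and most importantly, your claim that the local correction restructures with ``amortized cost $O(1)$ per threshold crossing'' is asserted exactly where the real difficulty lies: when $OPT^t$ changes scale, balls must be dropped and re-added even though the fractional solution has not moved at all and no threshold is crossed. The paper handles this with a deterministic potential function that charges each dropped ball either to an $\eps$ (or constant) decrease of the fractional mass inside it, or to a geometric shrinkage of the radius of the replacing ball --- which can happen only $O(\log\Delta)$ times per unit of mass. Without some analogue of this radius-based potential, your charging scheme does not close.

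A smaller but real accounting issue: you combine an $O(\log|F|\log\Delta)$ fractional recourse bound with an additional $O(\log|F|)$ rounding overhead, which yields $O(\log^2|F|\log\Delta)\cdot\optr^\beta$, not the claimed $O(\log|F|\log\Delta)\cdot\optr^\beta$. The paper's split is $O(\log|F|)$ from the fractional stage times $O(\log\Delta)$ from the (deterministic) rounding, with the extra $1/\eps^2$ appearing only for $k$-center.
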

We complement our upper bounds by showing that at least one logarithmic factor is unavoidable.

\begin{restatable}[Lower Bound]{theorem}{lbmainthm}

    Suppose there is a randomized algorithm that given a parameter $\beta\geq1$ as an input maintains an $(\alpha\cdot \beta)$-approximation for one of fully-dynamic facility location, $k$-center, or $k$-median. Then this algorithm has recourse at least $\Omega(\min\{\log |F|, \log_\alpha\Delta\} \cdot \optr^{\beta}$. For $k$-center and $k$-median, the lower bound holds even if the algorithm is allowed to open $O(k)$ facilities. 
\end{restatable}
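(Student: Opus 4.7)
The plan is to invoke Yao's minimax principle: it suffices to construct a probability distribution over input sequences on which every deterministic algorithm maintaining an $(\alpha\beta)$-approximation incurs expected recourse $\Omega(h \cdot R)$, where $h = \min\{\log|F|, \log_\alpha\Delta\}$ and $R$ is an almost-sure upper bound on $\optr^\beta$ under the distribution (which will be $O(1)$). The lower bound for randomized algorithms against an oblivious adversary then follows.

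The hard instance lives on a binary hierarchically well-separated tree with $|F|$ leaves playing the role of facility locations, depth $h$, and inter-leaf distances scaled geometrically by a factor of $\Theta(\alpha)$ per level, so that the aspect ratio is at most $\Delta$. (This is an abstract metric that can be realized as a subset of any sufficiently rich host metric, or by the usual Bartal-style embedding.) A target leaf $v^* \in F$ is drawn uniformly at random. The adversary then plays in $h$ phases; at phase $r$, it inserts (and later deletes) a short batch of points drawn from the depth-$r$ subtree $S_r$ containing $v^*$. Since $S_1 \supset S_2 \supset \cdots \supset S_h = \{v^*\}$, each phase reveals one more bit of information about $v^*$.

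Two claims close the argument. First, because all points inserted throughout the sequence lie inside the subtrees around $v^*$, an offline algorithm with full foreknowledge may open a constant number of facilities around $v^*$ at time zero and never change them; the $\beta$-slack exactly absorbs the geometric distance scaling, so this fixed configuration is $\beta$-approximate at every phase. Hence $\optr^\beta = O(1)$ almost surely. Second, the $(\alpha\beta)$-approximation constraint forces the online algorithm, at phase $r$, to have a facility inside $S_r$: the distance scaling is chosen so that any facility outside $S_r$ is too far from the inserted points to satisfy the approximation bound, and the cost cap imposed by the approximation (combined with the $O(k)$ facility cap in the $k$-center and $k$-median settings) limits the number of simultaneously open facilities so that the algorithm cannot pre-cover all subtrees. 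Because $v^*$'s depth-$r$ descendant is uniformly random and independent of the algorithm's information after phase $r-1$, with constant probability the algorithm's previous facility lies in the wrong subtree at depth $r$, forcing a fresh open/close. Summing over $h$ phases yields expected online recourse $\Omega(h) = \Omega(h) \cdot \optr^\beta$.

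\textbf{Main obstacle.} The principal difficulty is the joint calibration of the HST distance ratio, the insertion pattern, and the problem-specific cost (facility opening cost for facility location, sum-of-distances for $k$-median, max-radius for $k$-center) so that a gap of factor $\alpha$ between online's and offline's approximation regimes corresponds to exactly one level of the HST. Only under this alignment does offline retain a static $O(1)$-facility solution while online is pinned level by level; shifting the scaling either causes offline to need to relocate its facility per phase (killing the gap) or allows online to cover the whole metric with one facility (killing the $\Omega(h)$ bound). Handling the $O(k)$-facility slack for $k$-center/$k$-median similarly forces us to take the HST branching factor large enough that $O(k)$ open facilities still leave a constant fraction of depth-$r$ subtrees uncovered at every phase.
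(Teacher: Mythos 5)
Your construction is essentially the paper's hard instance -- a geometrically scaled binary HST whose request sequence descends a root-to-leaf path, with the offline benchmark parking a single static facility at the final leaf -- but you handle randomization by a genuinely different mechanism. The paper does not invoke Yao's principle at all: it observes that the marginal probabilities of any randomized algorithm form a \emph{fractional} solution of no larger cost, and then runs an \emph{adaptive} adversary that, at each level, descends into whichever child subtree currently holds less fractional mass. This forces an $\Omega(1)$ mass transfer at every single level deterministically, yields the lower bound even against fractional algorithms (a strictly stronger statement), and works directly on a binary tree. Your oblivious distribution (uniformly random target leaf) plus Yao's principle is also workable, but it buys the bound only in expectation and requires you to rule out \emph{hedging}: on a binary tree a deterministic algorithm can keep a facility in both children of the current subtree and is then never ``surprised'' by the next request, so the claim that ``with constant probability the previous facility lies in the wrong subtree'' is not immediate. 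You can repair this either by charging the hedger for the facilities it opens and must later close (the cost/cardinality cap means it cannot accumulate them), or, as you propose, by inflating the branching factor beyond the facility budget -- but the latter shrinks the depth to $\log|F|/\log B$ and so loses a $\log(\alpha\beta)$ factor that the adaptive adversary avoids. One further calibration point: your offline solution should be (near-)optimal at every step rather than relying on the ``$\beta$-slack exactly absorbing the scaling,'' since the theorem must hold for all $\beta\ge 1$ including $\beta=1$; in the paper the static solution at the final leaf is exactly optimal for $k$-center and facility location and $2$-approximate for $k$-median, so $\optr^\beta=O(1)$ per phase uniformly in $\beta$.
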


Finally, we evaluate our three clustering algorithms experimentally on UCI Machine Learning repository datasets. Our experiments demonstrate not only that our algorithms are simple to implement in practice, but that they also significantly outperforms the worst-case bound predicted by the theorem. They also seem to show that $\optr^{\beta} \ll T$ as we predict.

\subsection{Techniques and overview}

The starting point of our work is the positive body chasing framework of \cite{BBLS23}. Loosely speaking, they show that given a sequence of convex bodies $K_1, K_2, \ldots, K_T$ revealed online, when these bodies are defined by packing and covering constraints, there is an algorithm that maintains a point $x^t \in K_t$ online such that the total $\ell_1$ movement $\sum_t \|x^t - x^{t-1}\|_1$ (or ``fractional recourse''), is within a logarithmic factor of $\optr$, the minimum fractional recourse of an offline algorithm maintaining a point in the same bodies. In \cref{app:formulations} we show how to cast a suitable fractional version of our three dynamic clustering problems as positive body chasing.

The main contribution of our work is to show how to \emph{round} the fractional solution output by \cite{BBLS23} while preserving both recourse and approximation guarantees. Offline, rounding is already a non-trivial task tailored for each problem separately. Online, rounding imposes additional challenges: not only should the integral solution preserve feasibility and the objective function value, but it should also maintain a stable solution over time. The latter property does not hold for most textbook rounding techniques, for which a small change in the fractional solution may result in large changes to the rounded integral solution. To this end, various online rounding ideas were proposed in the context of other problems \cite{AAABN03,DBLP:journals/jacm/BansalBN12,ACER12,BBMN11,KLS23}. 

In \cref{sec:kcenter} we design a rounding algorithm for the $k$-center problem. It is well known that the greedy algorithm that repeatedly adds to the solution any point farther than $\alpha \cdot \opt$ (for $\alpha \geq 2$) to an open center opens at most $k$ centers, and hence is a $\alpha$-approximation \cite{DBLP:journals/tcs/Gonzalez85}. Can we emulate this strategy in the dynamic setting using the fractional solution as our guide?  When the maximum intracluster distance $\opt$ does not change, the LP guarantees that every new center $\opt$ far from an open center has an LP mass of $1$ lying in the ball of radius $\opt$ around it. If we run the greedy algorithm and, as the fractional solution moves, we also \emph{drop} centers whose ball's mass dips significantly below $1$, then we can charge the recourse of the center leaving the solution to the $\ell_1$ movement of the fractional solution that caused the dip. 

The story is more involved in the general case when the value of $\opt$ changes over time. When the value of $\opt$ changes, the strategy above may drop balls despite the fractional solution not moving at all, and hence a more careful argument is required. In particular, we would like our algorithm to be ``lazy" and only move centers when it has no other choice. Our algorithm actively maintains a set of ``well separated" balls of different radii around its chosen cluster centers, and uses a potential function argument (and a set of invariants) to bound the number of balls we drop before the solution reaches feasibility. Intuitively, whenever we drop a ball and the fractional solution does not change enough, we show that a new ball that is added has a much smaller radii compared with the ball we dropped, and this can only happen $O(\log \Delta)$ many times.

 Our rounding scheme for facility location in \cref{sec:main-facloc} can be seen as a dynamic analog of the classic result of \cite{DBLP:conf/stoc/ShmoysTA97}. Roughly, let $R_j$ be the fractional service cost of a point/client $j$ (we will define this carefully later). By Markov's inequality, for a constant $\gamma\geq 1$, every client $j$ must have a fractional mass of $1- 1/\gamma$ within the ball of distance $\gamma \cdot R_j$ from $j$. In \cite{DBLP:conf/stoc/ShmoysTA97} they show that greedily picking disjoint balls of this kind in order from smallest to largest radius and opening the cheapest facility within each is a constant approximate solution. 
 
 Once again, as the underlying fractional solution changes over time, these service costs change, and we need to be careful with our online choices. As in the $k$-center case, our algorithm maintains a set of ``well separated" balls of different radii around certain points/clients and it also chooses a cheap facility inside each such ball. If the total fraction inside a ball drops enough, we can charge our integral recourse to this change. Otherwise, we show via a 
 potential function argument (and a set of invariants) how to bound the number of balls we drop. Our $k$-median rounding algorithm, which we discuss in \cref{subsec:median}, turns out to be a direct consequence of the result in \cref{sec:main-facloc}. In \cref{sec:exp}, we evaluate implementations of all our algorithms experimentally. 

In \cref{sec:lb} we show our logarithmic lower bounds that draws ideas from the lower bound in \cite{Fotakis08}. The lower bound holds even against fractional algorithms (and hence also randomized algorithms).
For all three problems, we use variants of a uniform binary HST metric, and the request sequence inserts clients along a root to leaf path. The adversary adaptively picks this path such that the next request falls in a subtree with very few open centers, fractionally. The nature of the metric ensures that the algorithm must move fractional mass into this subtree if it wants to maintain a good  approximation, thus incurring high recourse.

\subsection{Additional related work}

Clustering under various objectives is by now a classic problem with many known approximation algorithms. We mentioned a few besides those included in the introduction \cite{DBLP:journals/siamcomp/AryaGKMMP04,DBLP:journals/comgeo/KanungoMNPSW04,DBLP:journals/jacm/JainMMSV03,DBLP:journals/siamcomp/CharikarG05}, but for a more complete treatment, see the relevant sections of \cite{DBLP:books/daglib/0030297}.

Besides the stable clustering algorithms mentioned above, there is a well-established line of work on low-recourse dynamic algorithms for a host of combinatorial problems, e.g. Steiner tree \cite{DBLP:journals/siamdm/ImaseW91,DBLP:journals/siamcomp/GuG016,DBLP:conf/soda/GuptaK14,lkacki2015power,DBLP:conf/focs/GuptaL20}, load balancing  \cite{DBLP:journals/jcss/AwerbuchAPW01,GKS14,KLS23}, set cover \cite{GKKP17,DBLP:conf/stoc/AbboudA0PS19,DBLP:conf/focs/BhattacharyaHN19,DBLP:conf/focs/GuptaL20,DBLP:conf/soda/BhattacharyaHNW21,DBLP:conf/esa/AssadiS21}, edge orientation \cite{brodal1999dynamic,sawlani2020near,bera2022new}, graph coloring \cite{solomon2020improved}, maximal independent sets \cite{assadi2018fully}, and spanners \cite{baswana2012fully,bhattacharya2022simple}.

\section{Preliminaries} \label{sec:prelim}

In the classical clustering problems we study we are given a metric space $(M,d)$ on $n$ points. We assume without loss of generality that $\min_{i,j\in M}d_{ij}=1$ such that $\Delta= \max_{i,j\in M}d_{ij}$ is the aspect ratio of the metric space.
In addition, there is a set of points $C\subseteq M$ we refer to as clients, and a set of candidate center locations $F \subseteq M$.\footnote{In $k$-center and $k$-median, we assume $F = M$.} The algorithm outputs a solution which is a subset of the candidate center locations, $S \subseteq F$. The constraints and objectives are  different for each of the problems we study:
\begin{itemize}
    \item In $k$-center, the algorithm is allowed to open at most $k$ centers and the goal is to minimize the maximal distance of a client in $C$ to its closest open center.
    \item In facility location, the algorithm is allowed to open any number of centers (which we alternatively refer to as a facilities), but opening a facility in position $i \in F$ incurs an opening cost of $f_i$. The goal is to minimize the total opening cost plus the total service cost, defined as the sum of distances from the clients in $C$ to their nearest open facility.  
    \item In $k$-median, the algorithm is allowed to open at most $k$ centers and the goal is to minimize the sum of distances between clients and their closest center. 
\end{itemize}

In the fully dynamic version of each of these problems, we are given at every time $t=1, \ldots, T$ a different set of clients $C^t\subseteq M$. The algorithm must maintain a solution $S^t$ at each time step that is approximately optimal with respect to the clustering objective at that time step, while also minimizing the total recourse, defined as $\sum_{t =1}^T|S^t \oplus S^{t-1}|$. More precisely, our goal in this work is to maintain an $(\alpha\cdot \beta)$-approximate solution to the clustering objective while paying recourse at most $c\cdot \optr^{\beta}$, where $\optr^{\beta}$ denotes the optimal recourse of an offline solution that maintains a $\beta$-approximate solution at all times. For each problem, we use $\opt^t$ to denote the cost of the optimal \emph{fractional} solution at time $t$.

\section{Algorithms}
 \subsection{A rounding algorithm for fully-dynamic $k$-center}\label{sec:kcenter}

In this section we design a rounding algorithm for the fully dynamic $k$-center problem. Following \cref{pre:kcenter}, for every $\eps\in(0,1]$ there is an algorithm  that maintains a fractional solution $x\geq 0$ to the fully dynamic $k$-center problem satisfying the guarantees:
\begin{align*}
	\begin{array}{ll}
		\displaystyle \sum_{t=1}^{T}\|x^t- x^{t-1}\|_1  \leq O\left(\frac{\log (\frac{n}{\eps})}{\eps}\right) \cdot \optr^{\beta}, \\
		\displaystyle \sum_{i \in B^t_j}x_i^t  \geq 1   &\hspace{-0.5in} 
 \forall  j\in C^t, t\in[T], \\
		\displaystyle \sum_{i=1}^{n}x_i^t  \leq (1+\eps) k  & \hspace{-0.05in} \forall t \in [T].
	\end{array}
\end{align*}
For an active client $j\in C^t$ and time $t$, $B^t_j = \{i\in M \ | \ d_{ij}\leq \min\{\beta\cdot OPT^t,\Delta\}\}$ is the set of centers
at distance at most $\beta\cdot OPT^t$, and therefore feasibly serve client $j$ at time $t$, and let $D^t = \min\{\beta \cdot OPT^t, \Delta\}$. The bound in Theorem \ref{thm:main-upper} for the fully dynamic $k$-center problem is obtained by combining the guarantee on the fractional solution along with the following theorem that we prove.

\begin{theorem}[Dynamic $k$-center rounding]\label{thm:mainkcenter}
Let $x^t$ be a fractional solution to the dynamic $k$-center problem that maintains a solution that is a $\beta$-approximation at any time, uses at most $k'$ centers, and its total fractional recourse is $R$. Then, for any $\eps\leq \nicefrac{1}{2}$, there is an algorithm that maintains an {\bf integral} solution with at most $(1+2\eps)k'$ centers that is $(\alpha \cdot \beta)$-approximate at any time, and its total recourse is $O\left(\log (\Delta)/\eps\right)\cdot R$, where $\alpha= 3 + 2\sqrt{2} \approx 5.28$.
\end{theorem}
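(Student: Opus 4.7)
The plan is to design a lazy online analog of the Gonzalez greedy $k$-center algorithm, guided by the fractional solution $x^t$. At time $t$, the algorithm maintains a set of anchor centers $S^t\subseteq F$, each equipped with an associated radius $r_s$ set at its opening time $t_s$ to the then-current value $D^{t_s} = \min\{\beta\cdot\opt^{t_s},\Delta\}$. I would enforce three invariants: (i) every active client $j\in C^t$ lies within distance $(\alpha\beta)\cdot\opt^t$ of some anchor, so the solution is $\alpha\beta$-approximate; (ii) the balls $\{B(s,\rho\cdot r_s)\}_{s\in S^t}$ are pairwise disjoint for a well-chosen separation parameter $\rho>1$, making their LP masses additive; and (iii) each ball $B(s,r_s)$ carries fractional mass at least $1-\eps$. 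The approximation constant $\alpha=3+2\sqrt{2}=(1+\sqrt{2})^2$ emerges by setting $\rho=1+\sqrt{2}$ and chaining triangle inequalities from any uncovered client through an LP-mass neighborhood of its nearest anchor.

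Updates proceed in two phases. Upon receiving $x^t$ and $C^t$, first scan existing anchors and delete any whose ball mass has dropped below $1-\eps$ (charging each such \emph{paid} deletion to the $\Omega(\eps)$ of fractional $\ell_1$-movement that caused the drop), or whose radius is inconsistent with the new $D^t$ in the sense that invariants (i) or (ii) would otherwise be violated (these are \emph{free} deletions). Second, walk through every client $j\in C^t$ uncovered by the surviving $S^t$ and open a new anchor at $j$ with radius $D^t$; the LP constraint $\sum_{i\in B_j^t} x_i^t\geq 1$ immediately restores invariant (iii) at this anchor.

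For the cardinality bound, combining invariants (ii) and (iii) gives $|S^t|(1-\eps)\leq\sum_{s\in S^t}\sum_{i\in B(s,r_s)}x_i^t\leq(1+\eps)k'$, so $|S^t|\leq(1+2\eps)k'$ after a routine rescaling of $\eps$. Invariant (i) directly yields the approximation, and each paid deletion requires $\Omega(\eps)$ of fractional movement, so there are at most $O(R/\eps)$ such events.

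The main obstacle is bounding the free deletions, since a change in $\opt^t$ can trigger cascades of them without any accompanying $\ell_1$ movement. The plan is to define a potential $\Phi^t = \sum_{s\in S^t}\log(\Delta/r_s)\in[0,\,|S^t|\log\Delta]$ and argue that: (a) each free deletion strictly decreases $\Phi^t$, because the only way an anchor becomes inconsistent with the current $D^t$ without fractional movement is if $D^t$ has shrunk, in which case any replacement anchor has strictly smaller radius than the one it replaced; and (b) each paid event (deletion plus any additions it triggers) raises $\Phi^t$ by at most $O(\log\Delta)$. Amortizing yields a total of $O(\log\Delta/\eps)\cdot R$ deletions, with additions exceeding deletions by at most the final size $(1+2\eps)k' = O(k')$, giving the recourse bound. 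The technical heart of the proof lies in establishing (a) rigorously: an explicit case analysis showing that every free deletion can be matched to a strict decrease in $\Phi^t$, even when $\opt^t$ oscillates and multiple free deletions cascade within a single update.
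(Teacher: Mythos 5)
Your overall architecture matches the paper's: radii frozen at opening time, a separation invariant making the balls disjoint, a mass invariant of $1-\eps$ per ball, paid deletions charged to $\Omega(\eps)$ of $\ell_1$-movement, and a log-radius potential to control cascading deletions. However, the step you defer as ``the technical heart'' is exactly where the proof lives, and as you have set things up it would not go through. For the amortization to bound the free deletions, each replacement radius must be smaller than the deleted radius by a \emph{constant factor bounded away from $1$}, not merely ``strictly smaller'': with real-valued radii, a strict decrease per free deletion gives no bound at all on their number. Worse, your deletion rule makes this quantitative gap genuinely unavailable: you propose to proactively delete anchors ``whose radius is inconsistent with the new $D^t$,'' but if $D^t$ shrinks by a factor of, say, $1.01$, the replacement radius shrinks by only that factor and the potential moves by $O(\log 1.01)$, far less than the unit of recourse it must pay for. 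The paper's algorithm is lazier---it never deletes a ball because $D^t$ changed, only because a newly added ball, forced by a client at distance greater than $\alpha D^t$ from every center, violates separation with it---and it is precisely the combination of ``uncovered at distance $>\alpha D^t$'' with ``separation violated at distance $\le r_i+r_j+\delta\min\{r_i,r_j\}$'' that forces the deleted ball to have radius greater than $(\alpha-\delta-1)D^t=(2+\sqrt{2})\,D^t$, a fixed constant factor above the new radius $D^t$. You need this lemma (or an analog for your $\rho$-separation) and you have not proved it.

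Two further issues. First, you also need that \emph{at most one} existing anchor conflicts with each newly added one; if two large balls could be dropped against a single small addition, the potential released by the two drops (up to $2\log\Delta$) is not covered by the single $\log(\Delta/r_j)$ credit of the addition, and the amortization fails. The paper establishes this uniqueness via the triangle inequality and the separation invariant (its Lemma on single drops). Second, your potential $\Phi^t=\sum_{s}\log(\Delta/r_s)$ has the wrong sign for your claim (a): replacing an anchor by one of strictly smaller radius \emph{increases} $\sum_{s}\log(\Delta/r_s)$. The paper uses the negated term $-\eta\sum_{s}\log_2(\Delta/r_s)$ together with a mass-deficit term scaled by $(1+\eta\log\Delta)/\eps$, so that a paid deletion simultaneously pays its unit of recourse and refunds the $\eta\log_2(\Delta/r_i)$ of potential released by removing the radius term; this coupling is where the $\log(\Delta)/\eps$ factor actually comes from, and your claim (b) does not reproduce it.
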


\paragraph{The Algorithm:} Our algorithm maintains at time $t$ a set of open centers $S^t$. For each $i\in S^t$, let $t_i\leq t$ be the time in which the center is added to the solution and let $r_i=D^{t_i}= \min\{\beta \cdot OPT^{t_i},\Delta\}$ be the radius of the balls that can serve the clients when $i$ was added to $S^t$ (note that $r_i\leq \Delta$). 
With foresight, we set parameters $\alpha = 3 + 2\sqrt{2}$ and $\delta = \sqrt{2}$ and define 
$B_i := \{j \ | \ d_{ij}\leq r_i\}$ and  $\widehat{B_i^t} :=\{j \ | \ d_{ij}\leq \alpha \cdot D^t\}.$
We identify each facility with the ball around it, and count in our analysis the number of balls that we add or drop.
Our algorithm is simple to describe and works as follows. At time $t$ we update the fractional solution and the value of $OPT^t$ (the radius of the $k$-center optimal fractional solution at time $t$). Then, 
\begin{itemize}
\item $S^t= S^{t-1}$.
\item Drop from $S^{t}$ any $i$ such that $\sum_{j\in B_i}x_j^t < 1-\eps$.
\item {\bf Iteratively:} as long as there exists a client $j\in C^t$ such that $j \not\in \bigcup_{i\in S^t}\widehat{B_i^t}$ (uncovered by our current solution).
\begin{itemize}
    \item Add $j$ to $S^t$ (and $B_j$ is, as defined, of radius $r_j=D^t$). 
    \item Drop from $S^t$ any center/ball $i\in S^t, i\neq j$ such that
\begin{equation}\label{inv-kcenter}
    d_{ij}\leq  r_{i}+r_{j}+ \delta \cdot\min\{r_{i},r_{j}\}.
\end{equation}
\end{itemize}
\end{itemize}
\paragraph{Analysis:} We will prove by induction on the steps of the algorithm that the algorithm maintains the following invariants.

\begin{enumerate}[label={(\roman*)}]
	\item \label{itm:ii} For each $j\in C^t$, we have $j \in \bigcup_{i\in S^t}\widehat{B_i^t}$. 
	\item \label{itm:iii} For all $i_1, i_2\in S^t$: $d_{i_1,i_2}\geq r_{i_1}+r_{i_2}+\delta\cdot\min\{r_{i_1},r_{i_2}\}$. In particular the balls around the centers are disjoint. 
	\item \label{itm:iv} For all $i\in S^t$, we have $\sum_{j\in B_i}x_j^t \geq 1-\eps$.
	\item \label{itm:v} For all $i\in S^t$, we have $\sum_{j\in B_i}x_j^{t_i} \geq 1$.
\end{enumerate}

In addition, we use the following potential function at time $t\in[T]$, with parameter $\eta = 1/\log_2(\alpha - \delta - 1)= 1/\log_2(2+\sqrt{2})$:
\begin{align}\label{potential-kcenter}
	&\Phi^{t} = \frac{1+\eta\log \Delta}{\eps}\cdot \sum_{i\in S^t}\max\{0, 1 - \sum_{j\in B_i}x_j^t\} \nonumber \\
    & \quad \qquad - \eta \cdot \sum_{i\in S^t} \log_2 \left(\frac{\Delta}{r_i}\right).
\end{align}
Let $N^t_1$ be the total number of balls dropped due to the first step of the algorithm until time $t$. Let $N^t_2$ be the number of balls dropped until time $t$ due to an addition of a new ball in the second (iterative) step of the algorithm.
We prove inductively that,
\begin{equation}\label{main:ineq-kcenter}
	\Delta N_1^{t}+ \Delta N_2^{t} + \Delta \Phi^{t} \leq O\left(\frac{\log \Delta}{\eps}\right)\cdot \|x^t-x^{t-1}\|_1,
\end{equation}
where $\Delta N_1^{t}, \Delta N_2^{t}$ and $\Delta \Phi^{t}$ is the change in the values of $N_1^t, N_2^t$ and $\Phi^t$ at time step $t$.
In particular, we show that the RHS of the above equation as well as  $\phi$ are finite. Therefore, the number of balls that are dropped due to an addition of a new ball at time step $t$ in the second iterative step, $\Delta N^t_2$, is finite and the algorithm terminates. 

Before showing that the algorithm maintains these invariants, we show why these imply Theorem \ref{thm:mainkcenter}.

\begin{proof}[Proof of Theorem \ref{thm:mainkcenter}]
	By Invariant \ref{itm:ii} and the definition of $\widehat{B_i^t}$ (recall, the radius of $\widehat{B_i^t}$ is at most $\alpha\cdot D^t \leq \alpha\cdot \beta\cdot OPT^t$) the algorithm maintains at all times an $(\alpha\cdot\beta)$-approximate solution.
	By Invariant \ref{itm:iii} the balls are disjoint, and by Invariant \ref{itm:iv} we have for all $i\in S^t$ that $\sum_{j\in B_i}x_j^t \geq 1-\eps$. Since the fractional solution satisfies $\sum_{i=1}^{n}x_i^t  \leq k'$ and $\eps\leq 1/2$, the number of centers chosen by the algorithm is at most $\frac{k'}{1-\eps} \leq \left(1+2\eps\right)\cdot k'$. 
Finally, by Inequality \eqref{main:ineq-kcenter}, $N_1^{t}+ N_2^{t} + \Phi^t - \Phi^0  \leq O\left(\nicefrac{\log \Delta}{\eps}\right)\cdot R$.
	It is easy to verify that $\Phi^0=0$ and $\Phi^t \geq -\eta \cdot \sum_{i\in S^t} \log_2 \left(\nicefrac{\Delta}{r_i}\right) \geq -2 k\log \Delta$. As the number of additions of centers is at most the total number of drops of centers plus the final number of centers (which is bounded by $k$), the total recourse is bounded as claimed. \end{proof}

 \begin{remark}
     Our analysis has an additional (constant) additive term that depends on the final number of centers. However, a more careful inspection of our proof shows that the number drops is bounded by $O\left(\nicefrac{\log \Delta}{\eps}\right)$ times the total decrease in the variables $x_i^t$. This fact along with the fact that the final number of centers plus $\eta \cdot \sum_{i\in S^t} \log_2 \left(\nicefrac{\Delta}{r_i}\right)$ is bounded by $O\left(\nicefrac{\log \Delta}{\eps}\right)$ times the total increase in the variables can be used to avoid this additive term in the analysis. 
 \end{remark}

Next, we consider different events that can happen at time $t$ one-by-one and prove that all the invariants are maintained. The invariants are clearly satisfied initially when there are no clients, the fractional solution is $0$, and no centers were added. The following events may happen at time step $t$.
\begin{itemize}
	\item The fractional solution changes from $x^{t-1}$ to $x^t$.
	\item Facility $i$ such that $\sum_{j\in B_i}x_j^t < (1-\eps)$ is dropped in the first step.
	\item A new center/ball around $j$ is added, and then (possibly) other centers/balls are dropped in the second iterative step.
\end{itemize}

\paragraph{The fractional solution changes:}
By our induction hypothesis and Invariant \ref{itm:iii} we have that for all $i_1, i_2\in S^t$: $d_{i_1,i_2}\geq r_{i_1}+r_{i_2}+\delta\cdot\min\{r_{i_1},r_{i_2}\}$, and in particular the balls are disjoint. The only term that depends on the fractional solution is $\nicefrac{(1+\eta \log \Delta)}{\eps}\cdot \sum_{i\in S^t}\max\{0, 1 - \sum_{j\in B_i}x_j^t\}$ in the potential function. Hence, Inequality \eqref{main:ineq-kcenter} is maintained.

\paragraph{Facility $i$ such that $\sum_{j\in B_i}x_j^t < 1-\eps$ is dropped:}
If this event happens, then $\Delta N_1^{t}=1, \Delta N_2^{t}=0$, and $\max\{0, 1 - \sum_{j\in B_i}x_j^t\}\geq \eps$. Thus, we have,
$\Delta N_1^{t}+ \Delta N_2^{t} + \Delta \Phi^{t} \leq 1 -(1+ \eta\log_2 \Delta) + \eta \log_2 \left(\frac{\Delta}{r_i}\right) \leq   0.$

\paragraph{A new ball around $j$ is added, and then (possibly) other balls are dropped:}

In order to analyze this event we need the following lemma, which we prove shortly.
\begin{lemma}\label{lem:onedrop1}
	If $j\not\in \bigcup_{i\in S^t}\widehat{B_i^t}$, and let $r_j=D^t$ be the radius of the new added ball $B_j$. Then, there exists at most one $i\in S^t$  such that $d_{ij} \leq r_i + r_j + \delta\cdot\min\{r_i, r_j\}$.
	This ball (if exists) has radius $r_i>(\alpha - \delta - 1)\cdot D^t$. After adding this new ball and possibly dropping 
	at most one ball all invariants \ref{itm:iii}, \ref{itm:iv}, \ref{itm:v} are maintained.
\end{lemma}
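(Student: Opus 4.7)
The plan is to analyze how any ball $i\in S^t$ that threatens to violate the separation invariant with the new ball $B_j$ must compare to $r_j=D^t$, then use the triangle inequality together with the inductively assumed separation invariant \ref{itm:iii} on $S^t$ (before this addition) to show at most one such ball can exist. The parameters $\alpha=3+2\sqrt{2}$ and $\delta=\sqrt{2}$ are precisely tuned so that the uniqueness inequality becomes tight.

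First, I would split into cases by comparing $r_i$ and $r_j$. If $r_i\leq r_j$, the closeness condition reads $d_{ij}\leq (1+\delta)r_i+r_j \leq (2+\delta)r_j = (2+\delta)D^t$. But the hypothesis $j\notin \widehat{B_i^t}$ gives $d_{ij}>\alpha D^t$, and since $\alpha = 3+2\sqrt{2} > 2+\sqrt{2}=2+\delta$, this is a contradiction. Hence any threatening $i$ must have $r_i>r_j$. For such $i$ the closeness condition becomes $d_{ij}\leq r_i+(1+\delta)r_j$, which combined with $d_{ij}>\alpha r_j$ yields $r_i>(\alpha-1-\delta)r_j = (2+\sqrt{2})D^t$, proving the radius lower bound claimed in the lemma.

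Second, I would prove uniqueness by contradiction. Suppose $i_1,i_2\in S^t$ both threaten. By invariant \ref{itm:iii} on $S^t$ before adding $j$, $d_{i_1,i_2}\geq r_{i_1}+r_{i_2}+\delta\min\{r_{i_1},r_{i_2}\}$. By the triangle inequality and the two closeness conditions, $d_{i_1,i_2}\leq d_{i_1,j}+d_{j,i_2}\leq r_{i_1}+r_{i_2}+2(1+\delta)r_j$. Chaining these and using $\min\{r_{i_1},r_{i_2}\}>(\alpha-1-\delta)r_j$ from the previous step forces $\delta(\alpha-1-\delta)<2(1+\delta)$. Plugging in the parameters, $\delta(\alpha-1-\delta)=\sqrt{2}(2+\sqrt{2})=2+2\sqrt{2}=2(1+\delta)$, contradicting the strict inequality. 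Hence at most one threatening ball exists. This step, where the choice of $\alpha$ and $\delta$ must match precisely, is the main calculation and the core obstacle of the lemma.

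Finally, verifying the invariants after adding $B_j$ and dropping the (at most one) threatening ball is routine. Invariant \ref{itm:iii} holds because pairs of old surviving balls are unaffected, and every surviving old ball $i$ satisfies $d_{ij}>r_i+r_j+\delta\min\{r_i,r_j\}$ by the non-threatening condition. Invariant \ref{itm:iv} for the new ball $B_j$ follows from the LP feasibility $\sum_{i'\in B_j^t}x_{i'}^t\geq 1\geq 1-\eps$, since $B_j^t=B_j$ (both use radius $r_j=D^t$ around client $j\in C^t$); for old surviving balls it is unaffected by adding a new ball. Invariant \ref{itm:v} for the new ball holds by the same LP inequality because $t_j=t$, and for old balls it is unchanged.
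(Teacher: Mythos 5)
Your proposal is correct and follows essentially the same argument as the paper: lower-bound the radius of any threatening ball via the uncovered condition $d_{ij}>\alpha D^t$, rule out two threatening balls by chaining the triangle inequality with the inductively assumed separation invariant (with the parameter identity $\delta(\alpha-1-\delta)=2(1+\delta)$, equivalently $\tfrac{2+2\delta}{\alpha-\delta-1}=\delta$, supplying the contradiction), and then verify the invariants from the LP constraint $\sum_{i'\in B_j}x_{i'}^t\geq 1$. Your initial case split on $r_i\leq r_j$ versus $r_i>r_j$ is a cosmetic reorganization of the paper's one-line bound $\min\{r_i,r_j\}\leq r_j$, and your closing verification of invariants \ref{itm:iii}--\ref{itm:v} is, if anything, slightly more explicit than the paper's.
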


Given Lemma \ref{lem:onedrop1} we prove that the potential argument (Inequality \eqref{main:ineq-kcenter}) is satisfied and the algorithm terminates. We observe that if indeed the main loop of the algorithm terminates after a finite number of steps, then by the main loop condition, Invariant \ref{itm:ii} must be maintained. 

Since by the LP constraints we have for the new added ball $\sum_{j\in B_i}x_j^t = \sum_{j\in B_i}x_j^{t_i}\geq1$, the term $\nicefrac{(1+\eta\log \Delta)}{\eps}\cdot \sum_{i\in S^t}\max\{0, 1 - \sum_{j\in B_i}x_j^t\}$ of the potential does not increases due to the addition of the new ball and possibly the drop of a single ball. Next, we have two cases.

\paragraph{A ball of radius $r_i$ is added and no ball is dropped.}
In this case $\Delta N^t_2=0$ and the change in the LHS of Inequality \eqref{main:ineq-kcenter} is at most
$\Delta \Phi \leq  -\eta \log_2 \left(\frac{\Delta}{r_j}\right)  \leq 0$. We note that by invariants \ref{itm:v} for the new added ball $B_j$ we have $\sum_{i\in B_j}x_i^{t_j} \geq 1$. As by Invariant \ref{itm:iii} the balls are disjoint, and $\sum_{j}x_j^t\leq k(1+\eps)$, this step can occur at most $k(1+\eps)$ times.  
\paragraph{A ball of radius $r_j$ is added, and a single ball $i$ with $r_i> (\alpha - \delta - 1) \cdot r_j$ is dropped.} The LHS of Inequality \eqref{main:ineq-kcenter} increases by at most
	$\Delta N^t_2 + \Delta \Phi \leq 1 - \eta \log_2 \left(\frac{\Delta}{r_j}\right) + \eta \log_2 \left(\frac{\Delta}{r_i}\right) = 1 - \eta \log_2 \left(\frac{r_i}{r_j}\right)
	\leq 1 - \eta \log_2 \left(\alpha - \delta - 1\right)
	\leq 0,$
where the last inequality follows by the definition $\eta = 1/\log_2(\alpha - \delta - 1)$. 
As in each such step, the potential function decreases by at least $1$, the potential function only decreases in the previous case, and $|\Phi^t| \leq O(k \log \Delta)$, this step can also happen at most $O(k\log \Delta)$ times.
We conclude that the main loop of the algorithm must terminate after a finite number of steps.

\begin{proof}[Proof of Lemma \ref{lem:onedrop1}]
	Since client $j$ is uncovered (i.e. $j \not\in \bigcup_{i\in S^t}\widehat{B_i^t}$), we have $d_{ij} > \alpha \cdot D^t$ for all $i\in S^t$.   
	The new added ball $B_j$ is of radius $r_j=D^t$.
	Thus, for any $i\in S^t$ such that $d_{ij} \leq r_i + r_j + \delta \min\{r_i, r_j\}$ we have,
	\begin{align*}
		& r_i+ (1+\delta)D^t = r_i+ (1+\delta)r_j \\
        &\geq r_i + r_j +\delta\cdot \min\{r_i, r_j\} \geq d_{ij} >  \alpha \cdot D^t.  
	\end{align*}
	Thus, for such a ball it must be that $r_i> (\alpha - \delta - 1)\cdot D^t$. 
	
	Assume for the sake of contradiction that there are two balls $B_{i_1}, B_{i_2}$ such that
$d_{i_1, j} \leq r_{i_1} + r_j + \delta\cdot\min\{r_{i_1}, r_j\}$ and  $d_{i_2, j} \leq r_{i_2} + r_j + \delta\cdot \min\{r_{i_2}, r_j\}$.
	Then
	\begin{align*}
		&d_{i_1, i_2} \leq d_{i_1,j}+ d_{i_2,j}\\
        &\leq r_{i_1}+r_{i_2} + 2 r_j +\delta \cdot \min\{r_{i_1}, r_j\} +\delta\cdot \min\{r_{i_2}, r_j\} \\
		& = r_{i_1}+r_{i_2} + (2+2\delta)\cdot r_j = r_{i_1}+r_{i_2} + (2+2\delta)\cdot D^t  \\
        &< r_{i_1}+r_{i_2} + \frac{2 + 2\delta}{\alpha - \delta - 1}\cdot \min\{r_{i_1}, r_{i_2}\}.
	\end{align*}
	The first inequality follows by the triangle inequality. The second inequality follows by our assumption. The next two steps follow by our above observation $r_j=D^t$, and because $r_{i_1}> (\alpha - \delta - 1)\cdot D^t$ and $r_{i_2}> (\alpha - \delta - 1)\cdot D^t$. Finally, recalling our specific values of $\alpha = 3+2\sqrt{2}$ and $\delta = \sqrt{2}$, we get that this last line is at most $r_{i_1}+r_{i_2} + \delta\cdot \min\{r_{i_1}, r_{i_2}\}$, which contradicts the inductive hypothesis that Invariant \ref{itm:iii} holds before adding ball $B_j$.
	
After adding the new ball of radius $D^t$ and dropping at most one ball violating Invariant \ref{itm:iii}, the new balls satisfy Invariant \ref{itm:iii}. Finally, as $j\in C^t$ by the LP constraints $\sum_{j'\in B_j}x_{j'}^{t} = \sum_{j'\in B_j}x_{j'}^{t_i} \geq 1$ and therefore Invariant \ref{itm:v} holds.
\end{proof}

\subsection{A rounding algorithm for fully-dynamic facility location}\label{sec:main-facloc}

We turn to our dynamic rounding scheme for facility location. Following \cref{pre:forFacility}, for every $\eps\in(0,1]$ we have a fractional solution satisfying the guarantees
\begin{align*}
	\begin{array}{ll}
		\displaystyle \sum_{t=1}^{T}\|x^t- x^{t-1}\|_1 \leq O\left(\frac{\log(\frac{|F|}{\eps})}{\eps} \right) \cdot \optr^{\beta}, \\
		\displaystyle \sum_{i\in F}y^t_{ij} \geq 1 & \hspace{-0.53in} \forall j\in C^t, t\in[T],\\
		\displaystyle y^t_{ij} \leq x^t_i &\hspace{-0.53in}\forall j\in C^t, t\in[T],\\
		\displaystyle \sum_{i\in F}f_i x_i^t + \sum_{\substack{i\in F \\ j\in C^t}}d_{ij}y^t_{ij} \leq (1+\eps) \beta\cdot OPT^t  & \hspace{-0.076in}\forall t\in[T].
	\end{array}
\end{align*} 
The variable $x^t_i$ is the fraction to which facility at position $i\in F$ is open at time $t$ and $y_{ij}^t$ is the fraction by which client $j$ is served with facility $i$ at time $t$.
The bound in Theorem \ref{thm:main-upper} for the fully dynamic facility location problem is obtained by combining the guarantee on the fractional solution along with the following theorem that we prove, and choosing $\eps=1$.

\begin{theorem}[Dynamic facility location rounding]\label{thm:mainfacility}
Let $x^t$ be a fractional solution to the dynamic facility location problem that maintains a solution that is a $\beta$-approximation at any time with total fractional recourse $R$. Then, there is an algorithm that maintains an {\bf integral} solution that is $(\alpha \cdot \beta)$-approximate at any time, and its total recourse is $O\left(\log \Delta\right)\cdot R$, where $\alpha= 11$.
\end{theorem}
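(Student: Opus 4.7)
The plan is to mirror the dynamic $k$-center argument of Theorem~\ref{thm:mainkcenter} with client-specific rather than global radii, following the classical offline Shmoys--Tardos--Aardal rounding. For each $j\in C^t$, let $R_j^t := \sum_{i\in F} d_{ij}\, y_{ij}^t$ be its fractional service cost; using $y_{ij}^t\le x_i^t$ and Markov's inequality, $\sum_{i:\, d_{ij}\le \gamma R_j^t} x_i^t \ge 1-1/\gamma$ for every $\gamma>1$, which plays the role of the $k$-center LP mass guarantee inside a ball of radius $\gamma R_j^t$ around $j$. The algorithm maintains a set of \emph{witnesses} $W^t\subseteq C^t$; each $w\in W^t$ has birth time $t_w$, fixed radius $r_w := \gamma R_w^{t_w}$, ball $B_w := \{i\in F : d_{wi}\le r_w\}$, and a single facility $i_w := \arg\min_{i\in B_w} f_i$ opened once at birth and never changed during $w$'s lifetime. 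The integral solution is $S^t := \{i_w : w\in W^t\}$. At each step, after updating the fractional solution and all $R_j^t$, we (i) drop any $w$ with $\sum_{i\in B_w} x_i^t < 1-1/\gamma-\eps$, and then (ii) iteratively, while some client $j\in C^t$ is \emph{uncovered}, promote $j$ to a witness with $r_j = \gamma R_j^t$, open $i_j$, and drop every prior witness $w$ with $d_{wj}\le r_w + r_j + \delta\min\{r_w, r_j\}$.

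I will inductively carry four invariants analogous to \ref{itm:ii}--\ref{itm:v}: (I) every $j\in C^t$ is covered; (II) witnesses are pairwise separated by $r_{w_1}+r_{w_2}+\delta\min\{r_{w_1},r_{w_2}\}$; (III) each $B_w$ currently has fractional mass at least $1-1/\gamma-\eps$; (IV) each $B_w$ had fractional mass at least $1-1/\gamma$ at its birth. From (I) and the triangle inequality, $j$'s service cost through its covering witness is $d(j, i_w)\le d(j,w)+r_w = O(R_j^t)$, giving total service cost $O(\text{fractional service})$. From (II), the $B_w$'s are pairwise disjoint, and because $i_w$ is the cheapest facility in $B_w$, $f_{i_w}\le (1-1/\gamma-\eps)^{-1}\sum_{i\in B_w} f_i x_i^t$ by (III), giving total opening cost $O(\text{fractional opening})$. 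Tuning $\gamma$ and $\delta$ then balances these to yield the claimed overall approximation factor $\alpha = 11$.

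For recourse I will use the natural analog of the $k$-center potential, namely
\[\Phi^t \;=\; \frac{1+\eta\log\Delta}{\eps}\sum_{w\in W^t}\max\!\Bigl\{0,\;\bigl(1-\tfrac{1}{\gamma}\bigr)-\!\sum_{i\in B_w}\! x_i^t\Bigr\}\;-\;\eta\!\sum_{w\in W^t}\log_2(\Delta/r_w),\]
with $\eta$ calibrated to the constants from the step-(ii) analysis. Event-by-event bookkeeping then proceeds exactly as in the $k$-center proof: fractional updates change only the first term of $\Phi^t$; step-(i) drops charge their unit of recourse to an $\eps$ decrease already recorded in the potential; and step-(ii) add-then-drops are paid for by the log-radius term. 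The one nontrivial piece, which I expect to be the main obstacle, is the facility-location analog of Lemma~\ref{lem:onedrop1}: I must fix the ``uncovered'' condition---most naturally, no $w\in W^t$ satisfies $d_{wj}+r_w \le \alpha R_j^t$, which also immediately gives the service cost bound for covered clients---and then choose $\alpha,\gamma,\delta$ so that whenever $j$ is uncovered, at most one prior witness $w$ violates the separation condition with the new $B_j$, and that $w$ necessarily has $r_w$ exceeding $r_j$ by a fixed multiplicative factor strictly greater than $1$. The triangle-inequality argument contradicting the separation invariant then carries over from Lemma~\ref{lem:onedrop1}, and the potential bookkeeping yields total integral recourse $O(\log\Delta)\cdot R$, matching the theorem statement.
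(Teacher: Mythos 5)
Your proposal is correct and follows essentially the same route as the paper's proof: the same witness/ball structure with birth radii $r_w=\gamma R_w^{t_w}$, the cheapest facility per disjoint ball, the same four invariants, the same two-term potential, and the same event-by-event charging. The one piece you flag as the remaining obstacle (the analog of Lemma~\ref{lem:onedrop1}) closes exactly as you describe: since the facility $i_w$ lies within $r_w$ of $w$, an uncovered $j$ satisfies $r_w+d_{wj}\ge d(i_w,j)>\alpha R_j^t$, which combined with the conflict condition forces $r_w\ge\tfrac{\alpha/\gamma-(1+\delta)}{2}\,r_j$, and the paper's choices $\gamma=10/9$, $\delta=2$, $\alpha=11$ make this factor strictly greater than $1$ and make the two-conflicting-witnesses triangle-inequality argument contradict the separation invariant.
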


\paragraph{The Algorithm:} At any time $t$, for each $j\in C^t$, let $R_j^t= \sum_{i\in F}d_{ij}y^t_{ij}$ be the fractional connection cost of client $j$. The algorithm maintains at any time a set of active clients $A^t$ and a set of open facilities $S^t$. For each client $j\in A^t$ there is (exactly) a single associated facility $i_j\in S^t$. Whenever the algorithm adds or removes a client from $A^t$ it also adds/removes the associated facility from $S^t$. Let $t_i\leq t$ be the time a client $i$ (and the associated facility) were added to $A^t$ (correspondingly to $S^t$). We associate each $j\in A^t$ with a radius $r_j$ and a ball $B_j=\{i\in F \mid d_{ij}\leq r_j\}$.
With foresight, set the parameters $\alpha = 11$, $\delta = 2$, $\gamma = 10/9$, $\eta = 1/\log(\nicefrac{(\alpha - \gamma(1+\delta))}{2\gamma})$.
At time $t$, the fractional solution is updated from $(x^{t-1}, y^{t-1})$ to $(x^{t}, y^{t})$, and the algorithm is the following.
\begin{itemize}
\item $A^t\gets A^{t-1}, S^t\gets S^{t-1}$.
\item Drop from $A^{t}$ (and correspondingly its associated facility $S^t$) any $j\in A^t$ such that $\sum_{i\in B_j}x_i^t < \nicefrac{1}{\alpha}$.
\item {\bf Iteratively:} as long as there exists $j\in C^t$ such that 
$d_{ij} > \alpha \cdot R_j^t$ for all $i\in S^t$.
\begin{itemize}
    \item Add the client $j$ to $A^t$, and set $r_{j} = \gamma \cdot R_j^t$. 
    \item For client $j$ add to $S^t$ a facility with minimal cost $f_i$ such that $d_{ij} \leq r_j$.
    \item Drop from $A^t$ (and the corresponding facility in $S^t$) any $i\in A^t, i\neq j$ such that $d_{ij}\leq  r_{i}+r_{j}+\delta\cdot\min\{r_{i},r_{j}\}$.
\end{itemize}
\end{itemize}

\paragraph{Analysis:}
We prove inductively on the steps of the algorithm that the algorithm maintains the following invariants:

\begin{enumerate}[label={(\roman*)}]
    \item \label{itm2:ii} Each $j\in C^t$, $\min_{i\in S^t}d_{ij} \leq \alpha\cdot R_j^t$. 
    \item \label{itm2:iii} For all $j_1, j_2\in A^t$: $d_{j_1,j_2}\geq r_{j_1}+r_{j_2}+\delta\cdot\min\{r_{j_1},r_{j_2}\}$. In particular the balls around the clients in $A^t$ are disjoint. 
    \item \label{itm2:iv} For all $j\in A^t$, we have $ \sum_{i\in B_j}x_i^t \geq \nicefrac{1}{\alpha}=\nicefrac{1}{11}$.
    \item \label{itm2:v} For all $j\in A^t$, we have $ \sum_{i\in B_j}x_i^{t_j} \geq 1-\nicefrac{1}{\gamma}= \nicefrac{1}{10}$.
\end{enumerate}
\medskip

In addition, we use the following potential function at time $t \in [T]$. 
\begin{equation}\label{potential-facilitylocation}
\Phi^{t} =
\begin{array}{l}
     \left(\begin{array}{l}
        \displaystyle \left(1 - \frac{1}{\gamma} - \frac{1}{\alpha}\right)^{-1} \cdot \left(1 + \eta \cdot \log_2 (\Delta)\right) \\
        \displaystyle \cdot \sum_{i\in A^t}\max\left\{0, 1 - \frac{1}{\gamma} - \sum_{j\in B_i}x_j^t\right\}
\end{array}\right)   \\
     \displaystyle -  \sum_{i\in A^t} \eta \cdot \log_2 \left(\frac{\Delta}{r_i}\right).
\end{array}
\end{equation}

Let $N^t_1$ be the total number of balls dropped due to the first step of the algorithm until time $t$. Let $N^t_2$ be the number of balls dropped due to an addition of a new ball in the second (iterative) step of the algorithm.
We prove inductively that,
\begin{equation}\label{main:ineq-facility}
 \Delta N_1^{t}+ \Delta N_2^{t} + \Delta \Phi^{t} \leq O\left(\log \Delta\right)\cdot \|x^t-x^{t-1}\|_1,
\end{equation}
where $\Delta N_1^{t}, \Delta N_2^{t}$ and $\Delta \Phi^{t}$ is the change in the values of $N_1^t, N_2^t$ and $\Phi^t$ at time step $t$.
In particular, we show that the RHS of the above equation as well as  $\phi$ are finite. Therefore, the number of balls that are dropped due to an addition of a new ball at time step $t$ in the second iterative step, $\Delta N^t_2$, is finite and the algorithm terminates. 
We show first that these invariants imply Theorem \ref{thm:mainfacility}.

\begin{proof}[Proof of Theorem \ref{thm:mainfacility}]
By Invariant \ref{itm2:ii}, the total service cost is at most $\alpha \cdot \sum_{j\in C^t}R_j^t$. 

Since $i_j$, the facility associated with client $j\in A^t$, is of minimal cost in $B_j$, we have $f_{i_j} \leq \sum_{i\in B_j}f_i\cdot x_i^t / (\sum_{i\in B_j}x_i^t)\leq \alpha\cdot \sum_{i\in B_j}f_i\cdot x_i^t$,
where the final inequality is by Invariant \ref{itm2:iv} that we maintain for all $j\in A^t$,  $\sum_{i\in B_j}x_i^t \geq \nicefrac{1}{\alpha}$. By Invariant \ref{itm2:iii} the balls $B_j$ are disjoint, and hence the total cost of opening facilities is at most $\alpha \cdot \sum_{i\in F}f_i x_i^t$. Therefore, our solution is an $(\alpha \cdot \beta)$-approximation.

Finally, by Inequality \eqref{main:ineq-facility} $N_1^{t}+ N_2^{t} + \Phi^t - \Phi^0  \leq O\left(\log \Delta\right)\cdot \sum_{t=1}^{T}\|x^t-x^{t-1}\|_1$.
It is easy to verify that $\Phi^0=0$ and $\Phi^t \geq - \left(1 - \nicefrac{1}{\gamma} - \nicefrac{1}{\alpha}\right)^{-1} \cdot \left(1 + \eta \cdot \log_2 (\Delta)\right)\cdot \sum_{i\in A^t} \sum_{j\in B_i} x^t_i \geq -C \log \Delta \sum_{i\in F}x_i^t$, where $C$ is some constant. Thus, the total recourse is bounded. \end{proof}

\begin{remark}Our analysis has an additional (constant) additive term that depends on the final fractional solution. However, a more careful inspection of our proof shows that the number balls dropped is bounded by $O\left(\log \Delta\right)$ times the total decrease in the variables $x_i^t$. This fact along with the fact that the final number of facilities plus $O(\log \Delta)\cdot \sum_{i\in F}x_i^t$ is bounded by $O\left(\log \Delta\right)$ times the total increase in the variables $x_i^t$ can be used to avoid this additive term in the analysis.
 \end{remark}

Next, we consider different events that can happen at time step $t$ one-by-one and prove that all the invariants are maintained. The invariants are clearly satisfied initially when there are no clients, the fractional solution is $0$, and no facilities were added.
The following events may happen.
\begin{itemize}
    \item The fractional solution changes.
    \item Client $j$ such that $\sum_{i\in B_j}x_i^t < \nicefrac{1}{\alpha}$ is dropped.
    \item A new ball around client $j$ is added, and then (possibly) other clients/balls are dropped.
\end{itemize}

\paragraph{The fractional solution changes:}
By our induction and Invariant \ref{itm2:iii} we have that for all $i_1, i_2\in S^t$: $d_{i_1,i_2}\geq r_{i_1}+r_{i_2}+\delta\cdot\min\{r_{i_1},r_{i_2}\}$, and in particular the balls are disjoint. Only the term $\left(1 - \nicefrac{1}{\gamma} - \nicefrac{1}{\alpha}\right)^{-1} \cdot \left(1 + \eta \cdot \log_2 (\Delta)\right)\cdot \sum_{i\in A^t}\max\left\{0, (\gamma-1)/\gamma - \sum_{j\in B_i}x_j^t\right\}$ depends on the fractional solution and Inequality \eqref{main:ineq-facility} is maintained.
 
\paragraph{Client $i$ (and its corresponding facility) such that $\sum_{j\in B_i}x_j^t < \frac{1}{\alpha}$ is dropped:}
If this event happens, then $\Delta N_1^{t}=1, \Delta N_2^{t}=0$, and $\max\{0, 1 - \nicefrac{1}{\gamma} - \sum_{j\in B_i}x_j^t\}\geq 1 - \nicefrac{1}{\gamma}-\nicefrac{1}{\alpha}$. Thus, we have,
\begin{align*}
  &\Delta N_1^{t}+ \Delta N_2^{t} + \Delta \Phi^{t} \\
  &\leq 1 - \frac{1 - \nicefrac{1}{\gamma} - \nicefrac{1}{\alpha}}{1 - \nicefrac{1}{\gamma} - \nicefrac{1}{\alpha}} (1+\eta \cdot \log_2 (\Delta))+ \eta \cdot \log_2 \left(\frac{\Delta}{r_i}\right) \\
  &\leq   0.
\end{align*}

\paragraph{A new ball around $j$ is added, and then (possibly) other balls are dropped:}

In order to analyze this event we need the following lemma, which we prove shortly.

\begin{lemma}\label{lem:onedrop2}
    Suppose the algorithm adds a client $j$ such that $d_{ij}>\alpha \cdot R_j^t$ for all $i\in S^t$, and let $r_j=\gamma\cdot R_j^t$ be the radius of the new added ball $B_j$ around $j$. Then, there exists at most one client $i\in A^t$  such that $d_{ij} \leq r_i + r_j + \delta\cdot\min\{r_i, r_j\}$, and if such client exists then $r_i>2^{1/\eta}\cdot r_j$.
    After adding this new client/ball and possibly dropping 
    at most one client/ball all invariants \ref{itm2:iii}, \ref{itm2:iv}, \ref{itm2:v} are maintained.
\end{lemma}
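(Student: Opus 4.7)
The plan is to mirror the proof of Lemma \ref{lem:onedrop1}, with one extra wrinkle: here the balls are centered at clients in $A^t$, while the triggering condition $d_{i,j}>\alpha R_j^t$ (for all $i\in S^t$) is a statement about the facilities. To bridge the two worlds I will use the tether $d_{i,i_i}\leq r_i$ between each client $i\in A^t$ and its associated facility $i_i\in S^t$ (guaranteed at time $t_i$ by the algorithm's minimum-cost facility selection) together with the triangle inequality. I expect this bridging to be the main new ingredient relative to the $k$-center argument.

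The first step establishes the size-gap. Assume $d_{ij}\leq r_i+r_j+\delta\min\{r_i,r_j\}$ for some $i\in A^t$. The triangle inequality through $i_i$ gives $d_{ij}\geq d_{i_i,j}-d_{i,i_i}>\alpha R_j^t-r_i$; combined with the hypothesis and $r_j=\gamma R_j^t$ this yields $\alpha r_j/\gamma<2r_i+r_j+\delta\min\{r_i,r_j\}$. I then split into the two cases $r_i<r_j$ and $r_i\geq r_j$. Plugging in $\alpha=11,\gamma=10/9,\delta=2$ makes the first case internally contradictory (it would force $r_i>r_j$), and the second case rearranges to exactly $r_i>(\alpha-\gamma(1+\delta))/(2\gamma)\cdot r_j=2^{1/\eta}\cdot r_j$, matching the definition of $\eta$.

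Uniqueness then follows by the same triangle-through-$j$ trick as in Lemma \ref{lem:onedrop1}. If two balls $B_{i_1},B_{i_2}$ both satisfy the closeness hypothesis, then $d_{i_1,i_2}\leq r_{i_1}+r_{i_2}+(2+2\delta)r_j$, and the size-gap step says each $r_{i_k}>2^{1/\eta}r_j$. For our parameters $2^{1/\eta}=69/20>3=(2+2\delta)/\delta$, so $(2+2\delta)r_j<\delta\min\{r_{i_1},r_{i_2}\}$, contradicting the inductive hypothesis of Invariant \ref{itm2:iii} before $j$ was inserted. Dropping the single violating ball (if one exists) therefore restores Invariant \ref{itm2:iii}.

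Finally, Invariants \ref{itm2:iv} and \ref{itm2:v} at the newly added $j$ follow from Markov's inequality on the fractional service $y^t_{\cdot,j}$: since $\sum_i d_{ij}y_{ij}^t=R_j^t$ and $\sum_i y_{ij}^t\geq 1$, the $y$-mass within distance $\gamma R_j^t=r_j$ is at least $1-1/\gamma=1/10$, and the LP coupling $y_{ij}^t\leq x_i^t$ gives $\sum_{i\in B_j}x_i^{t_j}\geq 1-1/\gamma> 1/\alpha$, settling both invariants for $B_j$; the existing balls are untouched by the insertion and retain theirs. The hardest part of the whole argument will be keeping the constants coherent: $\alpha,\gamma,\delta,\eta$ must simultaneously make Case~1 of the size-gap vacuous, satisfy the uniqueness inequality $2^{1/\eta}>(2+2\delta)/\delta$, and leave enough Markov slack so that $1-1/\gamma>1/\alpha$; the stated values $\alpha=11,\gamma=10/9,\delta=2$, $\eta=1/\log_2\!\bigl((\alpha-\gamma(1+\delta))/(2\gamma)\bigr)$ are tuned precisely so that all three constraints hold at once.
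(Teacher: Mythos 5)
Your proposal is correct and follows essentially the same route as the paper: the same triangle-inequality bridge through the associated facility to obtain $2r_i+(1+\delta)r_j>\frac{\alpha}{\gamma}r_j$ and hence $r_i>2^{1/\eta}r_j$, the same two-ball contradiction against Invariant \ref{itm2:iii}, and the same Markov argument for Invariants \ref{itm2:iv} and \ref{itm2:v}. The only cosmetic difference is that you split the size-gap step into the cases $r_i<r_j$ and $r_i\geq r_j$, whereas the paper simply bounds $\delta\min\{r_i,r_j\}\leq\delta r_j$ and reaches the identical conclusion without casework.
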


Given Lemma \ref{lem:onedrop2}, we only need to prove that the potential argument (Inequality \eqref{main:ineq-facility}) is satisfied. Since by the LP constraints and markov inequality we have for the new added ball $\sum_{i\in B_j}x_i^t = \sum_{i\in B_j}x_i^{t_j}\geq 1 - \nicefrac{1}{\gamma}$, then the term $\left(1 - \nicefrac{1}{\gamma} - \nicefrac{1}{\alpha}\right)^{-1} \cdot \left(1 + \eta \cdot \log_2 (\Delta)\right)\cdot \sum_{i\in A^t}\max\left\{0, (\gamma-1)/\gamma - \sum_{j\in B_i}x_j^t\right\}$ of the potential does not increases due to the addition of the new ball and possibly the drop of a single ball. Next, we have two cases.

\paragraph{A ball of radius $r_j$ is added and no ball is dropped.}
In this case $\Delta N^t_2=0$ and the change in the LHS of Inequality \eqref{main:ineq-facility} is at most
$\Delta \Phi \leq  -\eta \cdot \log_2 \left(\nicefrac{\Delta}{r_j}\right)  \leq 0$.

\paragraph{A ball of radius $r_j$ is added a single ball $i$ with $r_i> 2^{1/\eta} \cdot  r_j$ is dropped.} The LHS of Inequality \eqref{main:ineq-facility} increases by at most
    $\Delta N^t_2 + \Delta \Phi \leq 1 - \eta\cdot \log_2 \left(\nicefrac{\Delta}{r_j}\right) + \eta\cdot \log_2 \left(\nicefrac{\Delta}{r_i}\right) = 1 - \eta\cdot \log_2 \left(\nicefrac{r_i}{r_j}\right)
    \leq 1 - \eta\cdot \log_2 \left(2^{1/\eta}\right)
    \leq 0.$

\begin{proof}[Proof of Lemma \ref{lem:onedrop2}]
    For the client $j$, we have $d_{ij} > \alpha \cdot R_j^t$ for all $i\in S^t$. The new added ball $B_j$ is of radius $r_j=\gamma \cdot R_j^t$.
Let $i\in A^t$ be any client such that $d_{ij} \leq r_i + r_j + \delta\cdot \min\{r_i, r_j\}$, and let $i'$ be the associated facility of $i$. We have 
$2r_i+ (1+\delta) r_j \geq  r_i + r_i + r_j +\delta\cdot \min\{r_i, r_j\} \geq d_{ii'} + d_{ij} \geq d_{i'j} >  \alpha \cdot R_j^t =  \frac{\alpha}{\gamma} r_j$,
where the second inequality follows by the guarantee that facility $i'$ is at distance at most $r_i$ from client $i$, and the guarantee on $d_{ij}$. The third Inequality is by the triangle inequality. The last inequality holds because $d_{i'j} >\alpha \cdot R_j^t$. 

Thus, for such a ball it must be that $r_i> \nicefrac{(\frac{\alpha}{\gamma} - (1+\delta))}{2}\cdot r_j = 2^{1/\eta} \cdot r_j$ (recall that $\eta = 1/\log(\nicefrac{(\alpha - \gamma(1+\delta))}{2\gamma})$).  Next, assume to the contrary that there are two balls $B_{i_1}, B_{i_2}$ such that $
d_{i_1, j} \leq r_{i_1} + r_j + \delta\cdot\min\{r_{i_1}, r_j\}$ and $d_{i_2, j}  \leq r_{i_2} + r_j + \delta\cdot \min\{r_{i_2}, r_j\}$. Then
\begin{align*}
    &d_{i_1, i_2} \leq d_{i_1,j}+ d_{i_2,j} \\
    &\leq r_{i_1}+r_{i_2} + 2  r_j +\delta \cdot \min\{r_{i_1}, r_j\} +\delta\cdot \min\{r_{i_2}, r_j\} \\
    & \leq r_{i_1}+r_{i_2} + (2 + 2\delta)\cdot r_j \\
    &\leq r_{i_1}+r_{i_2} +\frac{4 + 4\delta}{\nicefrac{\alpha}{\gamma} - (1+\delta)}\cdot \min\{r_{i_1}, r_{i_2}\}.
\end{align*}
The first inequality follows by the triangle inequality. The second inequality follows by our assumption. The third and fourth inequality follow by our above observation that $r_{i_1}> 2^{1/\eta}\cdot r_j$ and $r_{i_2}> 2^{1/\eta}\cdot r_j$. Finally, this is strictly less than $r_{i_1}+r_{i_2} + \delta\cdot \min\{r_{i_1}, r_{i_2}\}$ by our specific setting of the parameters. This contradicts the induction hypothesis that Invariant \ref{itm:iii} held before ball $B_j$ was added.

After adding the new ball of radius $r_j$, and dropping at most one ball that violates Invariant \ref{itm:iii}, the new balls satisfy Invariant \ref{itm:iii}. Finally, because $j\in A^t$ by the LP constraints, by Markov's Inequality (since $r_j= \gamma \cdot R_j^{t_j}$) $\sum_{i\in B_j}x_{i}^{t} = \sum_{i\in B_j}x_{i}^{t_j} \geq 1- \nicefrac{1}{\gamma}$ and therefore Invariant \ref{itm2:v} is maintained.
\end{proof}

\subsection{A rounding algorithm for fully dynamic $k$-median} \label{subsec:median}

With the setup from previous sections, our algorithm for $k$-median is very simple to describe and to analyze. Given a fractional solution for the problem formulation in \cref{pre:forkmedian}, we run the  rounding algorithm from \cref{sec:main-facloc}, with facility costs $f_i = 0$ for all $i \in M$. The bound in Theorem \ref{thm:main-upper} for the fully dynamic $k$-median problem is obtained by combining the guarantee on the fractional solution along with the following theorem that we prove.

\begin{theorem}[Dynamic $k$-median rounding]\label{thm:mainkmedian}
Let $x^t$ be a fractional solution to the dynamic $k$-median problem that maintains a solution that is a $\beta$-approximation at any time, uses at most $k'$ centers, and its total fractional recourse is $R$. Then there is an algorithm that maintains an {\bf integral} solution with at most $\alpha \cdot k'$ centers that is $(\alpha \cdot \beta)$-approximate at any time, and its total recourse is $O\left(\log (\Delta)\right)\cdot R$, where $\alpha= 11$.
\end{theorem}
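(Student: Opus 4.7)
The plan is to directly invoke the machinery from \cref{sec:main-facloc} applied to the facility-location instance obtained by setting all facility opening costs to zero. With $f_i=0$, the LP from \cref{pre:forFacility} becomes the $k$-median LP (from \cref{pre:forkmedian}) up to the cardinality constraint $\sum_i x_i^t \leq k'$ on the fractional solution, which is a property we are given as a hypothesis rather than one enforced by the rounding. So conceptually the same algorithm (and its analysis) goes through, but I need to check three things: (i) the approximation ratio of $\alpha\beta$ on the service cost, (ii) the recourse bound $O(\log \Delta)\cdot R$, and (iii) the bound of at most $\alpha k'$ on the number of opened centers.

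For (i), the facility-location analysis in the proof of \cref{thm:mainfacility} splits the cost into opening cost and service cost. With $f_i=0$ the opening term vanishes identically, and the service-cost bound is established via Invariant \ref{itm2:ii}, which gives $\min_{i\in S^t}d_{ij}\leq \alpha\cdot R_j^t$ for every $j\in C^t$. Summing over $j\in C^t$ yields total integral service cost at most $\alpha\sum_{j\in C^t}R_j^t\leq \alpha\cdot \beta\cdot OPT^t$, since the fractional service cost $\sum_j R_j^t$ is exactly the LP objective (with $f_i=0$) and the input is a $\beta$-approximate fractional solution.

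For (ii), nothing in the analysis of \cref{thm:mainfacility} used positivity of the $f_i$; the potential function \eqref{potential-facilitylocation}, the invariants, the case analysis of events, and \cref{lem:onedrop2} are all independent of the facility costs. So inequality \eqref{main:ineq-facility} yields $N_1^T+N_2^T\leq O(\log \Delta)\cdot R$, and therefore the total recourse (additions plus deletions of centers) is at most $O(\log\Delta)\cdot R$.

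For (iii), which is the only genuinely new piece, note that the set of opened centers at time $t$ is exactly $S^t=\{i_j : j\in A^t\}$ with $|S^t|=|A^t|$. By Invariant \ref{itm2:iii} the balls $\{B_j\}_{j\in A^t}$ are disjoint, and by Invariant \ref{itm2:iv} each satisfies $\sum_{i\in B_j}x_i^t\geq 1/\alpha$. Hence
\[
\frac{|A^t|}{\alpha}\leq \sum_{j\in A^t}\sum_{i\in B_j}x_i^t \leq \sum_{i\in F}x_i^t \leq k',
\]
so $|S^t|\leq \alpha k'$, as required. Combining (i), (ii) and (iii) gives the theorem. I do not anticipate a serious obstacle: the main subtlety is simply recognizing that zeroing out facility costs preserves all invariants and that the center-count bound follows for free from the packing guarantee already built into the facility-location analysis.
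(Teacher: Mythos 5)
Your proposal is correct and follows essentially the same route as the paper: run the facility-location rounding with $f_i=0$, inherit the approximation and recourse guarantees from \cref{thm:mainfacility}, and bound the number of centers by combining the disjointness of the balls (Invariant \ref{itm2:iii}) with the mass lower bound $\sum_{i\in B_j}x_i^t\geq 1/\alpha$ (Invariant \ref{itm2:iv}) against the packing constraint $\sum_i x_i^t\leq k'$. The paper's proof is just a terser version of your argument, so no gaps to report.
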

\begin{proof}
The approximation and recourse guarantees are immediate from \cref{thm:mainfacility}, and it remains to show that the algorithm never holds more than $O(k)$ centers.

Inspecting the proof of \cref{thm:mainfacility}, by Invariant \ref{itm2:iii} the balls $B_j$ are disjoint, and by Invariant \ref{itm2:iv} we have for all $i\in F^t$ that $\sum_{j\in B_i}x_j^t \geq 1 / \alpha$. Since the fractional solution satisfies $\sum_{i=1}^{n}x_i^t  \leq k'$, the number of centers chosen by the algorithm is at most $\alpha \cdot k'$.
\end{proof}

\section{Experiments} \label{sec:exp}

We evaluate our three clustering algorithms experimentally on UCI Machine Learning repository datasets: Glass Identification \cite{misc_glass_identification_42}, Wine Quality \cite{misc_wine_quality_186}, and Airfoil Self-Noise \cite{misc_airfoil_self-noise_291}. Due to space considerations, we defer details of our experimental setup and the plots of our experiments to \cref{sec:app_exp}. We observe that our algorithm indeed maintains an objective value within the predicted bound, and in fact significantly outperforms the worst-case bound predicted by the theorem. In the case of facility location and $k$-median, our objective value is almost the same as the best possible objective value. On all of our data sets, our algorithm's recourse seems to be $\ll T$, and even seems bounded by a constant in many cases. This justifies our focus on \emph{competitive} recourse, as opposed to absolute recourse. Interestingly, in many cases (especially for $k$-center), our online algorithm's recourse is \emph{lower} than the offline fractional optimum recourse. Note that this is possible because our algorithms is using resource augmentation, while the fractional optimum is not. Finally, we observe that in our experiments for both $k$-center and $k$-median, even though our algorithm is allowed to open $(1+\eps)$ centers, they tend to use no more than $k$ centers.

\section{Conclusion}

In this work we initiate the study of consistent clustering with competitive recourse guarantees. We show how to maintain $O(\beta)$ approximations for $k$-center, facility location and $k$-median with $O(\log |F| \log \Delta) \cdot \optr^\beta$ recourse (using $O(k)$ centers for $k$-center and $k$-median), and showed that any such algorithm must incur recourse of at least $\Omega(\min\{\log |F|, \log \Delta)\}) \cdot \optr^\beta$.

There are plenty of natural open questions. The most obvious is to close the gap between upper and lower bounds. Another interesting question is to understand whether using $(1+\eps)\cdot k$ centers (or $O(k)$ centers) is necessary, or whether one can do with exactly $k$ centers. A weaker interim goal is to give an algorithm for $k$-median which only opens $(1+\eps)\cdot k$ centers, without suffering a $1/\eps$ factor in the approximation. We note even offline rounding algorithms for $k$-median are fairly involved \cite{DBLP:journals/jcss/CharikarGTS02}, and historically came after simpler bicriteria rounding algorithms \cite{DBLP:journals/ipl/LinV92}. It would also be beneficial to better understand the relationship between algorithms with absolute resource guarantees and algorithms with competitive recourse guarantees. It would be interesting to benchmark our algorithms on real world datasets, and in particular to evaluate how they compare to existing algorithms with absolute recourse guarantees.
Finally, it would be interesting to explore competitive recourse algorithms for other clustering objectives in the fully dynamic model.

\section*{Disclosure of Funding} 
The work of Niv Buchbinder is supported in part by the Israel Science Foundation (ISF) grant no. 3001/24, and the United States - Israel Binational Science Foundation (BSF) grant no. 2022418.

\section*{Impact Statement}

This paper presents work whose goal is to advance the field of 
Machine Learning. There are many potential societal consequences 
of our work, none which we feel must be specifically highlighted here.

{\footnotesize
    \bibliography{refs,dblp}
    \bibliographystyle{icml2025}
}

\appendix

\section{Formulations as positive body chasing} \label{app:formulations}

The fractional versions of our clustering problems is captured by a framework referred to as {\em Positive Body Chasing} suggested recently in \cite{BBLS23}.
In the chasing positive bodies problem, we are given a sequence of bodies $K_{t}=\{x^{t}\in\R_{+}^{n}\mid C^{t}x^{t}\ge1,P^{t}x^{t}\le1\}$ revealed online, where $C^{t}$ and $P^{t}$ are matrices with non-negative entries. The goal is to (approximately) maintain a point $x^{t}\in K_{t}$ such that the total $\ell_{1}$-movement, $\sum_{t}\|x^{t}-x^{t-1}\|_{1}$, is minimized where $x^0=0$. 
More generally, given weight $w\in\R_{+}^{n}$, we want to minimize the weighted $\ell_{1}$-movement, $\sum_{t}w_{i}\sum_{i=1}^{n}|x_{i}^{t}-x_{i}^{t-1}|$. We sometimes refer to this $\ell_1$ movement as \emph{fractional recourse}. In \cite{BBLS23}, the authors designed an online algorithm for this problem proving the following theorem. 

\begin{theorem}[\cite{BBLS23}]\label{thm-frac} For any $\eps\in(0,1]$, there is an $O\left(\nicefrac{1}{\eps}\log\left(\nicefrac{d}{\eps}\right)\right)$-competitive algorithm for chasing positive bodies in $\ell_{1}$ such that $x^{t}\in K_{t}^{1+\eps}=\left\{ x^{t}\in R_{+}^{n}\ |\ C^{t}x^{t}\geq1,P^{t}x^{t}\leq1+\eps\right\} $ at time $t$, and $d$ is the maximal number of non-negative coefficients in a covering constraint. 
\end{theorem}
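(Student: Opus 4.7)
The plan is to prove Theorem \ref{thm-frac} by an online primal-dual analysis in the style of prior online covering/packing results, using a multiplicative-weights update as the main engine. First I would rewrite the chasing problem as a single LP with primal variables $x_j^t$ and auxiliary movement variables $\delta_j^t \geq |x_j^t - x_j^{t-1}|$, so the objective is $\sum_{t,j}\delta_j^t$ and the constraints are the per-round covering and packing constraints. Dualizing gives multipliers $\alpha_i^t$ for covering and $\beta_i^t$ for packing, together with telescoping box constraints coming from the movement variables; any feasible dual is a lower bound on $\optr$ by weak duality.

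The algorithm would run a continuous-time process at each arrival. Maintain per-coordinate weights $w_j^t$ initialized on the order of $\eps/d$. While some covering constraint $c_i^\top x < 1$ is violated, raise $\alpha_i^t$ at unit rate and simultaneously raise each coordinate $x_j$ at a rate proportional to $c_{ij} \cdot (x_j + w_j^t)$; the additive $w_j^t$ bootstraps coordinates that start at zero, and the multiplicative form delivers the $\log d$ factor. If a packing constraint $p_i^\top x > 1+\eps$ is about to be violated, raise the corresponding $\beta_i^t$ and dampen the offending coordinates by a symmetric multiplicative rule. Terminate once $x^t \in K_t^{1+\eps}$. The analysis tracks an entropy-like potential $\Phi^t$ relative to a shrunken copy of any offline comparator and establishes two local inequalities: (i) at every instant of the process, the rate of $\ell_1$ movement of $x^t$ is at most $O\left(\frac{1}{\eps}\log(d/\eps)\right)$ times the rate of growth of the dual objective plus the rate of drop of $\Phi$, and (ii) the dual iterate $(\alpha^t,\beta^t)$ stays feasible throughout. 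Integrating (i) across the continuous process and summing over rounds, and applying weak duality, yields the stated competitive ratio.

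The main obstacle is the simultaneous treatment of covering and packing. Pure covering chasing admits a clean $O(\log d)$ multiplicative-weights analysis, but packing constraints force the algorithm to \emph{decrease} coordinates in response to violations; a naive decrease either undoes necessary increases or violates dual feasibility, destroying the potential argument. The $1+\eps$ slack on packing constraints is essential here: it gives the algorithm enough head room to implement a symmetric push-back that can be charged against $\beta_i^t$ without damaging the multiplicative-weights potential, and it is precisely this slack that produces the $1/\eps$ factor and the $\log(1/\eps)$ piece in the bound. The delicate step is calibrating the coordinate-wise rates so that (i) and (ii) hold simultaneously under both increase and decrease phases, with the correct $O(\log(d/\eps)/\eps)$ constant; choosing $w_j^0$ at order $\eps/d$ then controls both the starting entropy and the additive slack needed to keep the dual feasible.
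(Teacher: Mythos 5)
This theorem is not proved in the paper you are working from: it is imported verbatim from \cite{BBLS23} and used as a black box (the paper's contribution begins \emph{after} this statement, with the rounding of the fractional solution it produces). So there is no internal proof to compare against, and the right benchmark for your proposal is the original argument of \cite{BBLS23}.

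Your sketch has the correct flavor --- multiplicative updates seeded with an additive term of order $\eps/d$, a KL/entropy-type potential measured against an offline comparator, and the observation that the $(1+\eps)$ packing slack is what buys the $1/\eps$ factor are all genuine ingredients of that proof. But as written it is an outline, not a proof: the two local inequalities you label (i) and (ii) carry essentially all of the technical content of the theorem, and you assert rather than verify them, explicitly deferring ``the delicate step'' of calibrating the rates. The specific point where a naive version of your plan breaks is the interaction between the two phases: decreasing coordinates to repair a packing violation can re-violate covering constraints (and vice versa), so one must show that the alternating continuous process terminates and that the total movement over all such oscillations within a single round is still chargeable --- this is precisely what the relaxed body $K_t^{1+\eps}$ and the careful choice of decrease rule are for in \cite{BBLS23}, and it is absent from your argument. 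A second gap is the comparison benchmark: the theorem bounds movement against the best \emph{offline sequence} of points in the exact bodies $K_t$, not against a single static LP optimum, so the ``weak duality'' step needs the dual of the full time-indexed LP with the telescoping movement constraints you mention; you name these constraints but never use them, and without handling them the dual objective does not lower-bound $\optr$. In short: right roadmap, but the proof is not there.
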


We next list the dynamic clustering problems that we study, formulate each problem in the framework of Theorem \ref{thm-frac}, and derive the guarantees on the fractional solution that is produced in an online fashion.

\subsection{Fractional dynamic $k$-center}
\label{pre:kcenter}
Let $x^t_i$ be an indicator for the opening of a center at position $i\in M$ at time $t$. 
For each $j\in C^t$ let $B^t_j=\{i\in M \ | \ d_{ij}\leq \min\{\beta\cdot OPT^t,\Delta\}\}$ be the set of points that are within distance at most $\beta\cdot OPT^t$ from client $j$. The set $B^t_j$ is the set of points that can serve the client at time $t$ in a $\beta$-approximate solution. Given these variables, the following is a formulation of the offline fully dynamic $k$-center problem.  
\begin{align*}\left\{\min \sum_{t=1}^{T}\|x^t- x^{t-1}\|_1  \ \middle\vert \ \begin{array}{l}
\sum_{i\in B^t_j}x^t_i \geq 1, \\
\qquad \qquad \forall  j\in C^t, t \in [T] \\
\sum_{i=1}^{n}x_i^t \leq k, \\
\hspace{1.01in} \forall t \in [T] \\
x^t_{i} \geq 0, \\
\hspace{.59in} \forall i\in M, t \in [T]
	\end{array}\right\}
\end{align*}
Let $x^{*}$ be the optimal fractional solution, and let $\optr^{\beta}=\sum_{t=1}^{T}\|x^{*t}- x^{*(t-1)}\|_1$. As the set of constraints at time $t$ is a covering/packing formulation, the fractional fully dynamic $k$-center problem is captured by the positive body chasing problem. Hence, by Theorem \ref{thm-frac} we get the following guarantee. For any $\eps\in(0,1]$, there is an online algorithm that produces a fractional solution $x\geq 0$ to the fully dynamic $k$-center problem such that,
\begin{align*}
	\begin{array}{ll}
		\displaystyle \sum_{t=1}^{T}\|x^t- x^{t-1}\|_1  \leq O\left(\frac{\log (n/\eps)}{\eps}\right) \cdot \optr^{\beta}, \\
		\displaystyle \sum_{i \in B^t_j}x_i^t  \geq 1   &\hspace{-0.5in} \forall  j\in C^t, t\in[T], \\
		\displaystyle \sum_{i=1}^{n}x_i^t  \leq (1+\eps) k  &\hspace{-0.05in} \forall t \in [T].
	\end{array}
\end{align*}
\subsection{Fractional facility location}\label{pre:forFacility}
Let $x^t_i$ be an indicator for the opening of a facility at position $i\in F$ at time $t$. Let $y_{ij}^t$ be an indicator for serving client $j$ with facility $i$ at time $t$. 
Given these variables, the following is a formulation of the offline fully dynamic facility location problem.  
\begin{align*}
	\left\{\min \sum_{t=1}^{T}\|x^t- x^{t-1}\|_1 \ \middle \vert \
	\begin{array}{l}
		\displaystyle\sum_{i\in F}y^t_{ij} \geq 1, \\
        \qquad \qquad \forall j\in C^t, t \in [T]\\
		    y^t_{ij} \leq x^t_i, \\
            \ \quad \forall j\in C^t, i \in F,  t\in[T]\\
		        \displaystyle \sum_{i\in F}f_i x_i^t + \sum_{i\in F,j\in C^t}d_{ij}y^t_{ij} \\
                \quad\leq \beta\cdot OPT^t,  \ \ \   \forall t\in[T] \\
		            x^t_i, y^{t}_{ij}  \geq 0,  \\
                    \ \quad \forall i \in F,j \in C^t, t \in [T]
	\end{array}
    \right\}
\end{align*}
Although the above formulation is not a covering-packing LP due to the constraint $y^t_{ij} \leq x^t_i$, we show in Appendix \ref{app:covering} that it is possible to work with an equivalent formulation that is a covering-packing. 
This means that the fractional fully dynamic facility location problem is captured by the positive body chasing problem. In particular, we compete with an optimal solution to the modified formulation which is at most the optimal solution to the original formulation, and can transform the resulting fractional solution online to a solution to the original LP at no additional cost. Note also that we do not pay recourse for changing the variables $y_{ij}$. This is captured by giving these variables weight zero in the weighted $\ell_1$-norm of the positive body chasing problem.
Let $x^{*}, y^{*}$ be the optimal fractional solution, and let $\optr^{\beta}=\sum_{t=1}^{T}\|x^{*t}- x^{*(t-1)}\|_1$.  Hence, by Theorem \ref{thm-frac} we get the following guarantee. For any $\eps\in(0,1]$, there is an online algorithm that produces a fractional solution $(x,y)\geq 0$ to the fully dynamic facility location problem such that,
\begin{align*}
	\begin{array}{ll}
		\displaystyle \sum_{t=1}^{T}\|x^t- x^{t-1}\|_1 \leq O\left(\frac{\log (|F|/\eps)}{\eps}\right) \cdot \optr^{\beta}, \\
		\displaystyle \sum_{i\in F}y^t_{ij} \geq 1 & \hspace{-0.65in} \forall j\in C^t, t\in[T],\\
		\displaystyle y^t_{ij} \leq x^t_i & \hspace{-1.02in} \forall i \in F, j\in C^t, t\in[T],\\
		\displaystyle \sum_{i\in F}f_i x_i^t + \sum_{i\in F,j\in C^t}d_{ij}y^t_{ij} \leq (1+\eps)\cdot \beta\cdot OPT^t  & \vspace{-0.15in}\\
        & \hspace{-0.2in} \forall t\in[T].
	\end{array}
\end{align*}
\subsection{Fractional $k$-median}\label{pre:forkmedian}
Let $x^t_i$ be an indicator for the opening of a center at position $i\in M$ at time $t$. Let $y_{ij}^t$ be an indicator for serving client $j$ with center $i$ at time $t$. 
Given these variables, the following is a formulation of the offline fully dynamic $k$-median problem.
\begin{align*}
	\left\{\min \sum_{t=1}^{T}\|x^t- x^{t-1}\|_1 \ \middle \vert \ 
	\begin{array}{l}
		\displaystyle \sum_{i\in M}y^t_{ij} \geq 1,  \\  \qquad \qquad \forall j\in C^t, t \in [T] \\
		    \displaystyle y^t_{ij}  \leq x^t_i, \\ 
            \quad    \forall j\in C^t, i \in M, t \in [T]\\
      \displaystyle  \sum_{i\in M}x^t_i \leq k, \\  
      \hspace{1.01in}   \forall t \in [T]\\ 
	\displaystyle  \sum_{\substack{i \in M \\j\in C^t}}d_{ij}y^t_{ij} \leq \beta \cdot OPT^t, \\
    \hspace{1.01in}\forall t\in[T]
	\end{array}\right\}
\end{align*}

As in the case of facility location, we can transform this into a covering-packing formulation (See Appendix \ref{app:covering}). Once again by \cref{thm-frac}, for any $\eps\in(0,1]$, there is an online algorithm that produces a fractional solution $(x,y)\geq 0$ to the fully $k$-median problem such that:
\begin{align*}
	\begin{array}{ll}
		\displaystyle \sum_{t=1}^{T}\|x^t- x^{t-1}\|_1 \leq O\left(\frac{\log (\frac{n}{\eps})}{\eps}\right)\cdot \optr^\beta, \\
		\displaystyle\sum_{i\in M}y^t_{ij} \geq 1 & \hspace{-0.47in} \forall j\in C^t, t\in[T],\\
		\displaystyle y^t_{ij} \leq x^t_i & \hspace{-0.88in} \forall i \in M, j\in C^t,  t\in[T],\\
		\displaystyle \sum_{i\in M}x^t_i \leq (1+\eps) \cdot k & \forall t\in[T],\\ 
		\displaystyle \sum_{\substack{i\in M \\ j\in C^t}}d_{ij}y^t_{ij} \leq (1+\eps) \cdot \beta \cdot OPT^t  & \forall t \in [T].
	\end{array}
\end{align*}

\section{A covering-packing LP formulation for the facility location and the $k$-median Problems}\label{app:covering}

We show here that the facility location problem can be captured by a covering-packing formulation. The arguments for the $k$-median problem are similar.

Recall, the formulation in Section \ref{pre:forFacility}. Let $x^t_i$ be an indicator for the opening of a facility at position $i\in F$ at time $t$. Let $y_{ij}^t$ be an indicator for serving client $j$ with facility $i$ at time $t$. Let $OPT^t$ be the optimal facilitiy location objective at time $t$. Given these variables, the following is a formulation of the offline fully dynamic facility location problem.  
\begin{align*}
&\min \sum_{t=1}^{T}\|x^t- x^{t-1}\|_1, \\
    &\sum_{i\in F}y^t_{ij} \geq 1 &&  \hspace{-0.18in} \forall j\in C^t, t \in [T],\\
    &y^t_{ij}  \leq x^t_i &&  \hspace{-0.18in} \forall j\in C^t, t \in [T],\\
    &\sum_{i\in F}f_i x_i^t + \sum_{i\in F,j\in C^t}d_{ij}y^t_{ij} \leq \beta\cdot OPT^t  &&  \hspace{0.27in}\forall t \in [T],\\
    &x^t_i, y^{t}_{ij}  \geq 0 &&  \hspace{-0.62in} \forall i \in F, \ j \in C^t, \  t \in [T].
\end{align*}

The above formulation is not a covering-packing formulation. However, it is not hard to transform it into a covering packing formulation by removing the constraint $y^t_{ij} \leq x^t_i$ and introducing in addition to the constraint $\sum_{i\in F}y^t_{ij} \geq 1$, the following exponential number of constraints:
\begin{equation}
 \sum_{i\in F'}y^t_{ij} + \sum_{i\in F\setminus F'} x_i^t \geq 1  \qquad \forall  j\in C^t, F'\subseteq F, t \in [T]. \label{new-constraint}
\end{equation}

We observe the following.

\begin{lemma}
    The modified formulation is separable in polynomial time. Moreover,
    \begin{itemize}
        \item Any solution to the original LP is feasible to the new formulation.
        \item A solution to the modified formulation can be transformed online into a feasible solution to the original formulation with the same recourse.
    \end{itemize}
\end{lemma}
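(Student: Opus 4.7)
The three bullet points are essentially independent, so the plan is to handle them in order, starting with the easy one (feasibility), then separation, then the online transformation.

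\textbf{Feasibility of original solutions in the modified LP.} First, I would verify that if $(x^t, y^t)$ satisfies $y^t_{ij} \leq x^t_i$ for every $i, j$, then for every $F' \subseteq F$,
\[
    \sum_{i \in F'} y^t_{ij} + \sum_{i \in F \setminus F'} x^t_i \;\geq\; \sum_{i \in F'} y^t_{ij} + \sum_{i \in F \setminus F'} y^t_{ij} \;=\; \sum_{i \in F} y^t_{ij} \;\geq\; 1,
\]
so \eqref{new-constraint} holds. The opening-cost/service-cost packing constraint and the covering constraint $\sum_i y^t_{ij}\ge1$ are present in both formulations, so nothing else needs to be checked.

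\textbf{Polynomial-time separation.} For fixed $(j, t)$, the left-hand side of \eqref{new-constraint} decomposes over coordinates: for each facility $i$, the choice is to contribute $y^t_{ij}$ (if $i\in F'$) or $x^t_i$ (if not). The minimum over $F'$ is therefore attained by placing $i\in F'$ iff $y^t_{ij}\leq x^t_i$, and the minimum value equals $\sum_{i\in F} \min\{y^t_{ij}, x^t_i\}$. So separating the exponential family reduces to computing this sum for each $(j,t)$ and comparing it to $1$, which is clearly polynomial. The remaining constraints of the modified LP (the single covering constraint per client, the packing cost budget, and nonnegativity) are polynomially many and trivially separable.

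\textbf{Online transformation to the original LP.} Given any feasible solution $(x^t,y^t)$ of the modified LP, define
\[
    \tilde y^t_{ij} \;=\; \min\{y^t_{ij},\, x^t_i\}.
\]
The construction is purely pointwise in $t$ and uses only the current iterate, so it is trivially online. By construction $\tilde y^t_{ij}\leq x^t_i$, and applying the separation identity together with \eqref{new-constraint} for $F' = \{i : y^t_{ij} \leq x^t_i\}$ gives $\sum_i \tilde y^t_{ij} = \sum_i \min\{y^t_{ij}, x^t_i\} \geq 1$, so the client-covering constraint holds. Since $\tilde y^t \leq y^t$ coordinatewise, the service cost $\sum_{ij} d_{ij}\tilde y^t_{ij}$ only decreases, and the opening cost $\sum_i f_i x^t_i$ is unchanged, so the packing cost constraint is preserved. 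Crucially, the $x^t$'s are not modified, so $\sum_t\|x^t-x^{t-1}\|_1$ is identical in both trajectories; combined with the fact (already noted in the paper) that the $y$ variables carry weight zero in the fractional recourse, the transformation preserves recourse exactly.

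\textbf{Main obstacle.} There is no deep obstacle here; the only point that requires a moment of care is making the separation argument match exactly the way the exponential family is indexed (the ``right'' $F'$ to consider is the one consisting of facilities where $y$ is already smaller than $x$), and verifying that truncation does not break the cost packing constraint, which is immediate from monotonicity. The lemma for $k$-median is analogous: one simply repeats the same construction, and the capacity constraint $\sum_i x^t_i \leq k$ is untouched since only $y^t$ is modified.
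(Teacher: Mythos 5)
Your proposal is correct and follows essentially the same route as the paper's proof: the same one-line verification that $y\le x$ makes the exponential family of constraints redundant for original solutions, the same observation that the minimum over $F'$ is $\sum_i\min\{y^t_{ij},x^t_i\}$ attained at $F'=\{i: y^t_{ij}\le x^t_i\}$ (which gives both separation and the feasibility of the truncated solution), and the same truncation $\tilde y^t_{ij}=\min\{y^t_{ij},x^t_i\}$ leaving $x^t$ untouched so recourse is preserved. Your explicit check that the truncation does not violate the cost packing constraint is a small completeness bonus the paper leaves implicit.
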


\begin{proof}
    Let $(x,y)$ be a solution to the original formulation. We need to prove that it satisfies all the new constraints of the form \eqref{new-constraint}. This is true since,

 \[\sum_{i\in F'}y^t_{ij} + \sum_{i\in F\setminus F'} x_i^t \geq \sum_{i\in F}y^t_{ij} \geq 1,\]   
where the first inequality holds since $y^t_{ij}\leq x_i^t$, and the second inequality holds by the original constraint that $\sum_{i\in F}y^{t}_{ij}\geq 1$.

On the other direction, if $(x,y)$ is a solution to the modified LP, then setting $y_{ij}^{'t} = \min\{y_{ij}^t, x_i^t\}$ satisfies the constraint $y^t_{ij}\leq x_i^t$. Next, if $\sum_{i\in F}y_{ij}^{'t} = \sum_{i\in F}\min\{y_{ij}^t, x_i^t\} <1$, it means that for $F'= \{i\in F \mid x^t_i\geq y_{ij}^t\}$, 
\[\sum_{i\in F'}y^t_{ij} + \sum_{i\in F\setminus F'} x_i^t = \sum_{i\in F}\min\{y_{ij}^t, x_i^t\} < 1. \] 
This contradicts the feasibility of the solution $x,y$ to the modified LP. Hence, the solution $y_{ij}^{'t}$ is feasible to the original LP. Moreover, since we didn't modify the variable $x_i^t$, it has the same recourse.

Finally, given a solution $x,y$ to the modified LP checking for each client $j$, whether the constraint with $F'= \{i\in F \mid x^t_i\geq y_{ij}^t\}$ satisfies $\sum_{i\in F'}y^t_{ij} + \sum_{i\in F\setminus F'} x_i^t = \sum_{i\in F}\min\{y_{ij}^t, x_i^t\} \geq 1$ suffices for the feasibility of the LP, and otherwise we get a violated constraint.  
\end{proof}

\section{Lower bounds}\label{sec:lb}

In this section we prove the following lower bounds for any randomized dynamic clustering algorithm. We draw ideas from \cite{Fotakis08}.

\lbmainthm*

\begin{proof}
	The lower bounds share common ideas, but the exact metric, the adversarial sequence, and the cost analysis are different for each clustering problem. Therefore, we prove the lower bounds separately for each problem. For all these problems, we prove a lower bound for an online fractional solution that may generate a fractional solution to the linear formulation. Since the marginal probabilities of any randomized algorithm induce a fractional solution with at most the same cost,  such a bound immediately translates to a bound on any randomized algorithm. 
	
	We will reuse the same base metric for our bounds. Let $H$ be a uniform binary $(4c)$-HST with $n$ leaves and height $h=\log_2 n$. The leaves of the tree are at level $0$, and the root of the tree, $r$, is at level $h$. The edges that connect a node at level $i-1$ to a node at level $i$ (for $i=1, \ldots, h-1$) have cost $(4c)^{i-1}$. For a non-leaf node $w$, let $w_R$ and $w_{L}$ be the right and left child of $w$ respectively. 
	
	\begin{figure}
		\centering
        \scalebox{0.75}{\input{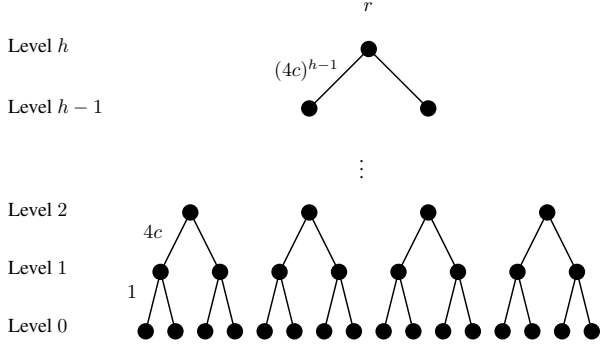}}
		\caption{Illustration of the HST-metric $H$.  \label{fig:hst_metric}}
	\end{figure}

	\paragraph{A lower bound for the fully-Dynamic $k$-center problem.}
	
	Assume that there exist a fractional algorithm that maintains at all times at most $k$ centers with an approximation ratio strictly less than $c$ with respect to the optimal solution at time $t$ (we later extend this bound to a scenario in which the algorithm is allowed to pen $b\cdot k$ facilities). We prove that such an algorithm has recourse competitiveness $\Omega(\min\{\log n, \log_c\Delta\})$, and our bounds hold even when $k=1$. Once again we reuse the $(4c)$-HST $H$ defined above. This time, the metric space, as well as the possible locations for the clients and facilities is defined only by the $n$ leaves of the tree and the distances induced by the binary HST. 
	
	The adversarial sequence again happens in phases, and each phase is divided into $h$ time steps. At time $0$, initialize the node $r^0$ to be $r$, the global root of the metric $H$. For every time $t = 0, \ldots, h$, the adversary creates two clients at $u^{t}_L$ and $u^{t}_R$, where these are the leftmost and rightmost leaves descendant of $r^t$ respectively. The adversary then removes any clients requested in the last time step. Let $x^t_L$ and $x^t_R$ be the total fraction of facilities that the algorithm opens in the subtrees of $r^t_L$ and $r^t_R$ (the two children of the node $r^t$). 
	If $x^t_L\geq x^t_R$ then set the node $r^{t+1} := r^t_R$, otherwise set $r^{t+1}:= r^t_L$. Finally, at iteration $t=h$ there is a single client located at a single leaf $v$. After time $h$, the adversary removes all clients, and we may repeat this phase/sequence starting from the root.

	We observe that in every phase, opening a center on the last node $v$ of the sequence gives the lowest possible value for the $k$-center objective for every time $t$ in the phase simultaneously. The cost of this solution at any round $t=0,\ldots, h-1$ is 
	\[\opt^t=2\cdot \sum_{k=0}^{h-1-t}(4c)^{k} \leq 2 \cdot (4c)^{h-1-t} .\]
	At time $t=h$, we have $\opt^h = 0$. The recourse of this solution is $1$ per phase.

	On the other hand, we also observe that if at time $t=1, \ldots, h-1$ the algorithm has $x^t_{L}+x^t_{R}\leq 1/2$, then the algorithm's solution has cost at least $c\cdot \opt^t$. To see this, note that the two clients at $u^t_L$ and $u^t_R$ must be matched fractionally to extent at least $1/2$ to a center outside the subtree rooted at $r^t$. The distance to this center is at least twice the cost of the edge between $r^t$ and its parent $r^{t-1}$, which is at least \[\frac{1}{2} \cdot 2\cdot (4c)^{h-t} >c \cdot 2 (4c)^{h-t-1} \geq c \cdot \opt^t.\] 
	
	Therefore, any algorithm maintaining an approximation better than $c$ with respect to $\opt^t$ must have at each time $t=1, \ldots, h-1$ at least a $1/2$ fraction of a center open in the subtree of $r^t$. Since the algorithm can open at most one center at any given time, and by construction of the adversarial sequence, at every time $t=0, \ldots, h-1$, the algorithm has more fractional mass in the child of $r^t$ that is not $r^{t+1}$, the algorithm must open at least $1/4$ of a center to maintain a $c$-approximation. Hence the algorithm must open a total of $\Omega(h)$ centers over the course of the phase, where we observe that size of the metric is $n = 2^h$ and the aspect ratio is $\Delta = O((4c)^h)$. This sequence may be repeated.
	
	Finally, suppose that the algorithm is allowed to maintain a total mass of $b\cdot k=b$ facilities at each time step. Then, at the beginning of each phase, there exists a subtree of height $h'= \Omega(h- \log_2 b)$ with total initial mass of at most $1/4$ (the subtree with minimal mass among the $4b$ disjoint subtrees of height $h-\log_2(4b)$). The adversary can restrict its sequence to this subtree and the algorithm, again, must open $1/4$ of a facility in each level of the tree, paying a total opening cost of $\Omega(h- \log_2 b)$ during the phase. Overall, the lower bound on the recourse competitiveness is $\Omega(\min\{\log n, \log_c\Delta\} -\log b)$.
 \end{proof}
 
	\paragraph{A lower bound for the fully-dynamic facility location problem.}
	Assume that there exists a fractional algorithm that maintains at all times an approximation ratio strictly less than $c$ with respect to the optimal solution at time $t$. We prove that such an algorithm has recourse competitiveness $\Omega(\min\{\log |F|, \log_c\Delta\}$.
	
	The underlying metric is the $(4c)$-HST $H$ defined above. The set of possible facility locations, $F$, is the set of $n$ leaves of the tree, but we allow clients to arrive at any internal node. Finally, the cost of opening a facility is $f=(4c)^{h-1}$ (in other words, the cost is uniform). 
	
	The request sequence is in phases, each of which is divided into $h$ time steps. At time $0$ there is a single client located at the root. Henceforth, for all time $t=1, \ldots, h$, let $u^t$ be the node at which clients appeared at time $t-1$, and let $x^t_L$ and $x^t_R$ be the fractional mass of facilities that the algorithm allocates in the subtrees of $u^t_L$ and $u^t_R$ respectively. 
	If $x^t_{L}\geq x^t_{R}$ then  the adversary generates $(4c)^{t}$ clients at $u^t_R$; if $x^t_{L}< x^t_{R}$ then the adversary generates $(4c)^{t}$ clients at $u^t_L$ instead. The adversary then removes any client requests at $u^t$. Eventually, at time $h$, a set of $(4c)^{h}$ clients arrive at some leaf $v$. After time $h$, all clients leave, and we may repeat this phase/sequence starting from the root.
	
	Note that in every phase, opening a single facility on the last leaf $v$ of the sequence gives the lowest possible value for the facility location objective at every time $t$ in this phase. This strategy uses total recourse of $1$ per phase (opening a single facility at the beginning of the phase and closing it at the end of the phase). The cost of this solution at any round $t=0, \ldots, h-1$ is:
	\begin{align*}
		\opt^t & = f+ (4c)^{t} \cdot \sum_{k=0}^{h-t-1}(4c)^{k} \\
        &\leq  (4c)^{h-1} + (4c)^{t} \cdot 2 (4c)^{h-t-1} \leq 3 \cdot (4c)^{h-1}.  
	\end{align*}
	The first inequality is since at any time $t$ the optimal solution opens a single facility and there are $(4c)^{t}$ clients that are at distance $\sum_{k=0}^{h-t-1}(4c)^{k}$ from the facility that is located in a leaf of their subtree.

	We next observe that if in round $t=1, \ldots, h-1$ if it holds that $x^t_{L}+x^t_{R}\leq 1/2$, then the algorithm pays a cost of at least $c\cdot \opt^t$. To see this, note that if this condition holds then the set of clients at time $t$ must be matched fractionally to extent at least $1/2$ to a facility outside the subtree of $u^t$. The service cost of this fraction is at least twice the weight of the edge between $u^t$ and its parent, and hence the algorithm's service cost for these $(4c)^t$ clients at time $t$ is at least \[\frac{1}{2} \cdot (4c)^{t} \cdot 2\cdot (4c)^{h-t} =(4c)^{h}>c \cdot 3 (4c)^{h-1} = c \cdot \opt^t.\] 
	
	Finally, we can conclude that any algorithm that maintains an approximation better than $c$ with respect to $\opt^t$ must open at each time $t=1, \ldots, h-1$ at least a $1/2$ fraction of a facility in the subtree of $u^t$. By the construction of the adversarial sequence, it must also open at least a $1/4$ facility in the subtree in which there are no additional clients at time $t+1$.
	In total, the algorithm opens at least $\frac{h-1}{4}=\Omega(h) = \Omega(\min\{\log |F|, \log_c\Delta\}$ facilities, where we observe that $|F|=2^{h}$ and $\Delta= O(4c^{h})$. When the clients at time $t=h$ leave, the optimal solution does not have any facilities and therefore the algorithm also must remove all its facilities. Hence, we may repeat the phase/sequence again by initiating a new client at the root.   

\paragraph{A lower bound for the fully-dynamic $k$-median problem.}

 \begin{figure*}
\begin{center}
    \subfloat[]{\includegraphics[width=0.32\textwidth]{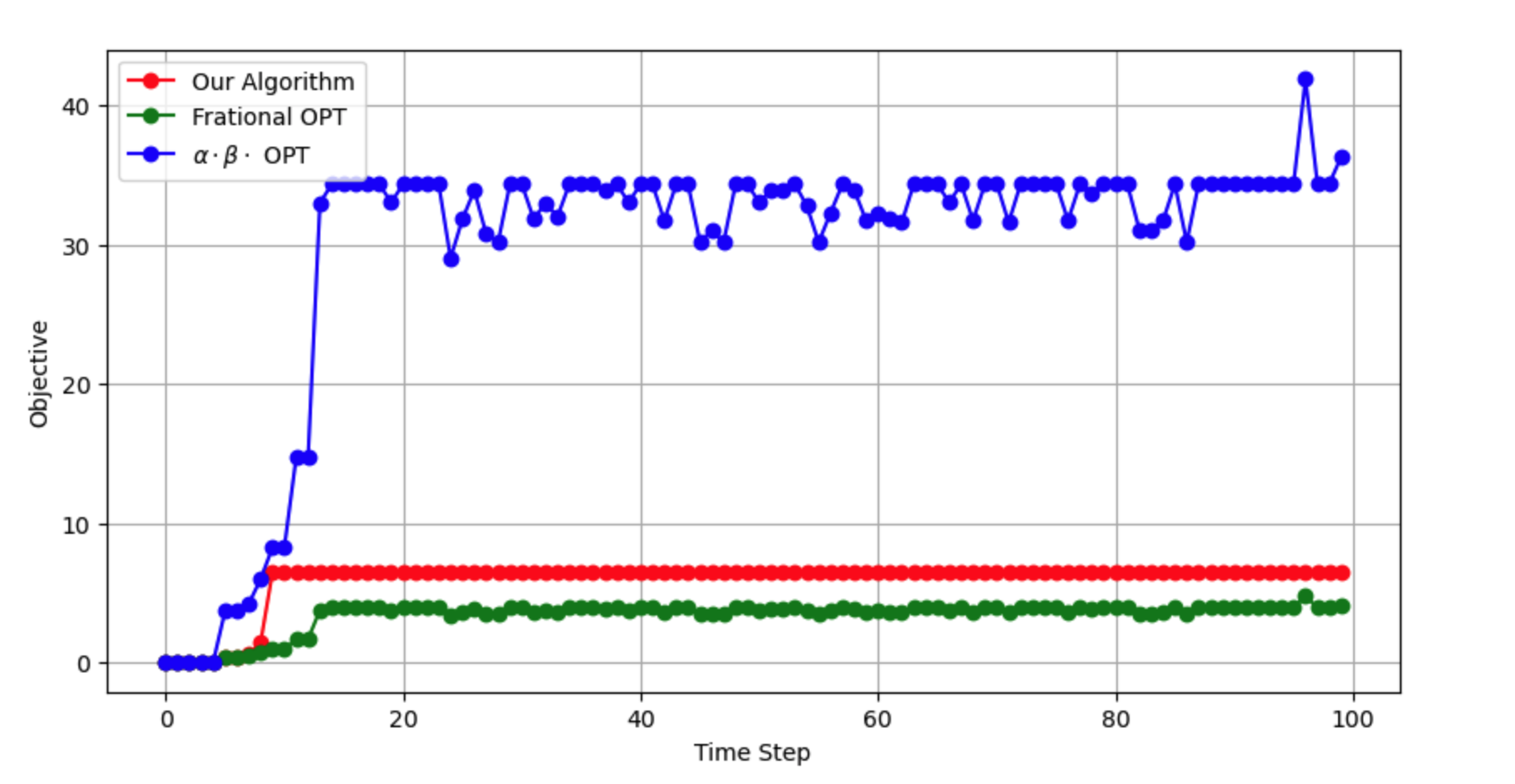}} 
	\subfloat[]{\includegraphics[width=0.32\textwidth]{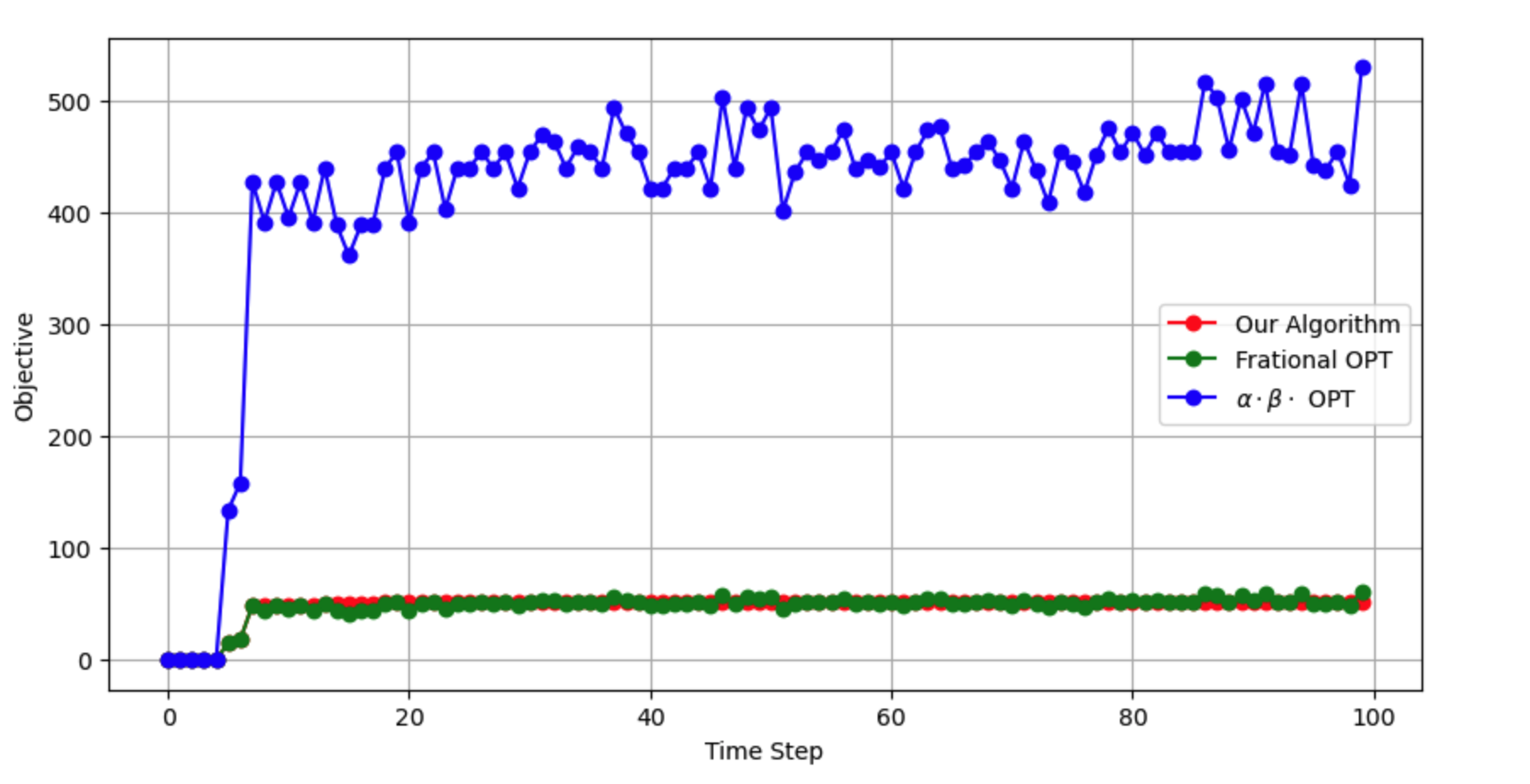}} 
	\subfloat[]{\includegraphics[width=0.32\textwidth]{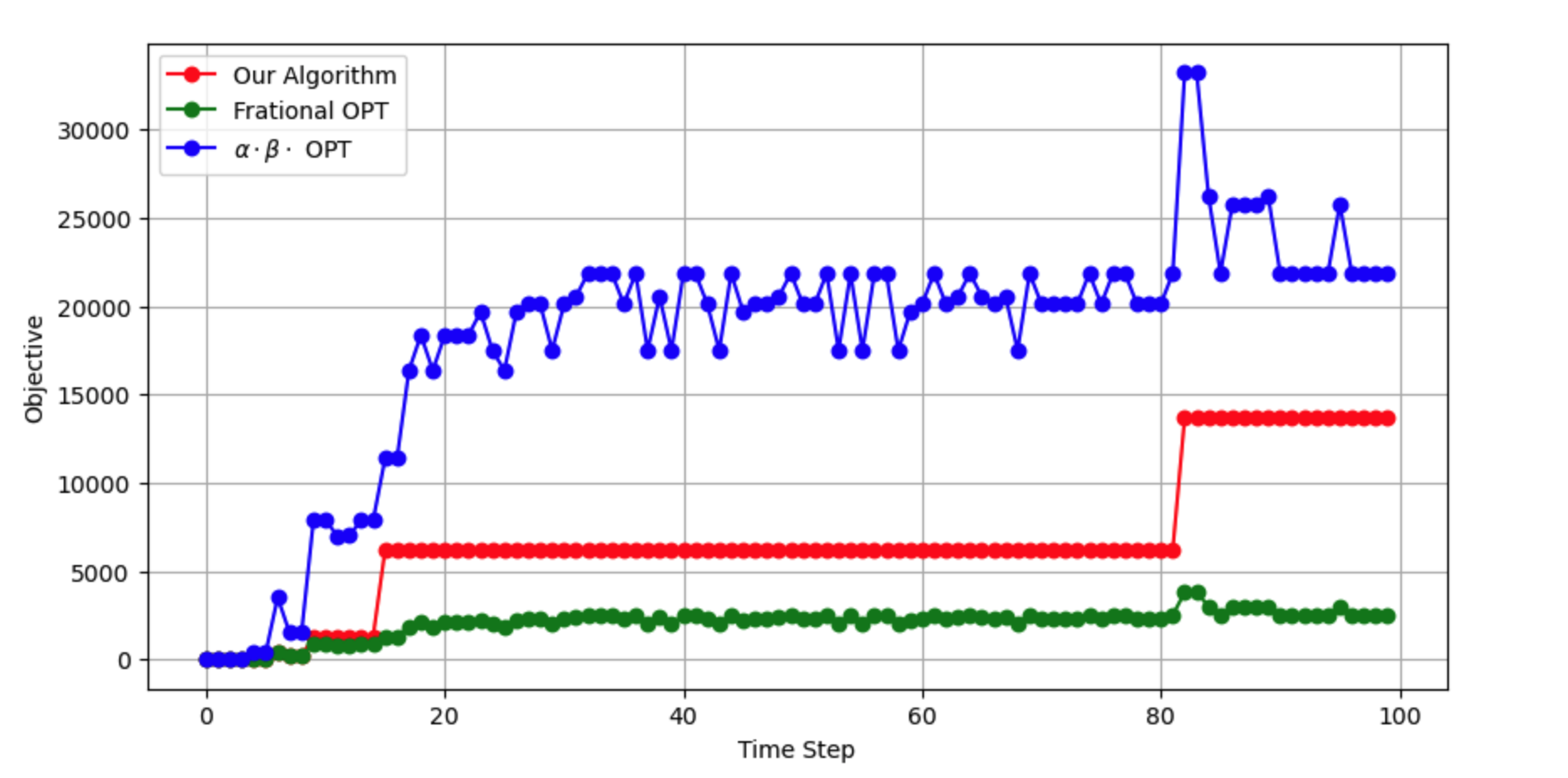}} 
	\caption{$k$-Center Objective}
    	\subfloat[]{\includegraphics[width=0.32\textwidth]{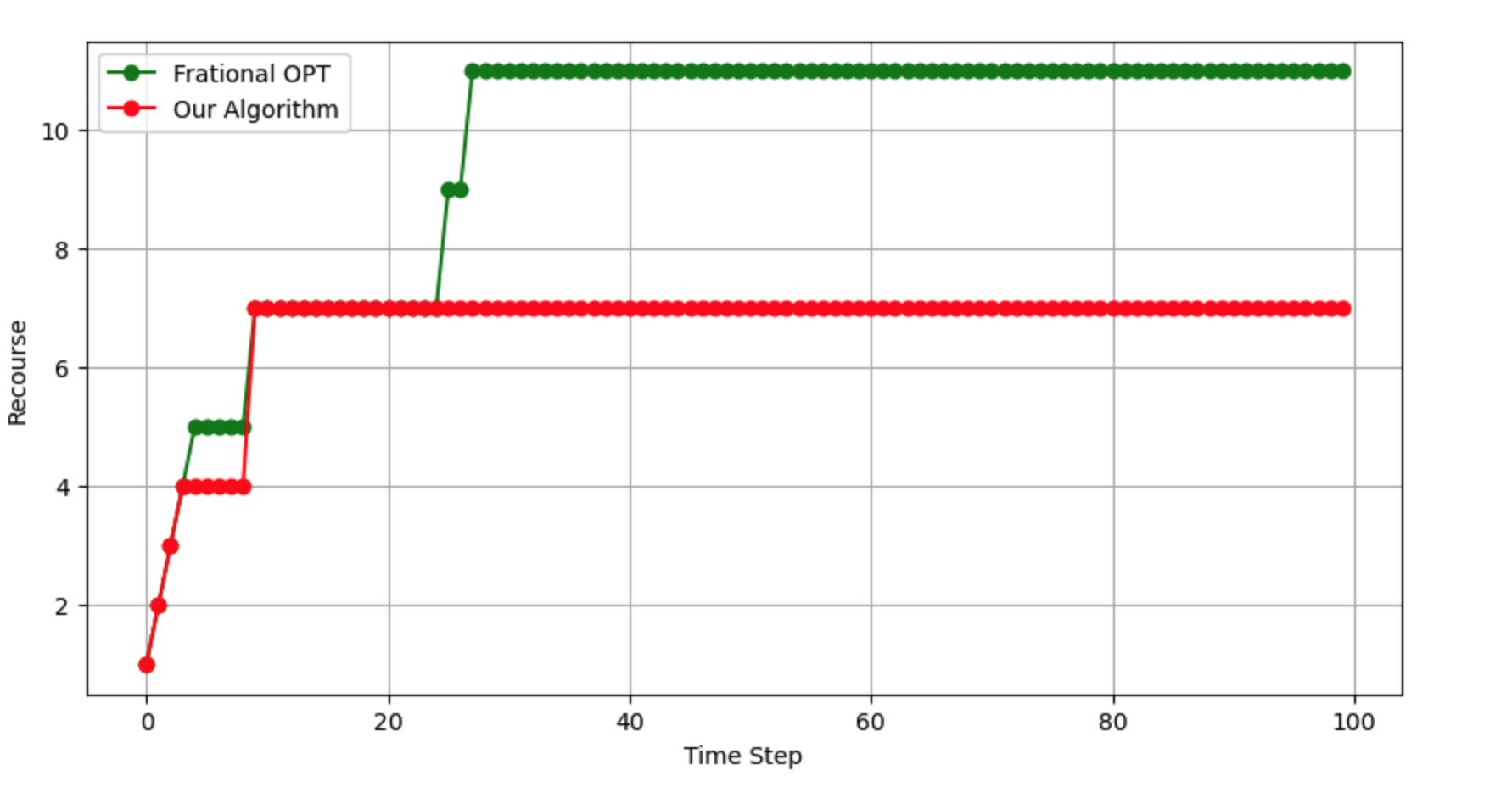}} 
	\subfloat[]{\includegraphics[width=0.32\textwidth]{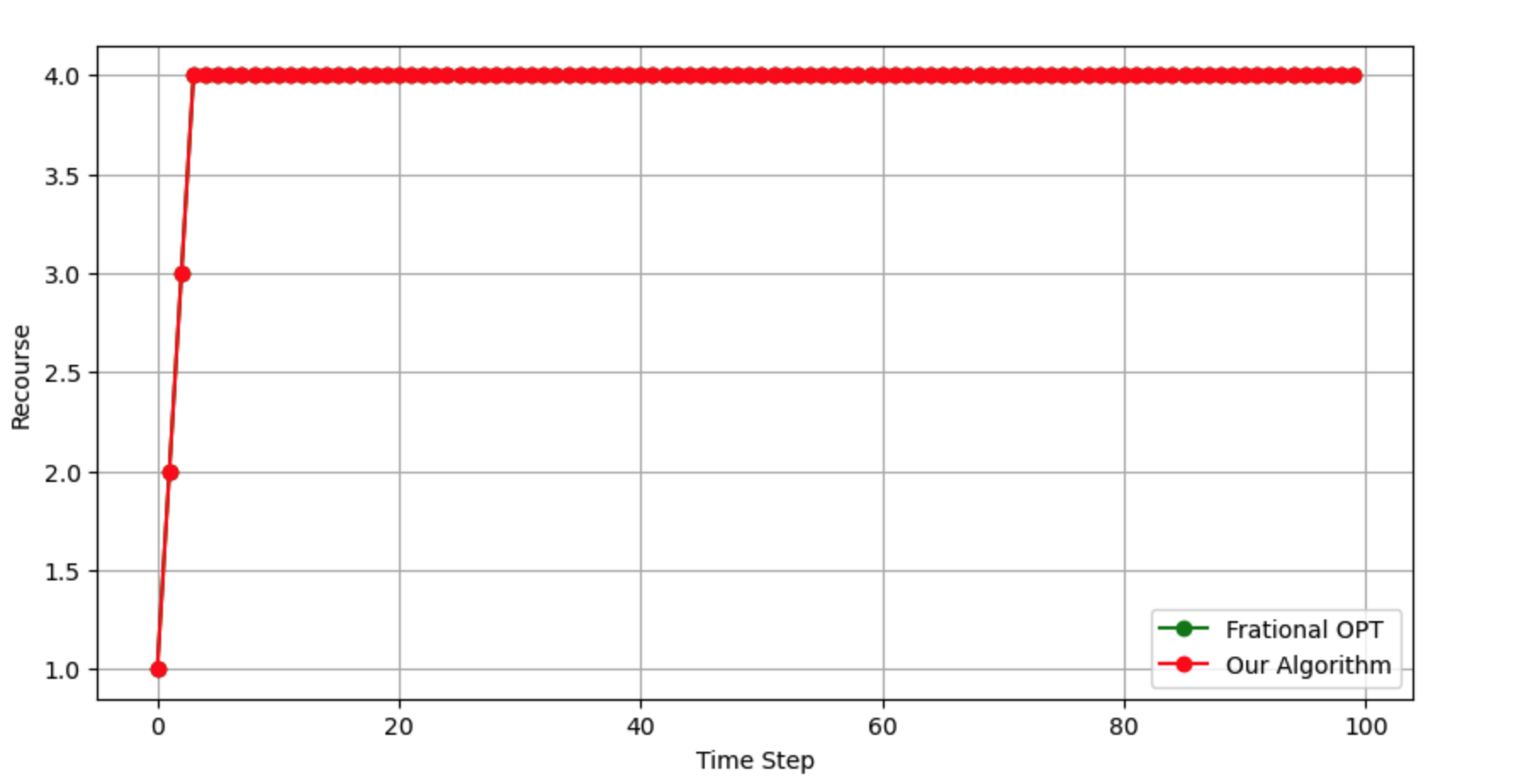}} 
	\subfloat[]{\includegraphics[width=0.32\textwidth]{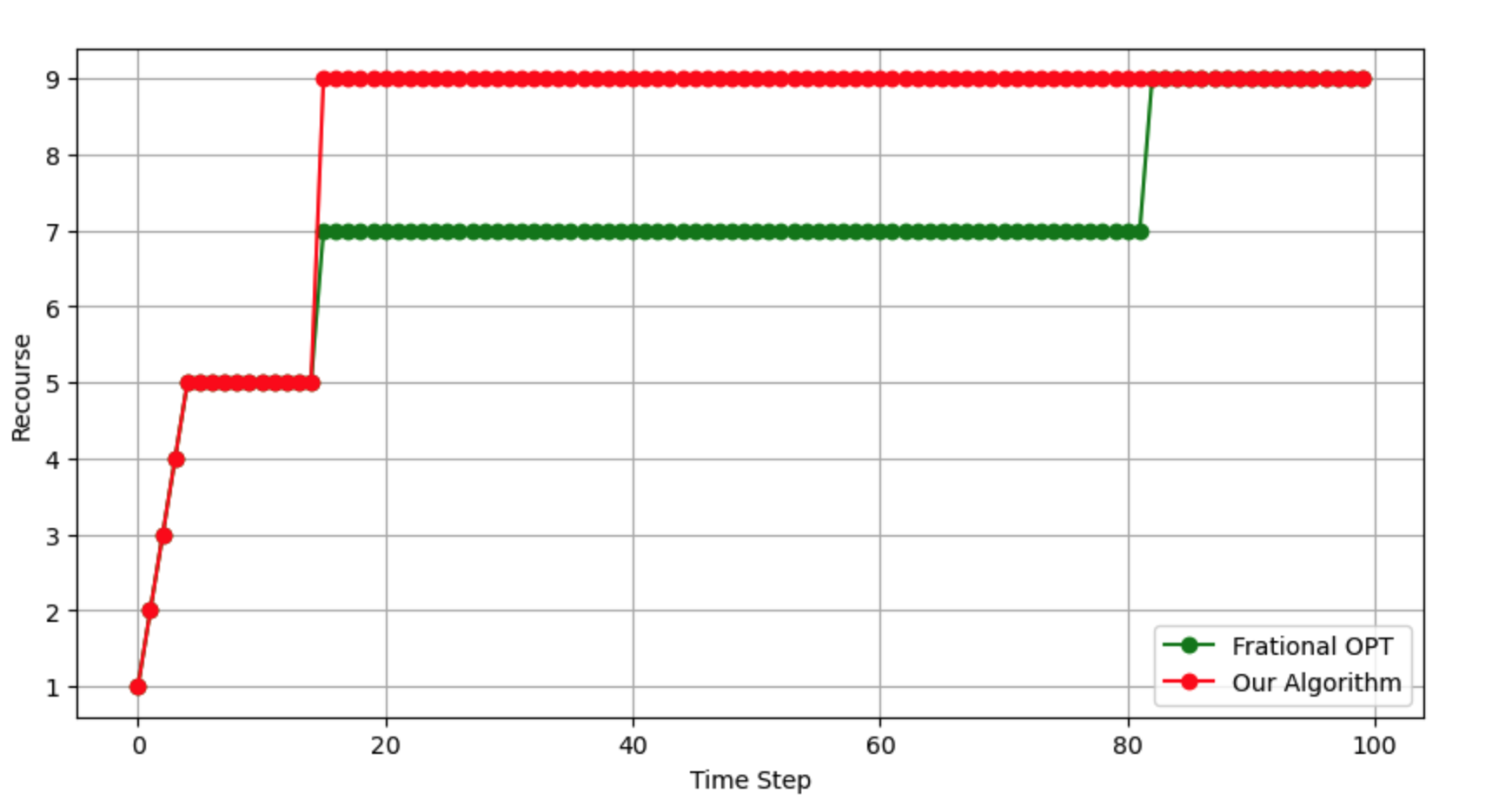}} 
    	\caption{$k$-Center Recourse}
\end{center}
 \end{figure*}

Assume that there exist a fractional algorithm that maintains at all times at most $k$ centers with an approximation ratio strictly less than $c$ with respect to the optimal solution at time $t$ (we later extend this bound to a scenario in which the algorithm is allowed to open $b\cdot k$ facilities). We prove that such an algorithm has recourse competitiveness $\Omega(\min\{\log n, \log_c\Delta\})$ with respect to an algorithm that may maintain a $2$-approximate solution.
That is, the algorithm is paying $\Omega(\min\{\log n, \log_c\Delta\}) \cdot \optr^{2}$
Our bounds again hold even when $k=1$. 
	
The metric space and the client sequence are identical to those in the $k$-center instance above. This time, we observe that in each phase, opening a single center on the final $v$ of the sequence is at most a $2$-approximation with respect to $\opt^t$ for all time steps $t$ in the phase simultaneously. (For every $t$, the optimal solution is to open a center on one of $u^t_L$ or $u^t_R$.) The cost of this solution at any round $t=0,\ldots, h-1$ is
	\[2 \cdot \opt^t=4\cdot \sum_{k=0}^{h-1-t}(4c)^{k} \leq 4 \cdot (4c)^{h-1-t} .\]
	At time $t=h$, we have $\opt^h = 0$.
	
On the other hand, we also observe that if at time $t=1, \ldots, h-1$ the algorithm has $x^t_{L}+x^t_{R}\leq 1/2$, then the algorithm's solution has cost at least $2c\cdot \opt^t$. To see this, note that the two clients at $u^t_L$ and $u^t_R$ must be matched fractionally to extent at least $1/2$ to a center outside the subtree rooted at $r^t$. Hence the cost paid by the algorithm is at least four times the cost of the edge between $r^t$ and its parent $r^{t-1}$, which is at least \[\frac{1}{2} \cdot 4\cdot (4c)^{h-t} >c \cdot 4 (4c)^{h-t-1} \geq 2c \cdot \opt^t.\] 
	
We conclude as before. Any algorithm maintaining an approximation better than $2c$ with respect $\opt^t$ must have at each time $t=1, \ldots, h-1$ at least a $1/2$ fraction of a center open in the subtree of $r^t$. Since the algorithm can open at most one center at any given time, and by construction of the adversarial sequence, at every time $t=0, \ldots, h-1$, the algorithm has more fractional mass in the child of $r^t$ that is not $r^{t+1}$, the algorithm must open at least $1/4$ of a center to maintain a $2c$-approximation. Hence the algorithm must open a total of $\Omega(h)$ centers over the course of the phase, where we observe that size of the metric is $n = 2^h$ and the aspect ratio is $\Delta = O((4c)^h)$. This sequence may be repeated.

 Finally, suppose again that the algorithm is allowed to maintain a total mass of $b\cdot k=b$ facilities at each time step. Then, at the beginning of each phase, there exists a subtree of height $h'= \Omega(h- \log_2 b)$ with total initial mass of at most $1/4$. The adversary can restrict its sequence to this subtree and the algorithm, again, must open $1/4$ of a facility in each level of the tree, paying a total opening cost of $\Omega(h- \log_2 b)$ during the phase. Overall, the lower bound on the recourse competitiveness is $\Omega(\min\{\log n, \log_c\Delta\} -\log b)$.

\section{Experiments continued}\label{sec:app_exp}
Each column below corresponds to one of these datasets ((a) Glass, (b) Wine, (c) Airfoil), and the data was streamed online as follows for $T=100$ times steps. At each time $t$, either a new data point was inserted with probability $9/10$, or an existing data point was deleted with probability $1/10$. We set $k=4$ (for $k$-center and $k$-median), $\beta = 3/2$ and $\eps = 1/4$.

 In the objective plots, we plot the objective of our online algorithm over time (red), alongside the value of the fractional optimum $\opt^t$ at each point in time (green), as well as $\alpha \cdot \beta \cdot \opt^t$ (blue), which our theorems guarantee is an upper bound on the algorithms objective. In the recourse plots, we plot the recourse of our online algorithm over time (red), alongside the optimum fractional recourse without resource augmentation (green), i.e. exactly $k$ centers, exactly $\beta$ approximation. Finally, in the last set of plots, we track the number of centers opened by our $k$-center and $k$-median algorithms over time.

\begin{figure*}
	\centering
    \subfloat[]{\includegraphics[width=0.32\textwidth]{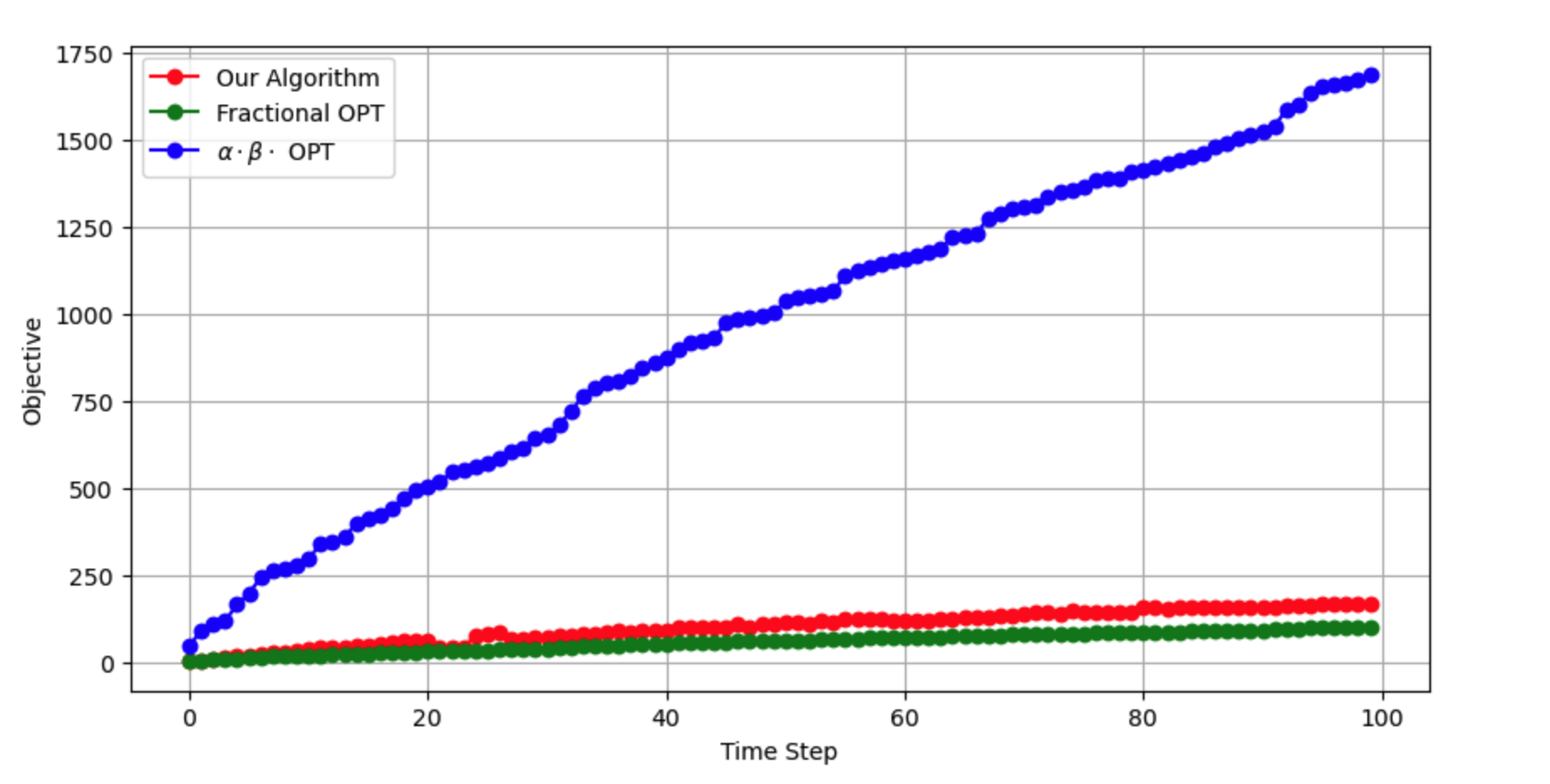}} 
	\subfloat[]{\includegraphics[width=0.32\textwidth]{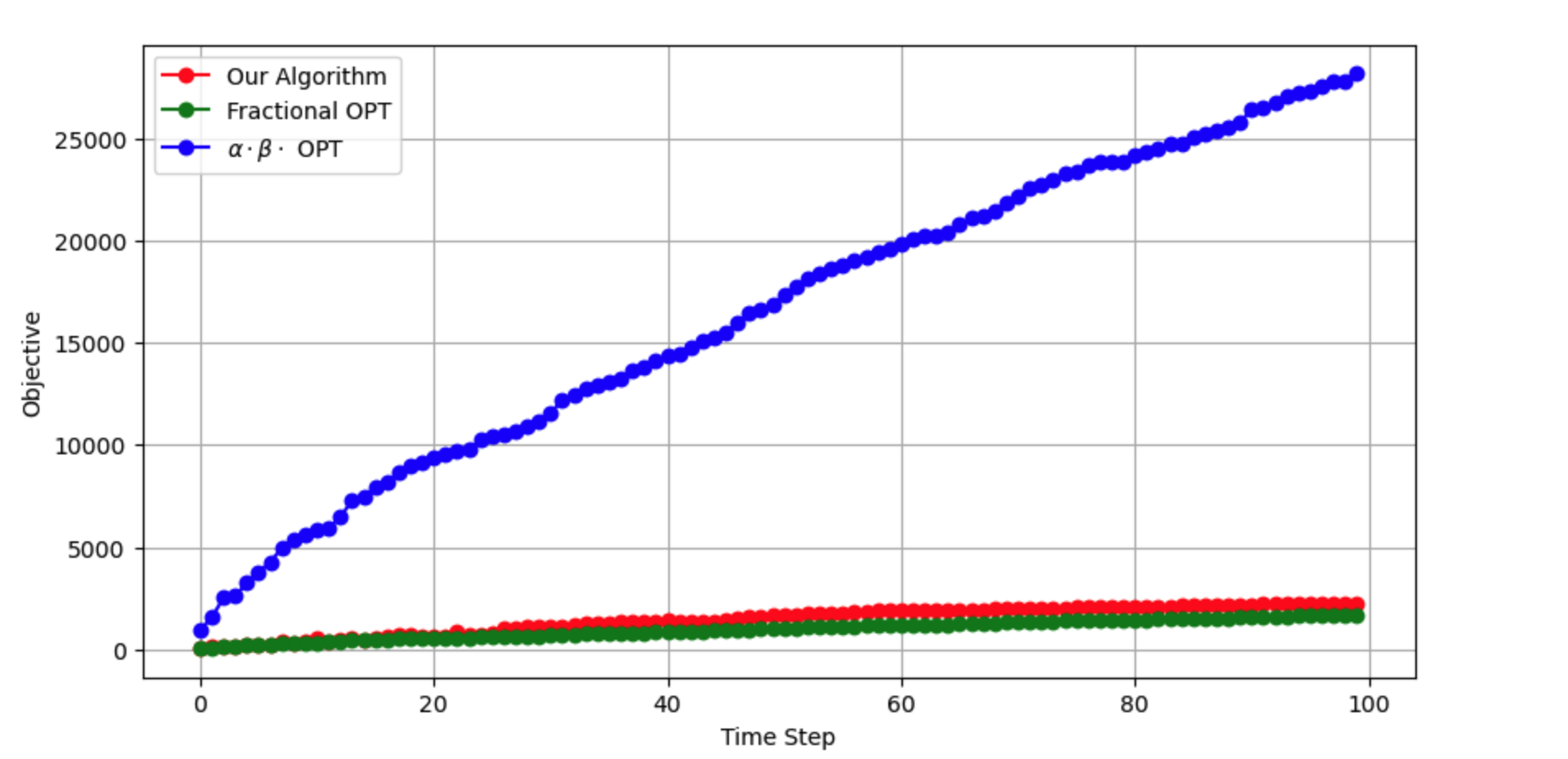}} 
	\subfloat[]{\includegraphics[width=0.32\textwidth]{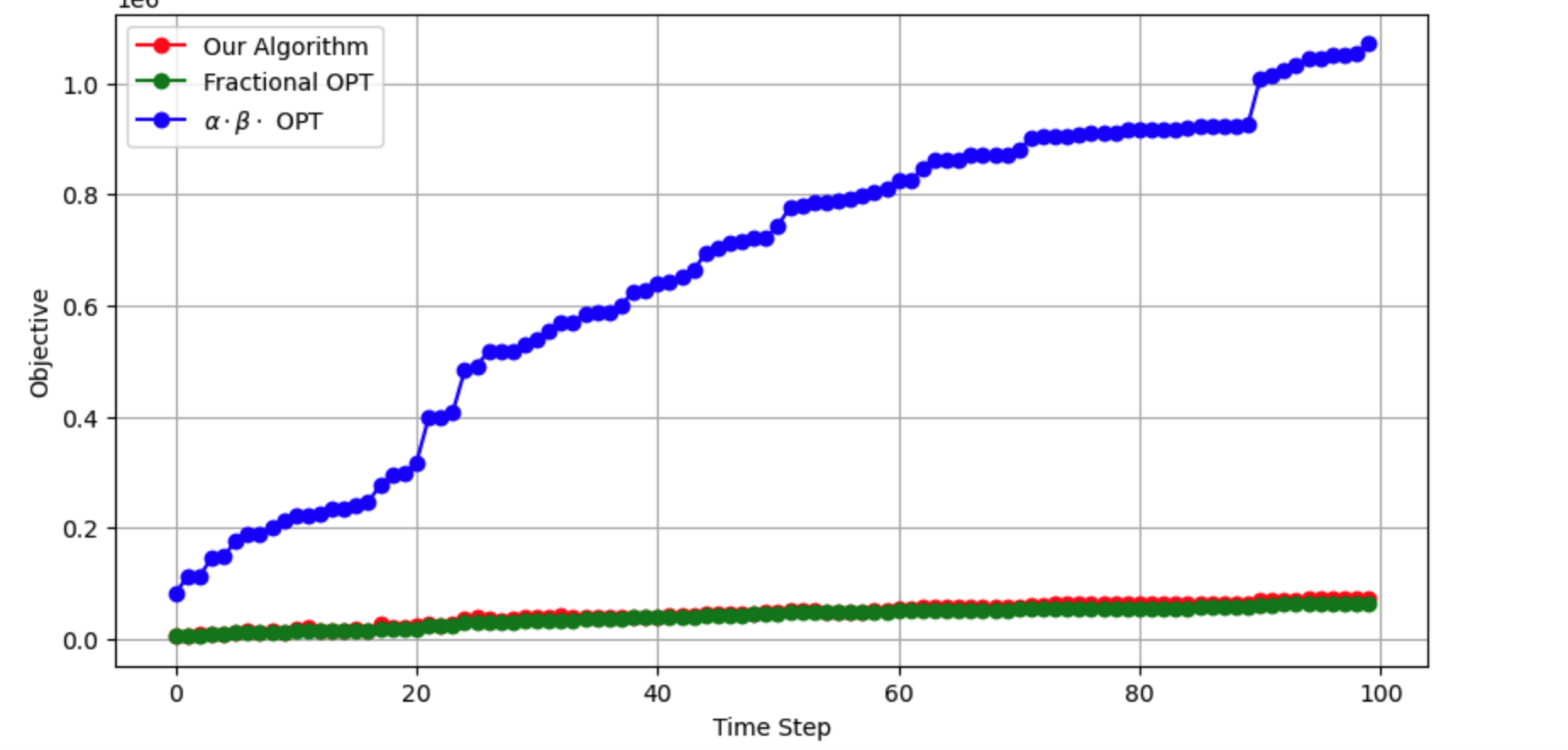}} 
	\caption{Facility Location Objective}
	\subfloat[]{\includegraphics[width=0.32\textwidth]{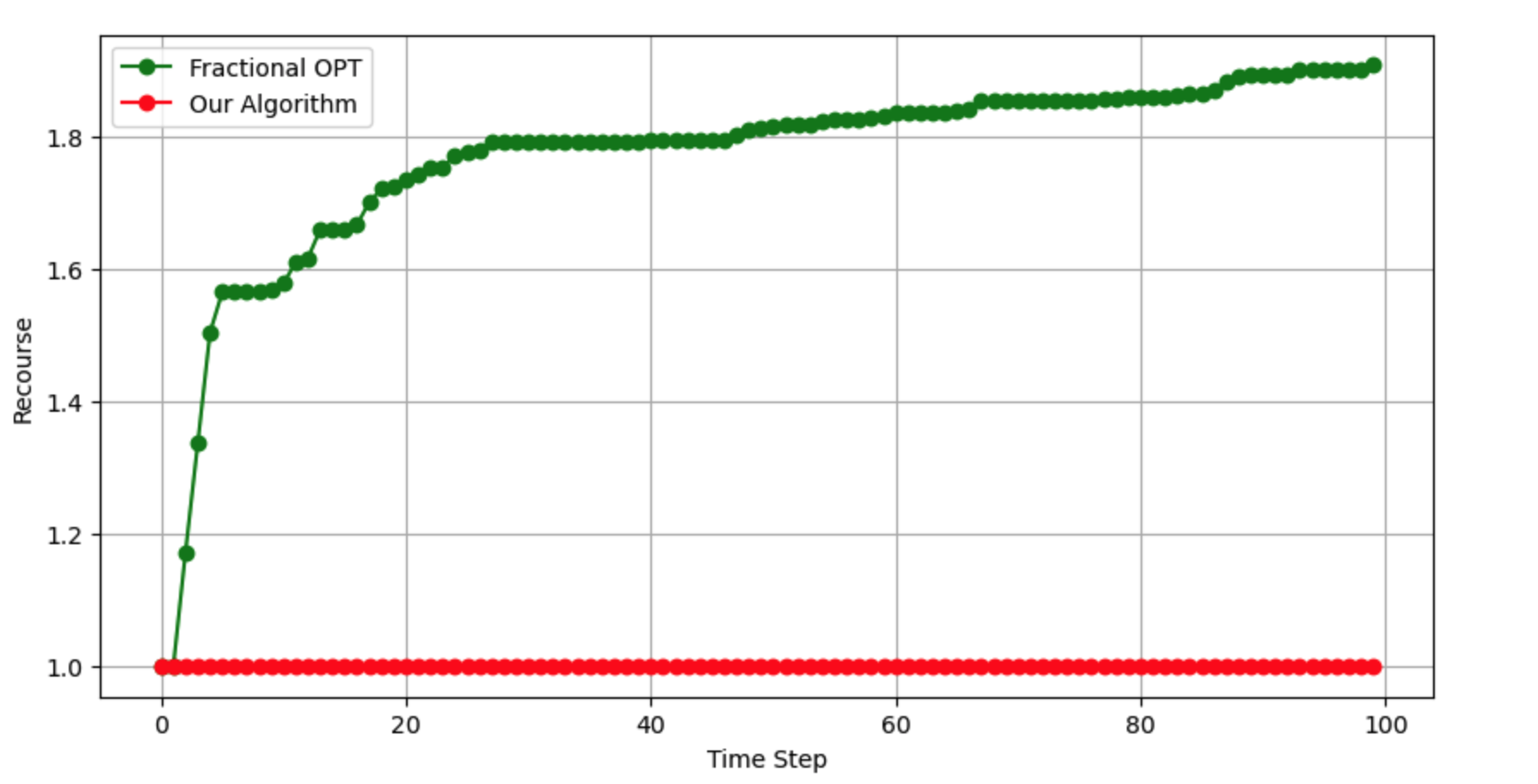}} 
	\subfloat[]{\includegraphics[width=0.32\textwidth]{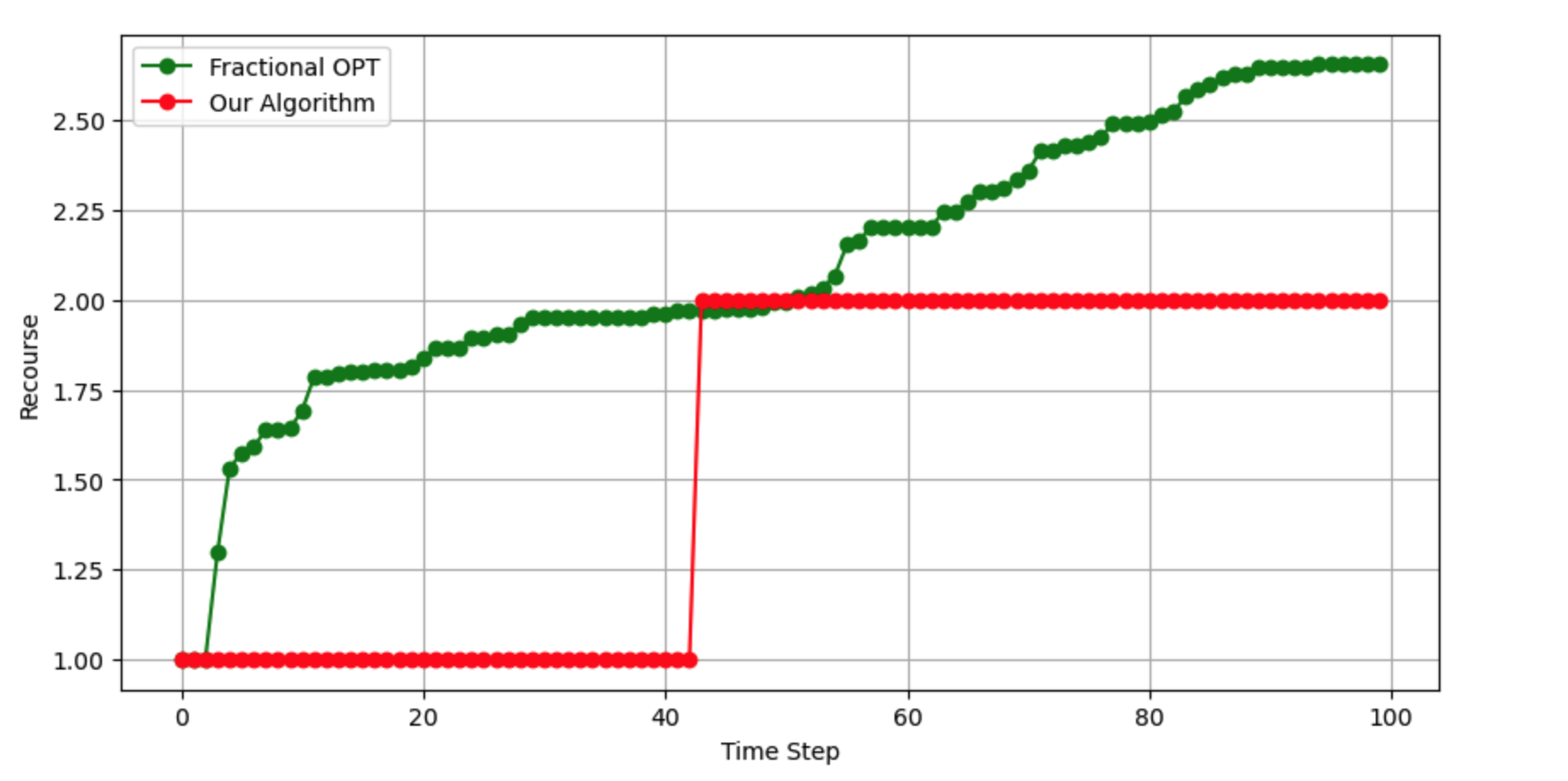}} 
	\subfloat[]{\includegraphics[width=0.32\textwidth]{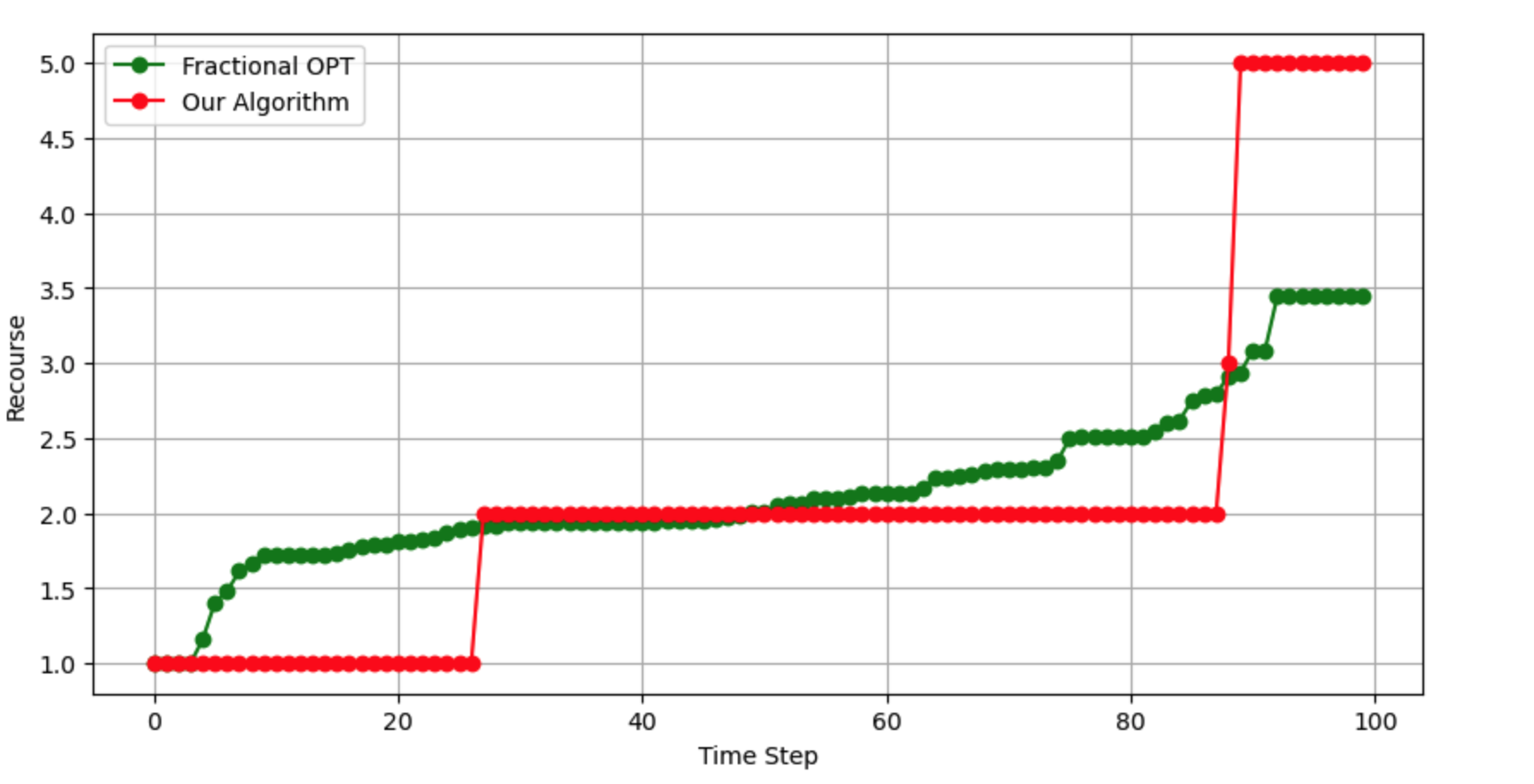}}
	\caption{Facility Location Recourse}
	\subfloat[]{\includegraphics[width=0.32\textwidth]{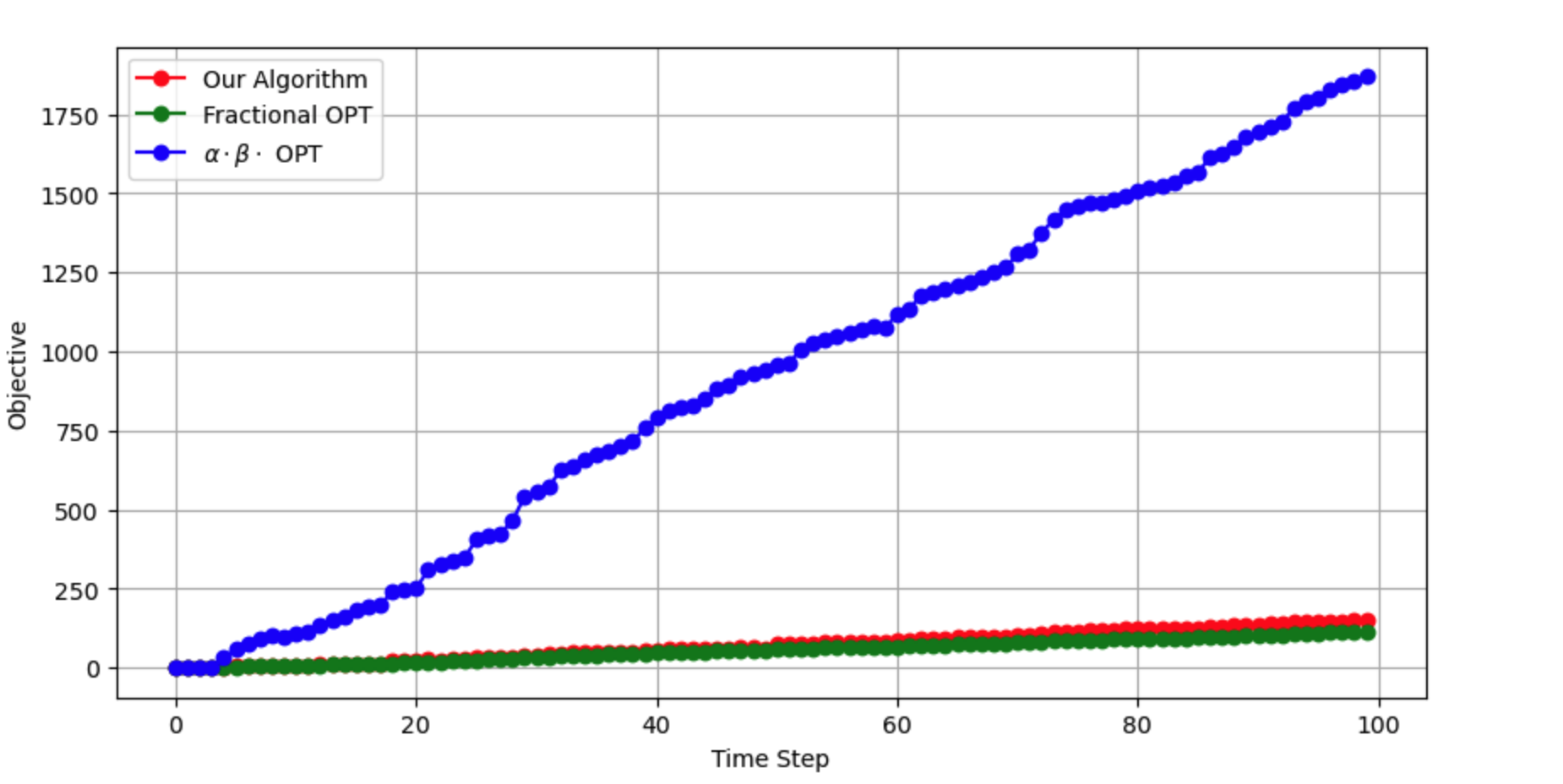}} 
	\subfloat[]{\includegraphics[width=0.32\textwidth]{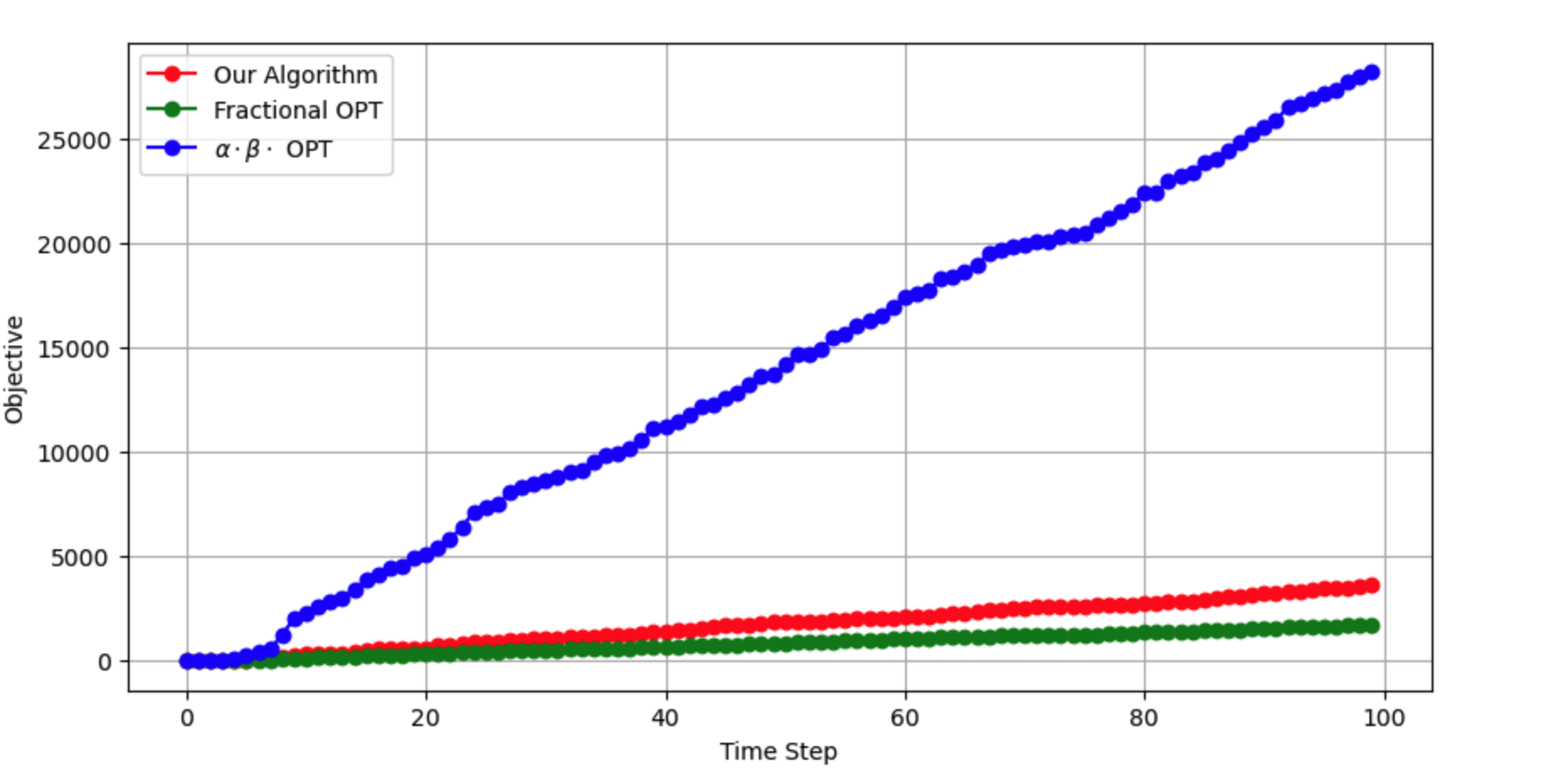}} 
	\subfloat[]{\includegraphics[width=0.32\textwidth]{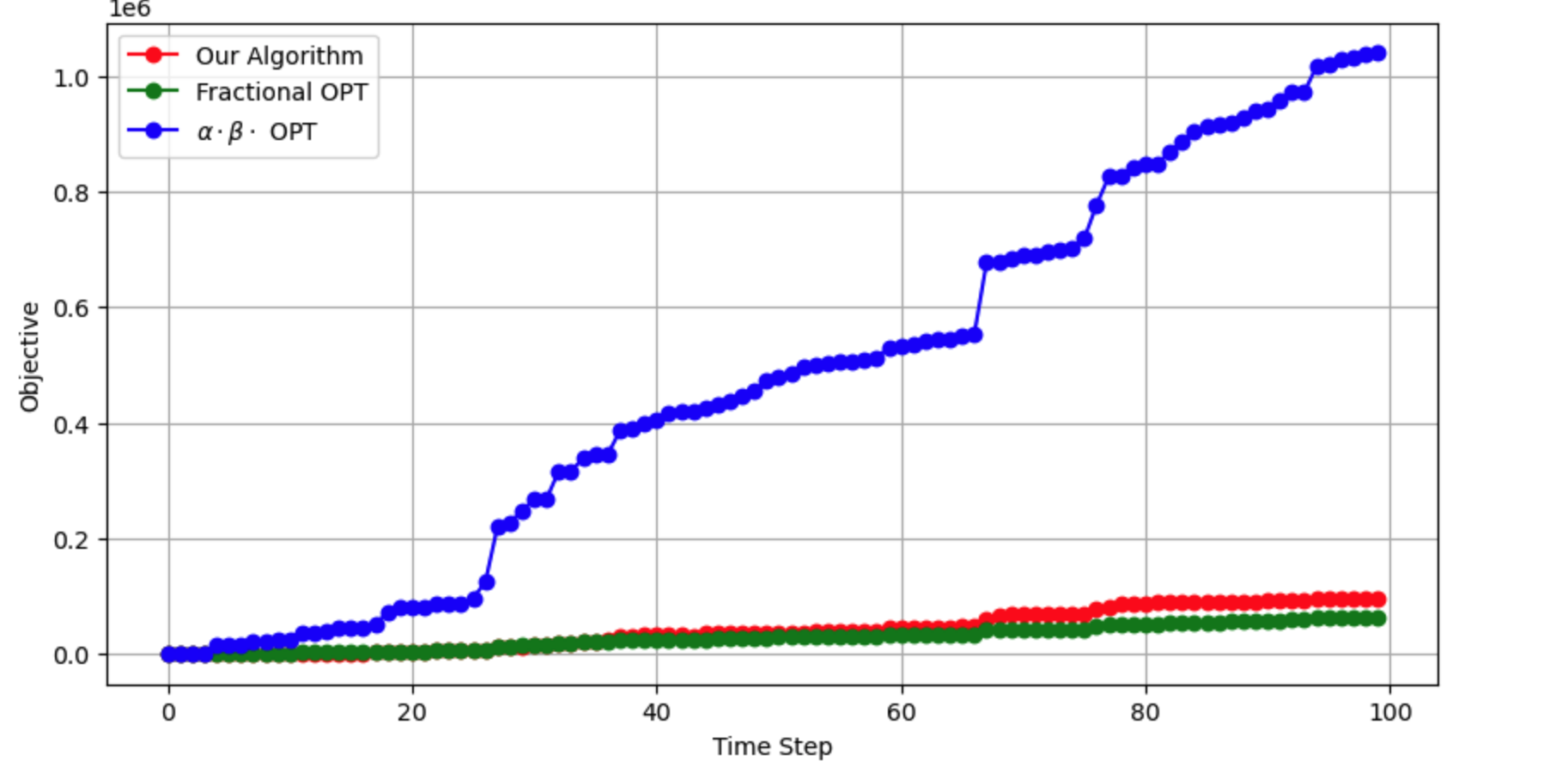}} 
	\caption{$k$-Median Objective}
            \end{figure*}
        \begin{figure*}
	\centering
	\subfloat[]{\includegraphics[width=0.32\textwidth]{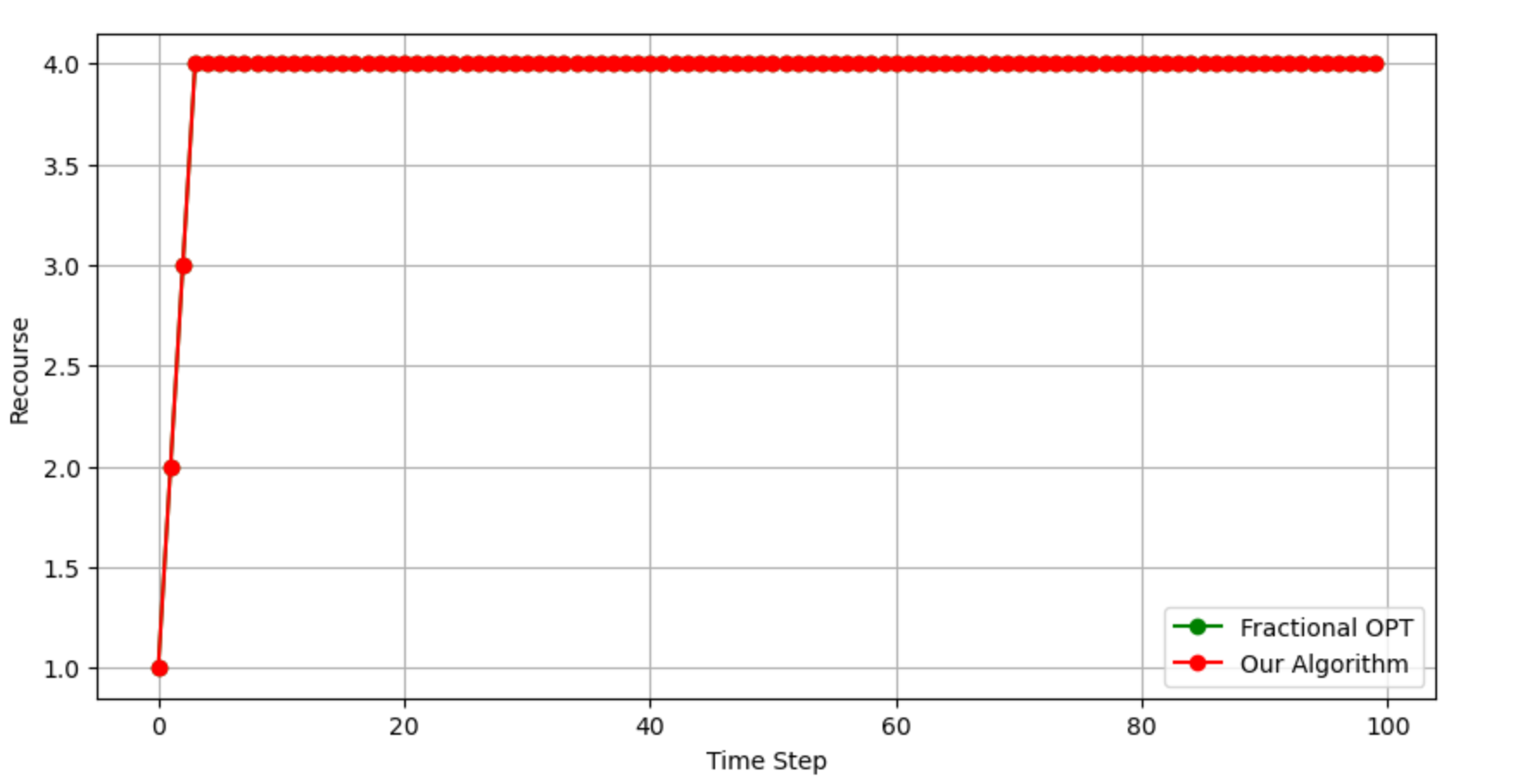}} 
	\subfloat[]{\includegraphics[width=0.32\textwidth]{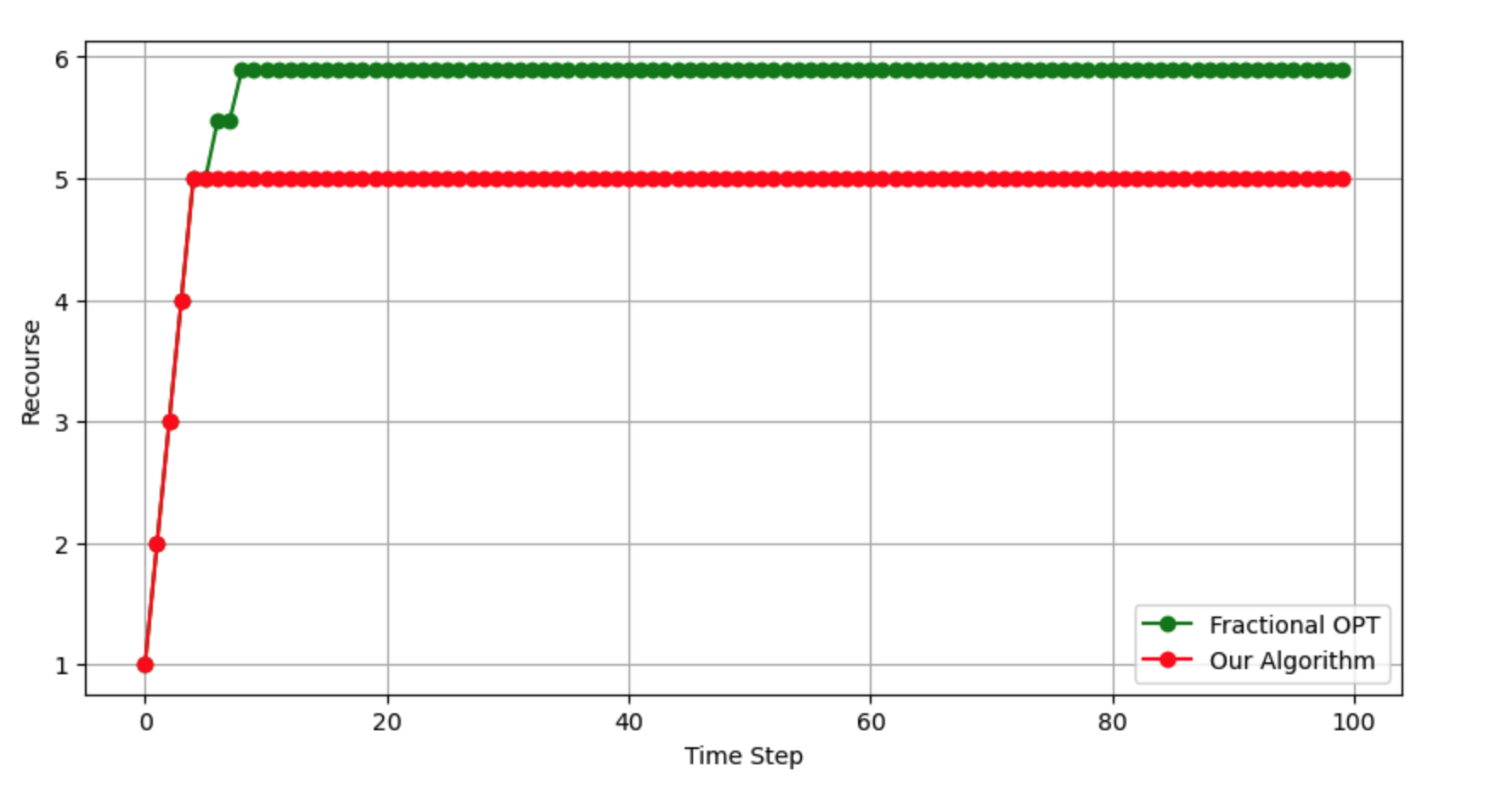}} 
	\subfloat[]{\includegraphics[width=0.32\textwidth]{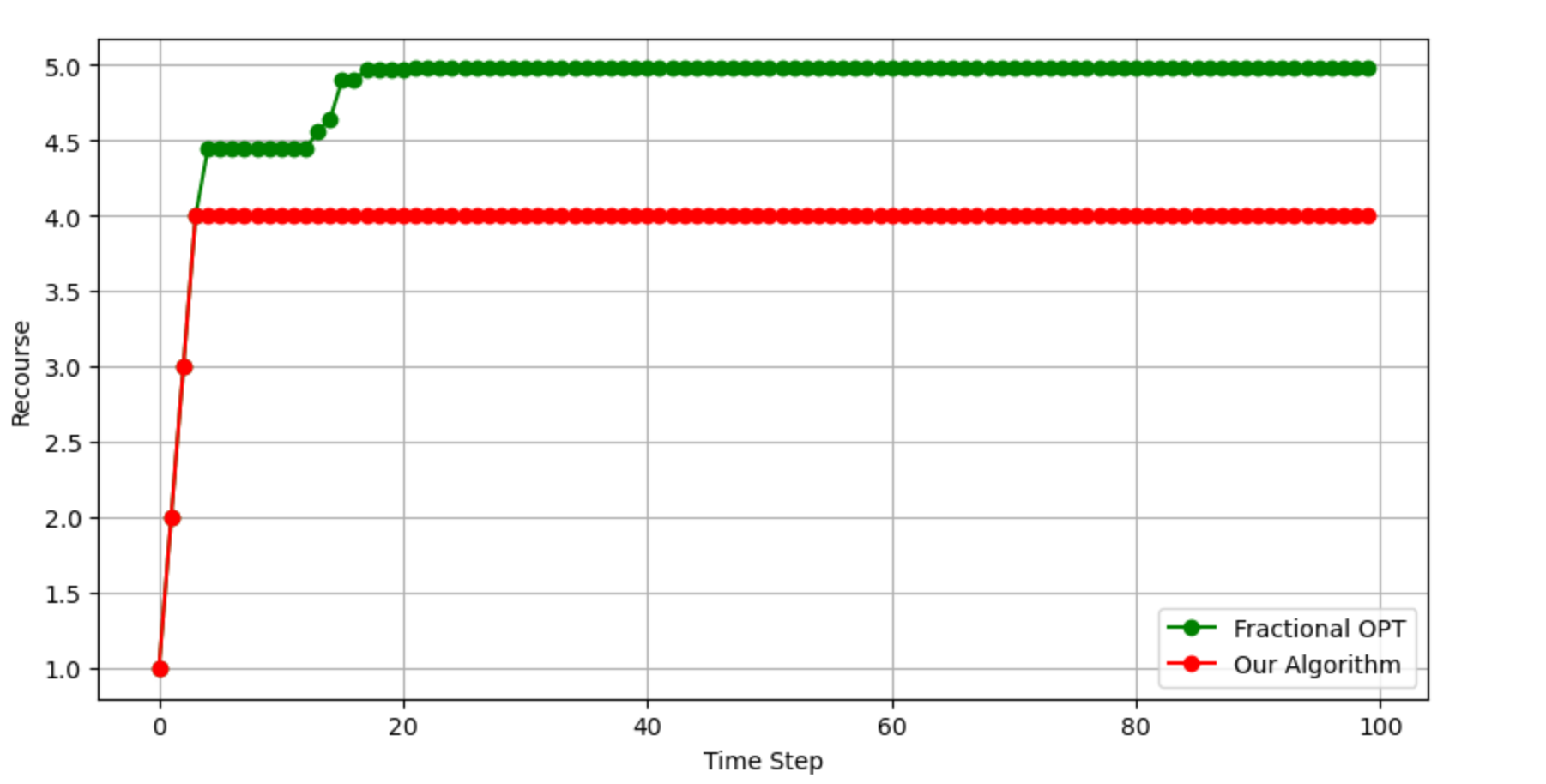}}
	\caption{$k$-Median Recourse}
	\subfloat[]{\includegraphics[width=0.32\textwidth]{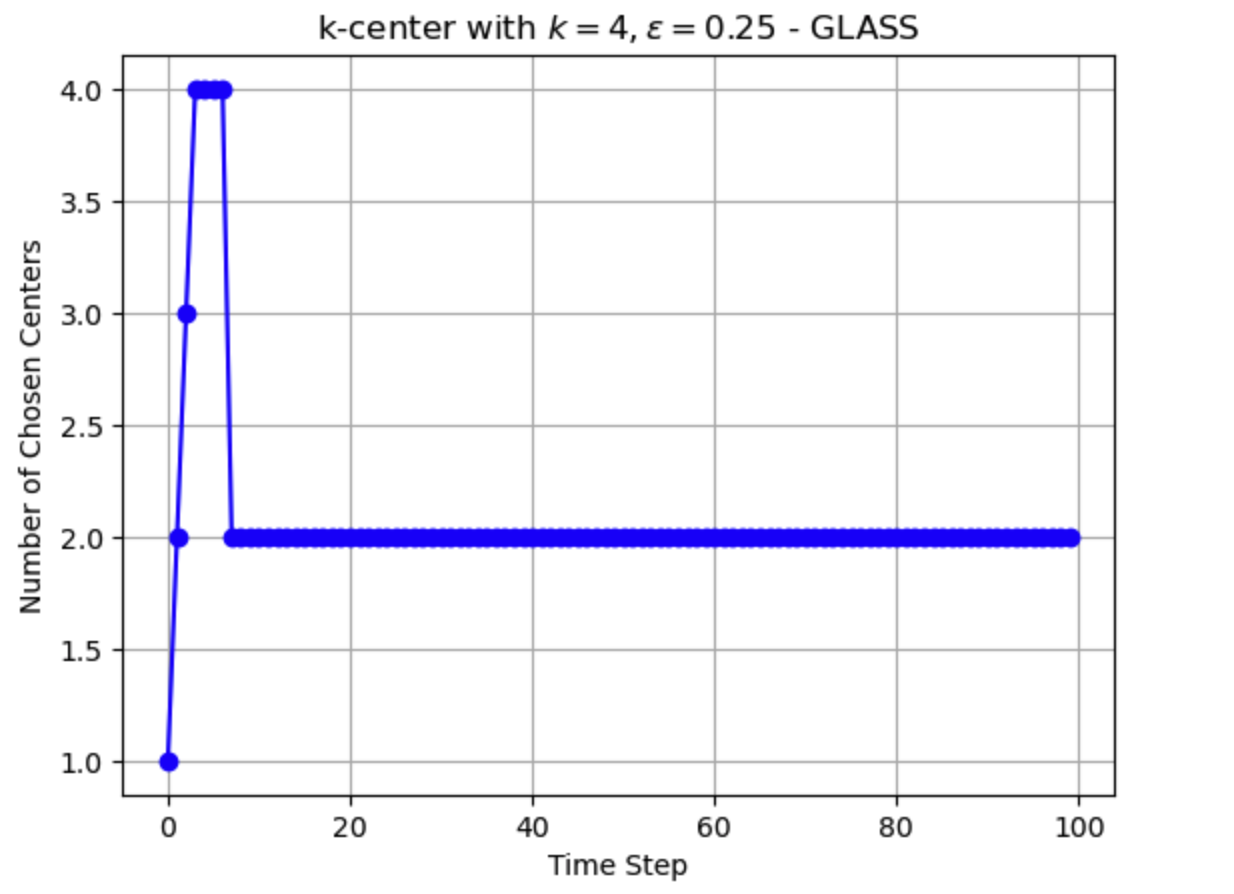}} 
	\subfloat[]{\includegraphics[width=0.32\textwidth]{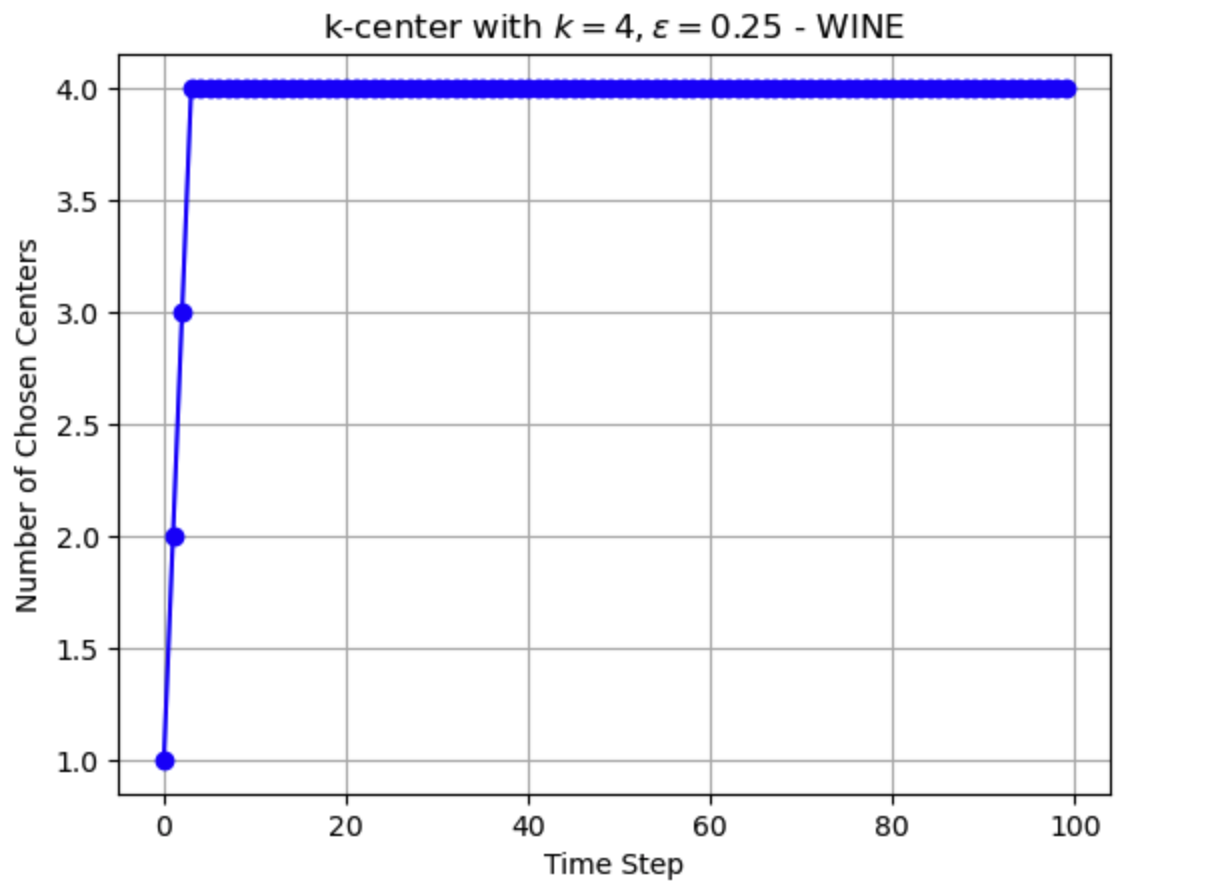}} 
	\subfloat[]{\includegraphics[width=0.32\textwidth]{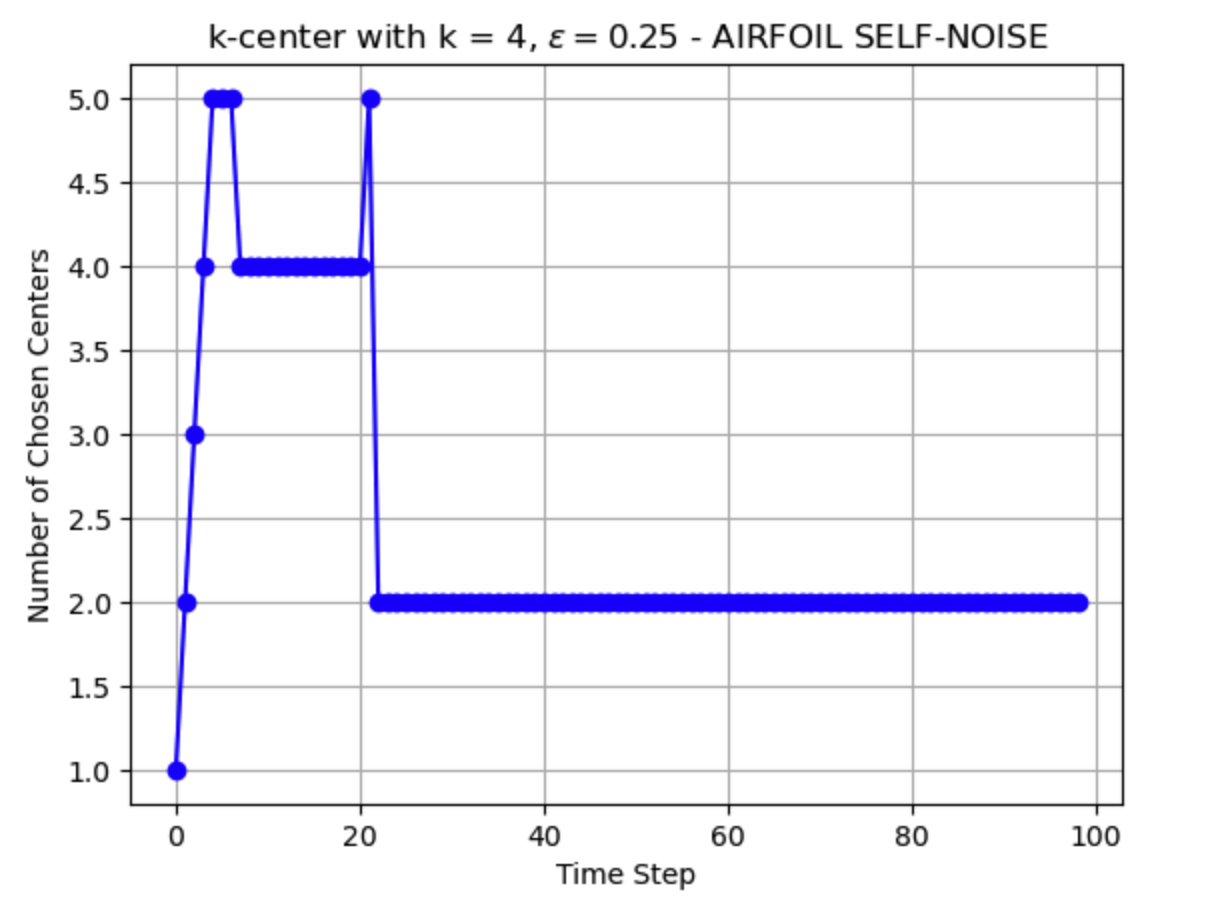}}  
    	\caption{$k$-Center Number of Centers Over Time}
	\subfloat[]{\includegraphics[width=0.32\textwidth]{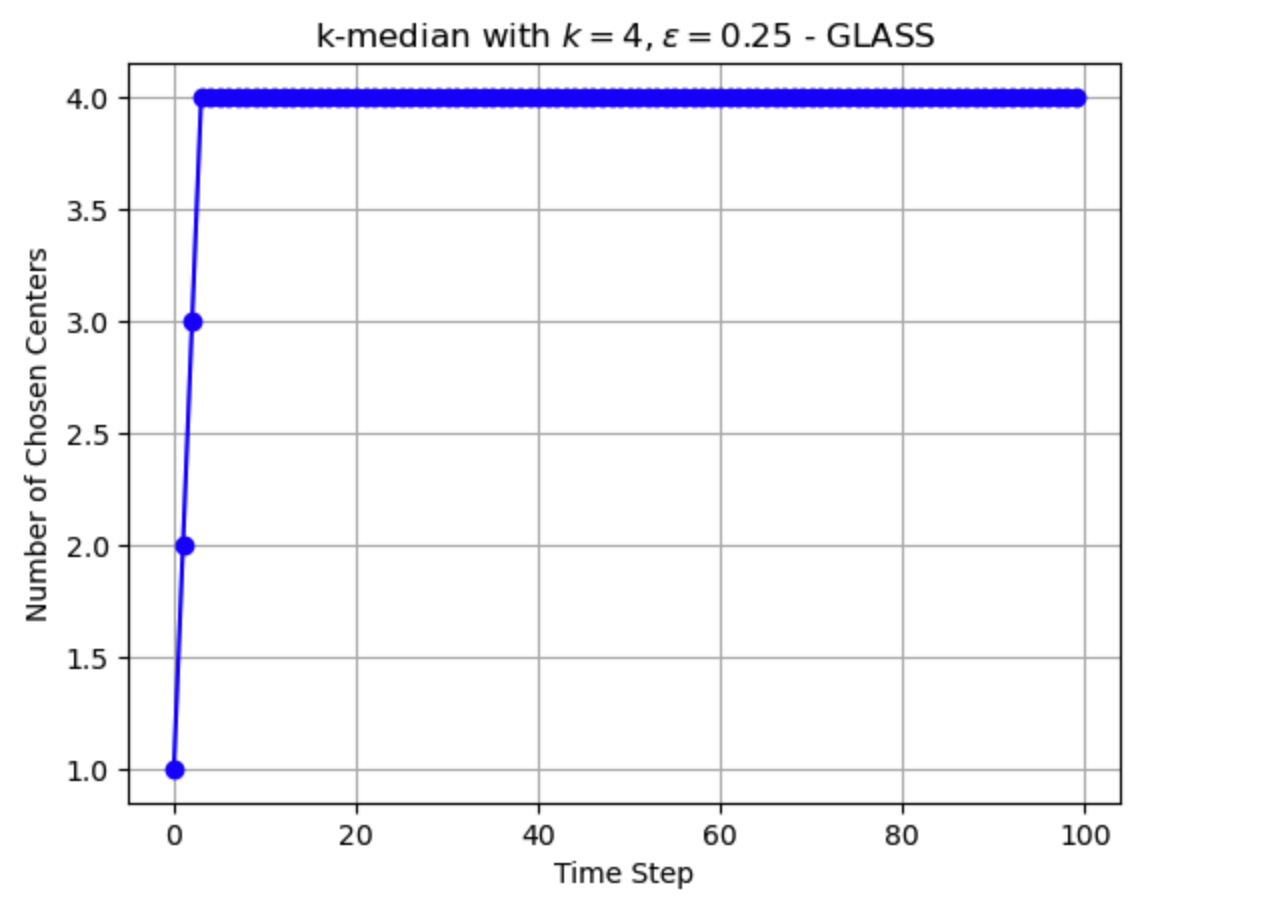}} 
	\subfloat[]{\includegraphics[width=0.32\textwidth]{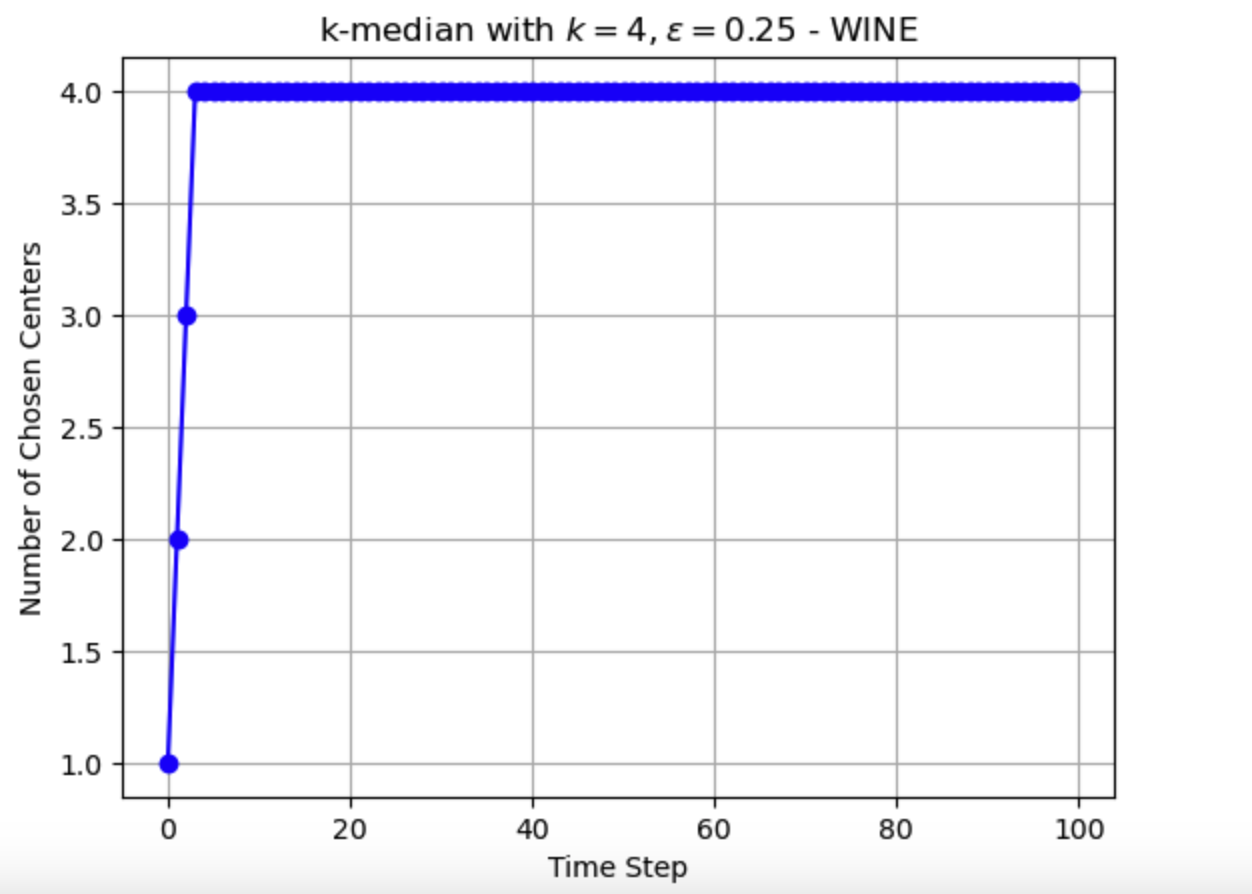}} 
	\subfloat[]{\includegraphics[width=0.32\textwidth]{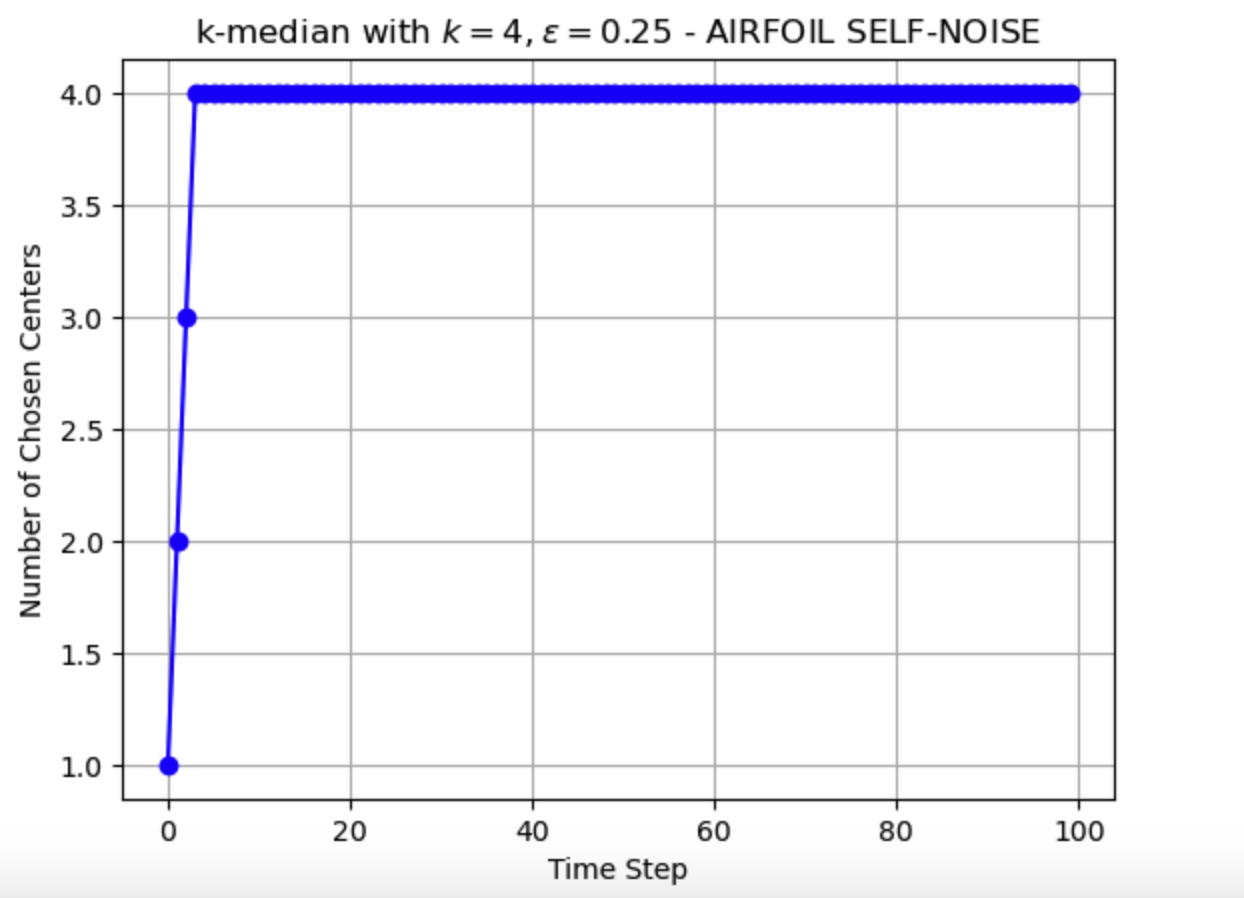}}  
    	\caption{$k$-Median Number of Centers Over Time}
            \end{figure*}

\end{document}